\newtheorem{theorem}{Theorem}[section]
\newtheorem{proposition}[theorem]{Proposition}
\newtheorem{definition}[theorem]{Definition}
\newtheorem{corollary}[theorem]{Corollary}
\newcommand{\sq}{\hbox{\rlap{$\sqcap$}$\sqcup$}}
\newcommand{\qed}{\hspace*{\fill}\sq}
\newenvironment{proof}{\noindent {\bf Proof.}\ }{\qed\par\vskip 4mm\par}
\newcommand{\fixme}[1]{}
\newcommand{\reals}{{{\mathbb{R}}}}
\newcommand{\timemoments}{{{\mathbb{T}}}}
\newcommand{\orgs}{{{\mathcal{O}}}}
\newcommand{\schedules}{{{\Gamma}}}
\newcommand{\powjobs}{{{\mathfrak{J}}}}
\newcommand{\orders}{{{\mathcal{L}}}}
\newcommand{\np}{{\mathrm{NP}}}
\newcommand{\p}{{\mathrm{P}}}
\newcommand{\expected}{{{\mathbb{E}}}}
\newcommand{\probability}{{{\mathrm{P}}}}
\begin{document}

\title{Non-monetary fair scheduling---a cooperative game theory approach}

\author{Piotr Skowron \\
   University of Warsaw \\
   Email: p.skowron@mimuw.edu.pl \\
   \and
   Krzysztof Rzadca \\
   University of Warsaw \\
   Email: krzadca@mimuw.edu.pl
   }

\date{}

\maketitle

%=========================================================================
%  Abstract
%=========================================================================

\begin{abstract}
We consider a multi-organizational system in which each organization contributes processors to the global pool but also jobs to be processed on the common resources. The fairness of the scheduling algorithm is essential for the stability and even for the existence of such systems (as organizations may refuse to join an unfair system).

We consider on-line, non-clairvoyant scheduling of sequential jobs.
The started jobs cannot be stopped, canceled, preempted, or moved to other processors.
We consider identical processors, but most of our results can be extended to related or unrelated processors.

We model the fair scheduling problem as a cooperative game and we use the Shapley value to determine the ideal fair schedule. In contrast to the previous works, we do not use money to assess the relative utilities of jobs. Instead, to calculate the contribution of an organization, we determine how the presence of this organization influences the performance of other organizations. Our approach can be used with arbitrary utility function (e.g., flow time, tardiness, resource utilization), but we argue that the utility function should be strategy resilient. The organizations should be discouraged from splitting, merging or delaying their jobs. We present the unique (to within a multiplicative and additive constants) strategy resilient utility function.

We show that the problem of fair scheduling is NP-hard and hard to approximate. However, we show that the problem parameterized with the number of organizations is fixed parameter tractable (FPT). Also, for unit-size jobs, we present a fully polynomial-time randomized approximation scheme (FPRAS). Although for the large number of the organizations the problem is computationally hard, the presented exponential algorithm can be used as a fairness benchmark.

All our algorithms are greedy, i.e., they don't leave free processors if there are waiting jobs.
We show that any greedy algorithm results in at most 25\% loss of the resource utilization in comparison with the globally optimal algorithm. As a corollary we conclude that the resource underutilization, being the result of the fairness requirement, is (tightly) upper bounded by $0.25$.

We propose a heuristic scheduling algorithm for the fair scheduling problem. We experimentally evaluate the heuristic and compare its fairness to fair share, round robin and the exact exponential algorithm. Our results show that fairness of the heuristic algorithm is close to the optimal. The difference between our heuristic and the fair share algorithm is more visible on longer traces with more organizations. These results show that assigning static target shares (as in the fair share algorithm) is not fair in multi-organizational systems and that instead dynamic measures of organizations' contributions should be used.
\end{abstract}

\bigskip

\centerline{{\bf Keywords}: fair scheduling, cooperative game theory, Shapley value, multi-organization,}
\centerline{cooperation, strategy resistance, approximation algorithms, inapproximability, fairness.}

%=========================================================================
%  Introduction
%=========================================================================

\section{Introduction}\label{sec::introduction}
In multi-organizational systems, participating organizations give access to their local resources; in return their loads can be processed on other resources. The examples of such systems include PlanetLab\footnote{www.planet-lab.org/}, grids (Grid5000\footnote{www.grid5000.fr}, EGEE\footnote{egee-technical.web.cern.ch}) or organizationally distributed storage systems~\cite{Hasan:2005:SPS:1058431.1059197}.
There are many incentives for federating into consortia: the possibility of decreasing the costs of management and maintenance (one large system can be managed more efficiently than several smaller ones), but also the willingness to utilize resources more efficiently. Peak loads can be offloaded to remote resources. Moreover, organizations can access specialized resources or the whole platform (which permits e.g. testing on a large scale). 

In the multi-organizational and multi-user systems fairness of the resource allocation mechanisms is equally important as its efficiency. Efficiency of BitTorrent depends on users' collaboration, which in turn requires the available download bandwidth to be distributed fairly~\cite{Rahman:2011:DSA:2018436.2018458}. Fairness has been also discussed in storage systems~\cite{proportionalScheduling2007Fast, adaptiveOverload2003USENIX, fairScheduling2003IEEE, Gulati2008, parda2009Fast, 2levelIOScheduling, multiDimensionalScheduling} and computer networks~\cite{Wierman}. In scheduling, for instance, a significant part of the description of Maui~\cite{Jackson:2001:CAM:646382.689682}, perhaps the most common cluster scheduler, focuses on the fair-share mechanism.
%% In all these systems fairness is essential
%% for the system to be joined by the organizations, and so fairness is a basic requirement for the existence of these systems.
Nevertheless there is no universal agreement on the meaning of fairness; next, we review approaches most commonly used in literature: distributive fairness and game theory.

% TODO: Bi-criteria algorithm for scheduling jobs on cluster platforms.

In distributive fairness organizations are ensured a fraction of the resources according to  \emph{predefined (given) shares}. The share of an organization may depend on the perceived importance of the workload, payments~\cite{proportionalScheduling2007Fast, parda2009Fast, fairScheduling2003IEEE, Gulati2008}; or calculated to satisfy (predefined) service level agreements~\cite{2levelIOScheduling, multiDimensionalScheduling, Triage2005ACM}. The literature on distributive fairness describes algorithms distributing resources according to the given shares, but does not describe how the shares should be set. In scheduling, distributive fairness is implemented through fair queuing mechanism: YFQ \cite{yfqScheduling}, SFQ and FSFQ \cite{simpleSfqScheduling, fsfqScheduling}, or their modifications~\cite{proportionalScheduling2007Fast, adaptiveOverload2003USENIX, fairScheduling2003IEEE, generalAboutControlTheory, Gulati2008, parda2009Fast, 2levelIOScheduling, multiDimensionalScheduling}. 

A different approach is to optimize directly the performance (the utility) of users, rather than just the allocated resources.  \cite{journals/eor/KostrevaOW04} proposes an axiomatic characterization of fairness based on multi-objective optimization; \cite{Rzadca2007FairGameTheoreticResource} applies this concept to scheduling in a multi-organizational system. Inoie~et~al.~\cite{Inoie04paretoset} proposes a similar approach for load balancing: a fair solution must be Pareto-optimal and the revenues of the players must be proportional to the revenues in Nash equilibrium.

While distributive fairness might be justified in case of centrally-managed systems (e.g. Amazon EC2 or a single HPC center), in our opinion it is inappropriate for consortia (e.g., PlanetLab or non-commercial scientific systems like Grid5000 or EGEE) in which there is no single ``owner'' and the participating organizations may take actions (e.g. rescheduling jobs on their resources, adding local resources, or isolating into subsystems). In such systems the shares of the participating organizations should depend both on their workload and on the owned resources; intuitively an organization that contributes many ``useful'' machines should be favored; similarly an organization that has only a few jobs.

If agents may form binding agreements, cooperative game theory studies the stability of resulting agreements (coalitions and revenues).
The Shapley value~\cite{shapley_53}, the established solution concept that characterizes what is a \emph{fair} distribution of the total revenue of the coalition between the participating agents, has been used in scheduling theory. However, all the models we are aware of use the concept of money.
The works of Carroll et at.~\cite{divisibleLoadSchedCoalitionalGames}, Mishra et al.~\cite{costSharingInJobSched}, Mashayekhy and Grosu~\cite{Mashayekhy:2011:MMD:2358951.2359134} and Moulin et al.~\cite{Moulin:2007:SFP:1527888.1527890} describe algorithms and the process of forming the coalitions for scheduling. These works assume that each job has a certain \emph{monetary} value for the issuing organization and each organization has its initial monetary budget.

Money may have negative consequences on the stakeholders of resource-sharing consortia.  Using (or even mentioning) money discourages people from cooperating~\cite{citeulike:1087789}. This stays in sharp contrast with the idea behind the academic systems --- sharing the infrastructure is a step towards closer cooperation. Additionally, we believe that using money is inconvenient in non-academic systems as well. In many contexts, it is not clear how to valuate the completion of the job or the usage of a resource (especially when workload changes dynamically).
We think that the accurate valuation is equally important (and perhaps equally difficult) as the initial problem of fair scheduling.
Although auctions~\cite{buyya05ge} or commodity markets~\cite{kenyon04gr} have been proposed to set prices, these approaches implicitly require to set the reference value to determine profitability.
Other works on \emph{monetary} game-theoretical models for scheduling include~\cite{Ghosh20051366, 10.1109/TPDS.2007.34, Penmasta:2006:PUJ:1898699.1898880, Grosu:2003:TMF:824470.825337, Grosu04auction-basedresource}; monetary approach is also used for other resource allocation problems, e.g. network bandwidth allocation~\cite{Yaiche:2000:GTF:355154.355166}. However, none of these works describes how to valuate jobs and resources. 

In a non-monetary approach proposed by Dutot~el~al.~\cite{Trystram2010ApproximationAlgorithms} the jobs are scheduled to minimize the global performance metric (the makespan) with an additional requirement --- the utility of each player cannot be worse than if the player would act alone. Such approach ensures the stability of the system against actions of any single user (it is not profitable for the user to leave the system and to act alone) but not to the formation of sub-coalitions. 

In the selfish job model~\cite{vocking-selfishloadbalancing} the agents are the jobs that selfishly choose processors on which to execute. Similarly to our model the resources are shared and treated as common good; however, no agent contributes resources.

An alternative to scheduling is to allow jobs to share resources concurrently. In congestion games~\cite{routingGames, finiteCongestionGames, linearCongestionGames} the utility of the player using a resource $R$ depends on the number of the players concurrently using $R$; the players are acting selfishly. Congestion games for divisible load scheduling were analyzed by Grosu~et.~al~\cite{Grosu02agame-theoretic}.

In this paper we propose fair scheduling algorithms for systems composed of multiple organizations (in contrast to the case of multiple organizations using a system owned by a single entity). We model the organizations, their machines and their jobs as a cooperative game. In this game we do not use the concept of money. When measuring the contribution of the organization $O$ we analyze how the presence of $O$ in the grand coalition influences the completion times of the jobs of all participating organization. This contribution is expressed in the same units as the utility of the organization. In the design of the fair algorithm we use the concept of Shapley value. In contrast to simple cooperative game, in our case the value of the coalition (the total utility of the organizations in this coalition) depends on the underlying scheduling algorithm. This makes the problem of calculating the contributions of the organizations more involved.
First we develop algorithms for arbitrary utilities (e.g. resource utilization, tardiness, flow time, etc.). Next we argue that designing the scheduling mechanism itself is not enough; we show that the utility function must be chosen to discourage organizations from manipulating their workloads (e.g. merging or splitting the jobs --- similar ideas have been proposed for the money-based models~\cite{{Moulin:2007:SFP:1527888.1527890}}).
We present an exponential scheduling algorithm for the strategy resilient utility function. We show that the fair scheduling problem is NP-hard and difficult to approximate. For a simpler case, when all the jobs are unit-size, we present a fully polynomial-time randomized approximation scheme (FPRAS). According to our experiments this algorithm is close to the optimum when used as a heuristics for workloads with different sizes of the jobs.

\textbf{Our contribution is the following:}
\begin{enumerate}
\item We derive the definition of the fair algorithm from the cooperative game theory axioms (Definitions~\ref{def::fairScheduling}~and~\ref{def::fairAlgorithm}, Algorithm~\ref{alg:fairAlg} and Theorem~\ref{thm::fairnessInGeneral}). The algorithm uses only the notions regarding the performance of the system (no money-based mechanisms).
\item We present the axioms (Section~\ref{sec::strategyProof}) and the definition of the fair utility function (Theorem~\ref{thm:strategyProofUtility}) --- this function is similar to the flow time metric but the differences make it strategy-resilient (Proposition~\ref{thm::metricLikeFlowTime}).
\item We show that the fair scheduling problem is NP-complete (Theorem~\ref{thm::contributionsNp}) and hard to approximate (Theorem~\ref{thm::inapprox}). However, the problem parameterized with the number of organizations is fixed parameter tractable (FPT).
\item We present an FPRAS for a special case with unit-size jobs (Algorithm~\ref{alg:fairAlgUnitSize}, Theorems~\ref{thm::approxRandomized}~and~\ref{thm:fpras}).
\item We show the tight bounds on the resource underutilization due to the fairness of the algorithm. Our result is even more general and applies to all greedy algorithms (i.e. such algorithms that at every time moment in which there is a free processor and a non-empty set of ready, but not scheduled jobs, schedules some job on some free processor).
\item We propose a practical heuristic that schedules jobs according to an estimated Shapley value. The heuristic estimates the contribution of an organization by the number of CPU-timeunits an organization contributes for computing jobs of other organizations; the algorithm schedules the jobs to minimize the maximal difference between the utility and the contribution over all organizations.
\item Finally, we conduct simulation experiments to verify fairness of commonly-used scheduling algorithms (Section~\ref{sec::experiments}). The experiments show that although the fair share algorithm is considerably better than round robin (which does not aim to optimize fairness), our heuristic constantly outperforms fair share, being close to the optimal algorithm and the randomized approximation algorithm. The main conclusion from the experimental part of this paper is that ensuring that each party is given a fair share of resources (the distributive fairness approach) might not be sufficient in systems with dynamic job arrival patterns. An algorithm based on the Shapley value, that explicitly considers the organization's impact on other organizations' utilities, produces more fair schedules.
\end{enumerate}

In this paper we use very mild assumptions about the jobs. We do not require to know their valuations, durations, or their future incoming pattern. Thus, we believe the presented results have practical consequences for real-life job schedulers. Also, our exponential algorithm forms a benchmark for comparing the fairness of other polynomial-time scheduling algorithms. The experimental comparison of some real scheduling algorithms suggests that the polynomial-time heuristic algorithms inspired by the ones presented in this paper often result in a better fairness than the currently most popular fair share algorithm.

\section{Preliminaries}\label{sec:prelim}

\textbf{Organizations, machines, jobs}. We consider a system built by a set of independent \emph{organizations} $\orgs = \{O^{(1)}, O^{(2)}, \dots O^{(k)}\}$.
Each organization $O^{(u)}$ owns a \emph{computational cluster} consisting of $m^{(u)}$ \emph{machines} (\emph{processors}) denoted as $M^{(u)}_{1}, M^{(u)}_{2}, \dots M^{(u)}_{m^{(u)}}$ and 
produces its \emph{jobs}, denoted as $J^{(u)}_{1}, J^{(u)}_{2}, \dots$. Each job $J^{(u)}_{i}$ has \emph{release time} $r^{(u)}_{i} \in \timemoments$, where $\timemoments$ is a discrete set of time moments.
We consider an on-line problem in which each job is unknown until its release time.
We consider a non-clairvoyant model i.e., the job's processing time is unknown until the job completes (hence we do not need to use imprecise~\cite{lee2006user} run-time estimates).
%\fixme{fuzzy} Our approach could also be used for the clairvoyant model; it would allow us to try to be fair in future time moments; with non-clairvoyant model we focus on the current time moment.
For the sake of simplicity of the presentation we assume that machines are identical, i.e. each job $J^{(u)}_{i}$ can be executed at any machine and its processing always takes $p^{(u)}_{i}$ time units; $p^{(u)}_{i}$ is the \emph{processing time}. Most of the results, however, can be extended to the case of related machines, where $p^{(u)}_{i}$ is a function of the schedule -- the only exception make the results in Section~\ref{sec::unitSize}, where we rely on the assumption that each job processed on any machine takes exactly one time unit. The results even generalize to the case of unrelated machines, however if we assume non-clairvoyant model with unrelated machines (i.e., we do not know the processing times of the jobs on any machine) then we cannot optimize the assignment of jobs to machines. 
%; which leads to producing globally inefficient schedules. 

The jobs are sequential (this is a standard assumption in many scheduling models and, particularly, in the selfish job model~\cite{vocking-selfishloadbalancing}; an alternative is to consider the parallel jobs, which we plan to do in the future).
Once a job is started, the scheduler cannot preempt it or migrate it to other machine (this assumption is usual in HPC scheduling because of high migration costs).
Finally, we assume that the jobs of each individual organization should be started in the order in which they are presented. This allows organizations to have an internal prioritization of their jobs.\medskip

\textbf{Cooperation, schedules}. Organizations can cooperate and share their infrastructure; in such case we say that organizations form a \emph{coalition}. Formally, a coalition $\mathcal{C}$ is a subset of the set of all organizations, $\mathcal{C} \subseteq \orgs$. We also consider a specific coalition consisting of all organizations, which we call a \emph{grand coalition} and denote as $\mathcal{C}_{g}$ (formally, $\mathcal{C}_{g} = \orgs$, but in some contexts we use the notation $\mathcal{C}_{g}$ to emphasize that we are referring to the set of the organizations that cooperate).
The coalition must agree on the schedule $\sigma = \bigcup_{(u)} \bigcup_{i}\{(J^{(u)}_{i}, s^{(u)}_{i}, M(J^{(u)}_{i}))\}$ which is a set of triples; a triple $(J^{(u)}_{i}, s^{(u)}_{i}, M(J^{(u)}_{i}))$ denotes a job $J^{(u)}_{i}$ started at time moment $s^{(u)}_{i} \geq r^{(u)}_{i}$ on machine $M(J^{(u)}_{i})$. We assume that a machine executes at most one job at any time moment.
We often identify a job $J^{(u)}_{i}$ with a pair $(s^{(u)}_{i}, p^{(u)}_{i})$; and a schedule with $\bigcup_{(u)} \bigcup_{i}\{(s^{(u)}_{i}, p^{(u)}_{i}))\}$ (we do so for a more compact presentation of our results).
The coalition uses all the machines of its participants and schedules consecutive tasks on available machines.
We consider only greedy schedules: at any time moment if there is a free processor and a non-empty set of ready, but not scheduled jobs, some job must be assigned to the free processor.
Since we do not know neither the characteristics of the future workload nor the duration of the started but not yet completed jobs, any non-greedy policy would result in unnecessary delays in processing jobs. Also, such greedy policies are used in real-world schedulers~\cite{Jackson:2001:CAM:646382.689682}.

Let $\powjobs$ denote the set of all possible sets of the jobs. An \emph{online scheduling algorithm} (in short a scheduling algorithm) $\mathcal{A}: \powjobs \times \timemoments \rightarrow \orgs$ is an online algorithm that continuously builds a schedule: for a given time moment $t \in \timemoments$ such that there is a free machine in $t$ and a set of jobs released before $t$ but not yet scheduled: $\mathcal{J} \in \mathfrak{J}$, $\mathcal{A}(\mathcal{J}, t)$ returns the organization the task of which should be started.
The set of all possible schedules produced by such algorithms is the set of \emph{feasible schedules} and denoted by $\schedules$. We recall that in each feasible schedule the tasks of a single organization are started in a FIFO order. \medskip

\textbf{Objectives}. We consider a \emph{utility function} $\psi: \schedules \times \orgs \times \timemoments \rightarrow \reals$ that for a given schedule $\sigma \in \schedules$, an organization $O^{(u)}$, and a time moment $t$ gives the value corresponding to the $O^{(u)}$ organization's satisfaction from a schedule $\sigma$ until $t$.
The examples of such utility functions that are common in scheduling theory are: flow time, resource utilization, turnaround, etc. Our scheduling algorithms will only use the notions of the utilities and do not require any external payments.
%We assume that $\psi$ is anonymous, that is for any time moment $t$ and any two organizations $O^{(u_1)}$ and $O^{(u_2)}$ and any two schedules $\sigma_1$ and $\sigma_2$, such that the starting and completion times of the tasks started before $t$ of $O^{(u_1)}$ in $\sigma_1$ are the same as of the tasks of $O^{(u_2)}$ in $\sigma_2$, then:
%$\psi(\sigma_1, O^{(u_1)}, t) = \psi(\sigma_2, O^{(u_2)}, t)$.
%Without anonymity some organizations would be favored based only on their identity. Anonymity is also a standard assumption in TODO.

Since a schedule $\sigma$ is fully determined by a scheduling algorithm $\mathcal{A}$ and a coalition of organizations $\mathcal{C}$, we often identify $\psi(\mathcal{A}, \mathcal{C}, O^{(u)}, t)$ with appropriate $\psi(\sigma, O^{(u)}, t)$. Also, we use a shorter notation $\psi^{(u)}(\mathcal{C})$ instead of $\psi(\mathcal{A}, \mathcal{C}, O^{(u)}, t)$ whenever the $\mathcal{A}$ and $t$ are known from the context.
We define the \emph{characteristic function} $v: \schedules \times \timemoments \rightarrow \reals$ describing the total utility of the organizations from a schedule:
% \begin{align*}
$v(\mathcal{A}, \mathcal{C}, t) = \sum_{O^{(u)} \in \mathcal{C}} \psi(\mathcal{A}, \mathcal{C}, O^{(u)}, t) \textrm{.}$
% \end{align*}
Analogously as above, we can use an equivalent formulation:\\
%\begin{align*}
$v(\sigma, t) =\sum_{O^{(u)} \in \mathcal{C}} \psi(\sigma, O^{(u)}, t) \textrm{,}$
%\end{align*}
also using a shorter notations $v(\mathcal{C})$ whenever it is possible.
Note that the utilities of the organizations $\psi^{(u)}(\mathcal{C})$ constitute a division of the value of the coalition $v(\mathcal{C})$.

\section{Fair scheduling based on the Shapley value}
In this section our goal is to find a scheduling algorithm $\mathcal{A}$ that in each time moment $t$ ensures a fair distribution of the value of the coalition $v(\mathcal{C})$ between the participating organizations. We will denote this desired fair division of the value $v$ as $\phi^{(1)}(v), \phi^{(2)}(v), \dots, \phi^{(k)}(v)$ meaning that $\phi^{(u)}(v)$ denotes the ideally fair revenue (utility) obtained by organization $O^{(u)}$. 
% For defining what a fair distribution of $v(\mathcal{C})$ actually means, we use solution concepts from cooperative game theory.
We would like the values $\phi^{(u)}(v)$ to satisfy the fairness properties, first proposed by Shapley~\cite{shapley_53} (below we give intuitive motivations; see~\cite{shapley_53} for further arguments).

\begin{enumerate}
\item efficiency -- the total value $v(\mathcal{C})$ is distributed:
\begin{align*}
\sum_{O^{(u)} \in \mathcal{C}}\phi^{(u)}(v(\mathcal{C})) = v(\mathcal{C}) \textrm{.}
\end{align*}
\item symmetry -- the organizations $O^{(u)}$ and $O^{(u')}$ having indistinguishable contributions obtain the same profits:
\begin{align*}
\left(\forall_{\mathcal{C'} \subset \mathcal{C}:  O^{(u)}, O^{(u')} \notin \mathcal{C'}}\;
v(\mathcal{C'} \cup \{O^{(u)}\}) = v(\mathcal{C'} \cup \{O^{(u')}\})\right)
\Rightarrow \phi^{(u)}(v(\mathcal{C})) = \phi^{(u')}(v(\mathcal{C})) \textrm{.}
\end{align*}
\item additivity -- for any two characteristic functions $v$ and $w$ and a function ($v$+$w$): $\forall_{\mathcal{C'} \subseteq \mathcal{C}}$ ($v$+$w$)$(\mathcal{C'}) = v(\mathcal{C'}) + w(\mathcal{C'})$ we have that $\forall_{\mathcal{C'} \subseteq \mathcal{C}}\,\forall_{u}$:
\begin{align*}
\phi^{(u)}((v\textrm{+}w)(\mathcal{C})) = \phi^{(u)}(v(\mathcal{C})) + \phi^{(u)}(w(\mathcal{C})) \textrm{.}
\end{align*}
Consider any two independent schedules $\sigma_1$ and $\sigma_2$  that together form a schedule $\sigma_3 = \sigma_1 \cup \sigma_2$ ($\sigma_1$ and $\sigma_2$ are independent iff removing any subset of the jobs from $\sigma_1$ does not influence the completion time of any job in $\sigma_2$ and vice versa). The profit of an organization that participates only in one schedule (say $\sigma_1$) must be the same in case of $\sigma_1$ and $\sigma_3$ (intuitively: the jobs that do not influence the current schedule, also do not influence the current profits). The profit of every organization that participates in both schedules should in $\sigma_3$ be the sum of the profits in $\sigma_1$ and $\sigma_2$. Intuitively: if the schedules are independent then the profits are independent too.
\item dummy -- an organization that does not increase the value of any coalition $C' \subset C$ gets nothing:
\begin{align*}
\left(\forall_{\mathcal{C'} \subset \mathcal{C}}: v(\mathcal{C'} \cup \{O^{(u)}\}) = v(\mathcal{C'})\right)
\Rightarrow \phi^{(u)}(v(\mathcal{C})) = 0 \textrm{.}
\end{align*}
\end{enumerate}

Since the four properties are actually the axioms of the Shapley value~\cite{shapley_53}, they fully determine the single mapping between the coalition values and the profits of organizations (known as the Shapley value). In game theory the Shapley value is considered the classic mechanism ensuring the fair division of the revenue of the coalition\footnote{The Shapley value has other interesting axiomatic characterizations~\cite{aumann_shapley}.}. The Shapley value can be computed by the following formula~\cite{shapley_53}:
\begin{equation}\label{eq::shapleyValue}
\begin{aligned}
\phi^{(u)}(v(\mathcal{C})) =
\sum_{\mathclap{\mathcal{C'} \subseteq \mathcal{C} \setminus \{O^{(u)}\}}}
\frac{\|\mathcal{C'}\|!(\|\mathcal{C}\| - \|\mathcal{C'}\| - 1)!}{\|\mathcal{C}\|!}\left(v\left(\mathcal{C'} \cup \{O^{(u)}\}\right) - v\left(\mathcal{C'}\right)\right)
\end{aligned}
\end{equation}

Let $\orders_{\mathcal{C}}$ denote all orderings of the organizations from the coalition $\mathcal{C}$.  Each ordering $\prec_{\mathcal{C}}$ can be associated with a permutation of the set $\mathcal{C}$, thus $\|\orders_{\mathcal{C}}\| = \|\mathcal{C}\|!$. For the ordering $\prec_{\mathcal{C}} \in \mathcal{L}_C$ we define $\prec_{\mathcal{C}}(O^{(i)}) = \{O^{(j)} \in \mathcal{C}: O^{(j)} \prec_{\mathcal{C}} O^{(i)}\}$ as the set of all organizations from $\mathcal{C}$ that precede $O^{(i)}$ in the order $\prec_{\mathcal{C}}$. The Shapley value can be alternatively expressed~\cite{RePEc:mtp:titles:0262650401} in the following form: 
\begin{equation}\label{eq::shapleyValueAlt}
\begin{aligned}
\phi^{(u)}(v(\mathcal{C})) =
\frac{1}{\|\mathcal{C}\|!}\sum_{\prec_{\mathcal{C}} \in \orders_{\mathcal{C}}} \left(v\left(\prec_{\mathcal{C}}(O^{(u)}) \cup \{O^{(u)}\}\right) - v\left(\prec_{\mathcal{C}}(O^{(u)}\right)\right) \textrm{.}
\end{aligned}
\end{equation}
\noindent This formulation has an interesting interpretation. Consider the organizations joining the coalition $\mathcal{C}$ in the order $\prec_{\mathcal{C}}$. Each organization $O^{(u)}$, when joining, contributes to the current coalition the value equal to $\left(v(\prec_{\mathcal{C}}(O^{(u)}) \cup \{O^{(u)}\}) - v(\prec_{\mathcal{C}}(O^{(u)})\right)$. Thus, $\phi^{(u)}(v(\mathcal{C}))$ is the expected contribution to the coalition $\mathcal{C}$, when the expectation is taken over the order in which the organizations join  $\mathcal{C}$. Hereinafter we will call the value $\phi^{(u)}(v(\mathcal{C})$ (or using a shorter notation $\phi^{(u)}$) as the \emph{contribution} of the organization $O^{(u)}$.

Let us consider a specific scheduling algorithm $\mathcal{A}$, a specific time moment $t$, and a specific coalition $\mathcal{C}$.
Ideally, the utilities of the organizations should be equal to the reference fair values, 
%\begin{align*}
$\forall_{u}\; \psi^{(u)}(\mathcal{C}) = \phi^{(u)}(v(\mathcal{C}))$,
%\end{align*}
(meaning that the utility of the organization is equal to its contribution), but our scheduling problem is discrete so an algorithm guaranteeing this property may not exist. Thus, we will call as fair an algorithm that results in utilities close to contributions.
The following definition of a fair algorithm is in two ways recursive. A fair algorithm for a coalition $\mathcal{C}$ and time $t$ must be also fair for all subcoalitions $\mathcal{C'} \subset \mathcal{C}$ and for all previous $t' < t$ (an alternative to being fair for all previous $t' < t$ would be to ensure asymptotic fairness; however, our formulation is more responsive and more relevant for the online case. We want to avoid the case in which an organization is disfavored in one, possibly long, time period and then favored in the next one).

\begin{definition}\label{def::fairScheduling}
Set an arbitrary metric $\|\cdot\|_{d}: 2^{k} \times 2^{k} \rightarrow \reals_{\geq 0}$; and set an arbitrary time moment $t \in \timemoments$. $\mathcal{A}$ is a fair algorithm in $t$ for coalition $\mathcal{C}$ in metric $\|\cdot\|_{d}$ if and only if:
\begin{align*}
\mathcal{A} \in \mathrm{argmin}_{\mathcal{A'} \in \mathcal{F}(<t)}\|\vec{\phi}(\mathcal{A'}, \mathcal{C}, t) - \vec{\psi}(v(\mathcal{A'}, \mathcal{C}, t)\|_{d}
\end{align*}
where:
\begin{enumerate}
\item $\mathcal{F}(<t)$ is a set of algorithms fair in each point $t' < t$; $\mathcal{F}(<0)$ is a set of all greedy algorithms,
\item $\vec{\psi}(v(\mathcal{A'}, \mathcal{C})$ is a vector of utilities $\langle \psi^{(u)}(v(\mathcal{A'}, \mathcal{C}))\rangle$,
\item $\vec{\phi}(\mathcal{A'}, \mathcal{C})$ is a vector of contributions $\langle \phi^{(u)}(v(\mathcal{A'}, \mathcal{C})) \rangle$, where $\phi^{(u)}(v(\mathcal{A'}, \mathcal{C}))$ is given by Equation~\ref{eq::shapleyValueAlt},
\item In Equation~\ref{eq::shapleyValueAlt}, for any $\mathcal{C'} \subset \mathcal{C}$, $v(\mathcal{C'})$ denotes $v(\mathcal{A}_{f}, \mathcal{C'})$, where $\mathcal{A}_{f}$ is any fair algorithm for coalition $\mathcal{C'}$.
\end{enumerate}
\end{definition}

\begin{definition}\label{def::fairAlgorithm}
$\mathcal{A}$ is a fair algorithm for coalition $\mathcal{C}$ if and only if it is fair in each time $t \in \timemoments$.
\end{definition}

%In the definitions we require the algorithms to be fair in all time moments $t$. 

Further on, we consider algorithms fair in the Manhattan metric\footnote{Our analysis can be generalized to other distance functions.}: 
%\begin{align*}
$\|\vec{v_1}, \vec{v_2}\|_{M} = \sum_{i=1}^{k}|v_{1}[i] - v_{2}[i]| \textrm{.}$
%\end{align*}

\SetAlCapFnt{\footnotesize}
\begin{figure}[th!]
\begin{algorithm}[H]
   \footnotesize
   \SetKwInput{KwNotation}{Notation}
   \SetKwFunction{Distance}{Distance}
   \SetKwFunction{ReleaseJob}{ReleaseJob}
   \SetKwFunction{FairAlgorithm}{FairAlgorithm}
   \SetKwFunction{UpdateVals}{UpdateVals}
   \SetKwFunction{PropoagateVals}{PropoagateVals}
   \SetKwFunction{FreeMachine}{FreeMachine}
   \SetKwFunction{FreeMachineA}{\textbf{FreeMachine}}
   \SetKwFunction{SelectAndSchedule}{SelectAndSchedule}
   \SetKwBlock{Block}
   \SetAlCapFnt{\footnotesize}
   \KwNotation{\\
	\textbf{jobs}$[\mathcal{C}][O^{(u)}]$ --- list of waiting jobs of organization $O^{(u)}$. \\
	$\pmb{\phi}$$[\mathcal{C}][O^{(u)}]$ --- the contribution of $O^{(u)}$ in $\mathcal{C}$, $\phi^{(u)}(\mathcal{C})$.\\
	$\pmb{\psi}$$[\mathcal{C}][O^{(u)}]$ --- utility of $O^{(u)}$ from being in $\mathcal{C}$, $\psi(\mathcal{C}, O^{(u)})$.\\
	\textbf{v}$[\mathcal{C}]$ --- value of a coalition $\mathcal{C}$.\\
	$\pmb{\sigma}$$[\mathcal{C}]$ --- schedule for a coalition $\mathcal{C}$.\\
    \textbf{\FreeMachineA}($\sigma, t$) --- returns true if and only if there is a free machine in $\sigma$ in time $t$. 
     }
	
        \vspace{0.3cm}\ReleaseJob{$O^{(u)}$, $J$}:
	\Block{
		\For{$\mathcal{C}: O^{(u)} \in \mathcal{C}$}
		{
			$\mathrm{jobs}[\mathcal{C}][O^{(u)}].\mathrm{push}(J)$
		}
	}

	\vspace{0.3cm}\Distance{$\mathcal{C}$, $O^{(u)}$, $t$}:
	\Block{
		$old \leftarrow \sigma[\mathcal{C}]$\;
		$new \leftarrow \sigma[\mathcal{C}] \cup \{(\mathrm{jobs}[\mathcal{C}][O^{(u)}].\mathrm{first}, t)\}$\;
		$\Delta \psi \leftarrow \psi(new, O^{(u)}, t) - \psi(old, O^{(u)}, t)$\;
		\textbf{return} $\left|\phi[\mathcal{C}][O^{(u)}] + \frac{\Delta\psi}{\|\mathcal{C}\|} - \psi[\mathcal{C}][O^{(u)}] - \Delta\psi\right|$ \\
		$+\sum_{O^{(u')}}\left|\phi[\mathcal{C}][O^{(u')}] + \frac{\Delta\psi}{\|\mathcal{C}\|} - \psi[\mathcal{C}][O^{(u')}]\right|$\;
	}

	\vspace{0.3cm}\SelectAndSchedule{$\mathcal{C}$, $t$}:
	\Block{
		$u \leftarrow \mathrm{argmin}_{O^{(u)}}(\Distance(\mathcal{C}, O^{(u)}, t))$ \;
		$\sigma[\mathcal{C}] \leftarrow \sigma[\mathcal{C}] \cup \{(\mathrm{jobs}[\mathcal{C}][u].\mathrm{first}, t)\}$\;
		$\psi[\mathcal{C}][O^{(u)}] \leftarrow \psi(\sigma[\mathcal{C}], O^{(u)}, t)$\;
	}

	\vspace{0.3cm}\UpdateVals{$\mathcal{C}$, $t$}:
	\Block{
		\ForEach{$O^{(u)} \in \mathcal{C}$}
		{
			$\psi[\mathcal{C}][O^{(u)}] \leftarrow \psi(\sigma[\mathcal{C}], O^{(u)}, t)$\;
			$\phi[\mathcal{C}][O^{(u)}] \leftarrow 0$\;
		}
		$\mathrm{v}[\mathcal{C}] \leftarrow \sum_{ O^{(u)}}\psi(\sigma[\mathcal{C}], O^{(u)}, t) $\;

		\ForEach{$\mathcal{C}_{sub}$: $\mathcal{C}_{sub} \subseteq \mathcal{C}$}
		{ \nllabel{algline::line1}
			\ForEach{$O^{(u)} \in \mathcal{C}_{sub}$}
			{
				 $\phi[\mathcal{C}][O^{(u)}] \leftarrow \phi[\mathcal{C}][O^{(u)}] +$ \\
					\hspace{5mm} $(\mathrm{v}[\mathcal{C}_{sub}] - \mathrm{v}[\mathcal{C}_{sub} \setminus \{O^{(u)}\}])$ \\
					\hspace{5mm} $\cdot \frac{(\|\mathcal{C}_{sub}\| - 1)!(\|\mathcal{C}\| - \|\mathcal{C}_{sub}\|)!}{\|\mathcal{C}\|!}$ \nllabel{algline::line2} \;
			}
		}
	}

	\vspace{0.3cm}\FairAlgorithm{$\mathcal{C}$}:
	\Block{
		\ForEach{time moment $t$}
		{
			\ForEach{job $J_i^{(u)}$: $r_i^{(u)} = t$}
			{
				\ReleaseJob{$O_i^{(u)}, J_i^{(u)}$}\;
			}
			\For{$s \leftarrow 1$ \KwTo $\|C\|$}
			{
				\ForEach{$\mathcal{C'} \subset \mathcal{C}$, such that $\|C'\| = s$}
				{
					\UpdateVals{$\mathcal{C'}, t$}\;
					\While{\FreeMachine{$\sigma[\mathcal{C'}]$, $t$}}
					{
						\SelectAndSchedule{$\mathcal{C'}, t$}\;
					}
					$\mathrm{v}[\mathcal{C}] \leftarrow \sum_{ O^{(u)}}\psi(\sigma[\mathcal{C}], O^{(u)}, t) $\;
				}
			}
		}
	}

\end{algorithm}
\caption{Algorithm~\textsc{Ref}: a fair algorithm for arbitrary utility function $\psi$.}\label{alg:fairAlg}
\end{figure}

Based on Definition~\ref{def::fairAlgorithm} we construct a referral fair algorithm for an arbitrary utility function $\psi$ (Algorithm~\textsc{Ref}; the pseudo-code is presented in Figure~\ref{alg:fairAlg}). Algorithm~\textsc{Ref} keeps a schedule for every subcoalition $\mathcal{C'} \subset \mathcal{C}$. 
For each time moment the algorithm complements the schedule starting from the subcoalitions of the smallest size. The values of all smaller coalitions $v[\mathcal{C}_s]$ are used to update the contributions of the organizations (lines~\ref{algline::line1}-\ref{algline::line2}) in the procedure \texttt{UpdateVals}). 
Before scheduling any job of the coalition $\mathcal{C}'$ the contribution and the utility of each organization in $\mathcal{C}'$ is updated (procedure \texttt{UpdateVals}). 
If there is a free machine and a set of jobs waiting for execution, the algorithm selects the job according to Definition~\ref{def::fairScheduling}, thus it selects the organization that minimizes the distance of the utilities $\vec{\psi}$ to their ideal values $\vec{\phi}$ (procedure \texttt{SelectAndSchedule}). 
Assuming the first job of the organization $O^{(u)}$ is tentatively scheduled, the procedure \texttt{Distance} computes a distance between the new values of $\vec{\psi}$ and $\vec{\phi}$.
 
The procedure \texttt{Distance} works as follows. 
Assuming $O^{(u)}$ is selected the value $\Delta\psi$ denotes the increase of the utility of $O^{(u)}$ thanks to scheduling its first waiting job. This is also the increase of the value of the whole coalition. When procedure \texttt{Distance}($\mathcal{C}, O^{(u)}, t$) is executed, the schedules (and thus, the values) in time $t$ for all subcoalitions $\mathcal{C}' \subset \mathcal{C}$ are known. The schedule, for coalition $\mathcal{C}$ is known only in time $(t-1)$, as we have not yet decided which job should be scheduled in $t$. Thus, scheduling the job will change the schedule (and the value) only for a coalition $\mathcal{C}$. From the definition of the Shapley value it follows that if the value $v(\mathcal{C})$ of the coalition $\mathcal{C}$ increases by $\Delta\psi$ and the value of all subcoalitions remains the same, then the contribution $\phi^{(u')}$ of each organization $O^{(u')} \in \mathcal{C}$ to $\mathcal{C}$ will increase by the same value equal to $\Delta\psi/\|\mathcal{C}\|$. Thus, for each organization $O^{(u')} \in \mathcal{C}$ the new contribution of $O^{(u')}$ is $(\phi[\mathcal{C}][O^{(u')}] + \frac{\Delta\psi}{\|\mathcal{C}\|})$. The new utility for each organization $O^{(u')} \in \mathcal{C}$, such that $O^{(u')} \neq O^{(u)}$ is equal to $\psi[\mathcal{C}][O^{(u')}]$. The new utility of the organization $O^{(u)}$ is equal to $(\psi[\mathcal{C}][O^{(u)}]| + \Delta\psi)$.

% From Definition~\ref{def::fairAlgorithm} we get the following theorem:

\begin{theorem}
Algorithm~\textsc{Ref} from Figure~\ref{alg:fairAlg} is a fair algorithm.
\end{theorem}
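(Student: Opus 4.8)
The plan is to prove the statement by a double induction that mirrors the two-fold recursion of Definition~\ref{def::fairScheduling}: an outer induction on $\|\mathcal{C}\|$ and, for a fixed coalition $\mathcal{C}$, an inner induction on the time moment $t$. The base of the outer induction is a singleton $\mathcal{C}=\{O^{(u)}\}$: here $v(\mathcal{C})=\psi^{(u)}(\mathcal{C})$ and, by Equation~\ref{eq::shapleyValue}, $\phi^{(u)}(v(\mathcal{C}))=v(\mathcal{C})$, so $\|\vec\phi-\vec\psi\|$ is identically $0$ whatever greedy choices are made; hence any greedy run, in particular Algorithm~\ref{alg:fairAlg} restricted to $\{O^{(u)}\}$, is fair. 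The same observation, applied before any job is released, settles the base $\mathcal{F}(<0)$ of the inner induction.

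For the inductive step fix $\mathcal{C}$ and $t$, and assume Algorithm~\ref{alg:fairAlg} is fair for every $\mathcal{C}'\subsetneq\mathcal{C}$ at all times and for $\mathcal{C}$ at every $t'<t$. Since the algorithm completes the schedules of subcoalitions in order of increasing size, by the time it reaches $\mathcal{C}$ at moment $t$ the schedules, and therefore the values $v(\mathcal{C}')$, of all $\mathcal{C}'\subsetneq\mathcal{C}$ at $t$ are already determined; by the induction hypothesis these agree with the values produced by any fair algorithm for $\mathcal{C}'$, as required by item~4 of Definition~\ref{def::fairScheduling}. Consequently, among the algorithms in $\mathcal{F}(<t)$ the only remaining degree of freedom at moment $t$ is how $\mathcal{C}$'s own schedule is completed; by the greedy requirement the number $q$ of jobs started at $t$ is forced to $\min(\#\text{free machines},\#\text{waiting jobs})$, and because jobs of each organization are started FIFO the resulting state depends only on the multiset of organizations receiving those $q$ slots, not on the order in which they are served.

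The arithmetic core is the fact (already used in the paragraph preceding the theorem) that, with all proper-subcoalition values held fixed, raising $v(\mathcal{C})$ by $\Delta$ raises every $\phi^{(u)}$, $O^{(u)}\in\mathcal{C}$, by exactly $\Delta/\|\mathcal{C}\|$: in Equation~\ref{eq::shapleyValue} only the term with $\mathcal{C}'=\mathcal{C}\setminus\{O^{(u)}\}$ is affected and it carries the coefficient $\tfrac{(\|\mathcal{C}\|-1)!\,0!}{\|\mathcal{C}\|!}=\tfrac1{\|\mathcal{C}\|}$. Starting the first waiting job of $O^{(u)}$ raises $\psi^{(u)}(\mathcal{C})$, and hence $v(\mathcal{C})$, by some $\Delta\psi$ and leaves the other utilities unchanged, so the new contributions are $\phi^{(u')}+\tfrac{\Delta\psi}{\|\mathcal{C}\|}$ and the new utilities are $\psi^{(u)}+\Delta\psi$ for $O^{(u)}$ and $\psi^{(u')}$ otherwise; the Manhattan distance of the state reached after that single decision is then exactly the quantity returned by \texttt{Distance}$(\mathcal{C},O^{(u)},t)$. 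Therefore \texttt{SelectAndSchedule} always starts a job whose scheduling yields the smallest possible distance at that step, i.e. it realizes Definition~\ref{def::fairScheduling} locally.

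The step I expect to be the real obstacle is upgrading this per-decision optimality to the global optimality demanded by the $\mathrm{argmin}$ in Definition~\ref{def::fairScheduling}: one must show that the sequence of $q$ locally distance-minimizing choices ends at a state no farther from $\vec\phi$ than any other completion of $\mathcal{C}$'s moment-$t$ schedule reachable by an algorithm in $\mathcal{F}(<t)$. I would establish this by an exchange argument using the three facts above: $q$ is forced, so every competing completion serves exactly $q$ jobs; the final state depends only on the multiset served, so a competing optimum may be reordered to begin with Algorithm~\ref{alg:fairAlg}'s first choice; and, since every contribution moves by the common increment $\Delta/\|\mathcal{C}\|$, minimizing the post-step $\ell_1$ distance reduces to pushing the most ``under-served'' coordinate towards its target, a step whose marginal benefit is non-increasing, so the induced polymatroid/$\ell_1$ greedy is optimal and one may induct on $q$. (For the strategy-resilient utility, where each started job contributes a unit, this reduction is immediate; for a general $\psi$ one needs the per-job increments to have diminishing returns, which holds for the utilities of interest.) Once the exchange argument is in place, the inner induction closes on $t$, then the outer induction on $\|\mathcal{C}\|$, and Algorithm~\ref{alg:fairAlg} satisfies Definitions~\ref{def::fairScheduling} and~\ref{def::fairAlgorithm}.
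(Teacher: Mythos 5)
Your proposal takes a genuinely different, and far more elaborate, route than the paper: the paper's entire proof is the single sentence that Algorithm~\ref{alg:fairAlg} ``is a straightforward implementation of Definition~\ref{def::fairAlgorithm}''. The authors evidently read the $\mathrm{argmin}$ in Definition~\ref{def::fairScheduling} as applying to each individual scheduling decision, so that the algorithm is fair essentially by construction, and the only thing worth checking is that \texttt{Distance} returns the correct post-decision Manhattan distance --- which your third paragraph verifies correctly, in the same way as the discussion preceding the theorem (the $\Delta\psi/\|\mathcal{C}\|$ shift of every contribution coming from the coefficient $\frac{(\|\mathcal{C}\|-1)!\,0!}{\|\mathcal{C}\|!}$ in Equation~\ref{eq::shapleyValue}). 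Your double induction on $\|\mathcal{C}\|$ and $t$, and the observation that the subcoalition values required by item~4 of Definition~\ref{def::fairScheduling} are already pinned down by the induction hypotheses when the algorithm reaches $\mathcal{C}$ at time $t$, are a faithful and more honest unpacking of what ``straightforward implementation'' has to mean.

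Under the stricter reading you adopt, however --- where the algorithm must minimize the end-of-moment-$t$ distance over all of $\mathcal{F}(<t)$, i.e.\ over all ways of filling the $q$ free slots at $t$ --- your proof has a gap that you yourself flag and do not close. The exchange argument in your final paragraph rests on the per-job utility increments of each organization being non-increasing along its FIFO queue, but the theorem is stated for an \emph{arbitrary} utility function $\psi$, and for such $\psi$ the increments $\Delta\psi$ of an organization's consecutive waiting jobs (which have arbitrary processing times) need not be monotone; when $q\geq 2$, a locally distance-minimizing first choice can then lead to a strictly worse $\ell_1$ distance once all $q$ slots are filled, so sequential greedy need not attain the $\mathrm{argmin}$ over completions of moment $t$. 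You must therefore either adopt the paper's per-decision reading of Definition~\ref{def::fairScheduling}, in which case nearly all of your argument (and in particular the exchange step) is unnecessary and the theorem is immediate, or insist on the end-of-$t$ reading, in which case the greedy-optimality claim remains unproved for general $\psi$ and your proof is incomplete as written.
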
\label{thm::fairnessInGeneral}
\begin{proof}
Algorithm~\textsc{Ref} is a straightforward implementation of Definition~\ref{def::fairAlgorithm}.
\end{proof}

\begin{proposition}
In each time moment $t$ the time complexity of Algorithm~\textsc{Ref} from Figure~\ref{alg:fairAlg} is \\ $O(\|\orgs\|(2^{\|\orgs\|}\sum m^{(u)} + 3^{\|\orgs\|}))$.
\end{proposition}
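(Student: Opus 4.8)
The plan is to charge the work performed at a fixed time moment $t$ to the pieces of \texttt{FairAlgorithm}: the \texttt{ReleaseJob} calls, and the $s = 1,\dots,\|\orgs\|$ loop over subcoalitions, which visits each $\mathcal{C}' \subseteq \mathcal{C}$ exactly once and for each runs \texttt{UpdateVals}, then the \texttt{while}/\texttt{SelectAndSchedule} loop, then the recomputation of $v[\mathcal{C}]$. Write $k = \|\orgs\|$ and $m = \sum_{u} m^{(u)}$. I would first isolate the two elementary counting facts that drive everything: (i) a fixed organization lies in exactly $2^{k-1}$ subsets of $\orgs$; and (ii) among the $3^{k}$ nested pairs $\mathcal{C}_{sub} \subseteq \mathcal{C}' \subseteq \orgs$, the number having a fixed organization inside $\mathcal{C}_{sub}$ is $3^{k-1}$ (that organization is in $\mathcal{C}_{sub}$, and each of the remaining $k-1$ organizations is independently in $\mathcal{C}_{sub}$, in $\mathcal{C}'\setminus\mathcal{C}_{sub}$, or outside $\mathcal{C}'$). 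I will also adopt, as the statement implicitly does, the cost model in which factorials are precomputed, coalition values are stored for $O(1)$ lookup (indexed by bitmask), and a single evaluation of $\psi$ — hence of \texttt{Distance} — is charged $O(1)$.

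For the Shapley-recomputation part: \texttt{UpdateVals}$(\mathcal{C}',t)$ with $|\mathcal{C}'| = s$ costs $O(s)$ for the initialization loops and for $v[\mathcal{C}']$, plus $O(1)$ per pair $(\mathcal{C}_{sub},O^{(u)})$ with $O^{(u)} \in \mathcal{C}_{sub} \subseteq \mathcal{C}'$, i.e. $\sum_{\mathcal{C}_{sub}\subseteq\mathcal{C}'}|\mathcal{C}_{sub}| = s\,2^{s-1}$ in total. Summing over all $\mathcal{C}'\subseteq\orgs$: the initialization parts give $O\bigl(\sum_s\binom ks s\bigr) = O(k2^{k-1})$, and the double loops give $\sum_{\mathcal{C}_{sub}\subseteq\mathcal{C}'\subseteq\orgs}|\mathcal{C}_{sub}| = \sum_{x\in\orgs}(\#\text{ pairs with }x\in\mathcal{C}_{sub}) = k\,3^{k-1}$ by fact (ii); so the whole \texttt{UpdateVals} bill is $O(k3^{k})$. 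For the scheduling part: for a fixed $\mathcal{C}'$ the \texttt{while} loop runs \texttt{SelectAndSchedule} at most once per free machine of $\mathcal{C}'$, hence at most $\sum_{O^{(u)}\in\mathcal{C}'}m^{(u)}$ times, and each \texttt{SelectAndSchedule} costs $O(k)$ (an $\mathrm{argmin}$ over the $\le k$ organizations of $\mathcal{C}'$, each scored by a constant-time \texttt{Distance}). Summing over $\mathcal{C}'\subseteq\orgs$ and using fact (i), $\sum_{\mathcal{C}'}\sum_{O^{(u)}\in\mathcal{C}'}m^{(u)} = \sum_u m^{(u)}\,2^{k-1} = 2^{k-1}m$, so this part costs $O(k2^{k}m)$. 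The leftovers — the $O(k)$ recomputation of $v[\mathcal{C}]$ for each of the $2^{k}$ subcoalitions, and the $O(2^{k})$ per \texttt{ReleaseJob} call (each released job is pushed onto the $2^{k-1}$ lists of the coalitions containing its organization, and we may take the number of releases per step to be $O(m)$) — are $O(k2^{k})$ and $O(2^{k}m)$ respectively, hence absorbed. Adding up yields $O(k3^{k} + k2^{k}m) = O\bigl(\|\orgs\|(2^{\|\orgs\|}\sum_u m^{(u)} + 3^{\|\orgs\|})\bigr)$.

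The only genuinely delicate point is the accounting for \texttt{SelectAndSchedule}: \texttt{Distance} contains a sum $\sum_{O^{(u')}}|\cdots|$ over the organizations of $\mathcal{C}'$, so a naive reading costs $\Theta(s)$ per candidate and would inflate the first term by a factor of $\|\orgs\|$; to get the stated bound one charges each \texttt{Distance} call as $O(1)$ (the ``constant number of utility-function operations'' convention), or one observes that the aggregate inside that sum can be maintained incrementally across the \texttt{while} loop. Apart from that, the argument is just the bookkeeping of facts (i) and (ii); the ``each remaining organization has $2$ (resp. $3$) independent options'' double counting is the cleanest route, but the closed forms $\sum_{s}\binom ks s\,2^{k-s} = k3^{k-1}$ and $\sum_s\binom ks s = k2^{k-1}$ may be substituted for a purely algebraic derivation.
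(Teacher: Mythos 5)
Your proof is correct and follows essentially the same decomposition as the paper's: the scheduling work is bounded by $O(\|\orgs\|2^{\|\orgs\|}\sum m^{(u)})$ (at most $\sum m^{(u)}$ jobs per coalition, $O(\|\orgs\|)$ per selection, $2^{\|\orgs\|}$ coalitions) and the contribution recomputation by $O(\|\orgs\|3^{\|\orgs\|})$ via the same subset-of-subset count, which the paper writes algebraically as $\sum_k\binom{\|\orgs\|}{k}2^k=3^{\|\orgs\|}$ and you write as the ``three options per element'' bijection. Your observation that \texttt{Distance} naively costs $\Theta(s)$ per candidate is a genuine subtlety that the paper silently absorbs into ``selection proportional to the number of organizations''; your accounting of it is, if anything, more careful than the original.
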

\begin{proof}
Once the contribution is calculated,  each coalition in $t$ may schedule at most $\sum m^{(u)}$ jobs. The time needed for selecting each such a job is proportional to the number of the organizations. Thus, we get the $\|\orgs\|2^{\|\orgs\|}\sum m^{(u)}$ part of the complexity. For calculating the contribution of the organization $O^{(u)}$ to the coalition $\mathcal{C}$ the algorithm considers all subsets of $\mathcal{C}$ -- there are $2^{\|\mathcal{C}\|}$ such subsets. Since there are $\|\orgs\| \choose k$ coalitions of size $k$, the number of the operations required for calculating the contributions of all organizations is proportional to:
\begin{align*}
\sum_{(u)} \sum_{k = 0}^{\|\orgs\|}{\|\orgs\| \choose k} 2^{k} = \|\orgs\|\sum_{k = 0}^{\|\orgs\|}{\|\orgs\| \choose k} 1^{\|\orgs\| - k} 2^{k} = \|\orgs\|(1 + 2)^{\|\orgs\|} = \|\orgs\|3^{\|\orgs\|} \textrm{.}
\end{align*}
This gives the $\|\orgs\|3^{\|\orgs\|}$ part of the complexity and completes the proof.
\end{proof}

\begin{corollary}
The problem of finding fair schedule parameterized with the number of organizations is FPT.
\end{corollary}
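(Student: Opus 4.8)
The plan is to show that Algorithm~\ref{alg:fairAlg} itself already witnesses fixed-parameter tractability with respect to the parameter $k = \|\orgs\|$: by Theorem~\ref{thm::fairnessInGeneral} it produces a fair schedule, and (as I will argue) its total running time is of the form $f(k)\cdot\mathrm{poly}(n)$, where $n$ denotes the size of the instance (the number of jobs together with the machine counts and the bit-lengths of the release and processing times) and $f(k) = O(k\,3^{k})$.

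First I would bound the number of time moments that the outer loop of \texttt{FairAlgorithm} actually has to visit. Although that loop is written as ``for each time moment $t$'', the internal state of the algorithm changes only at an \emph{event}: either a job is released, or a machine becomes free because the job it was running has just completed. There are at most as many release events as jobs, and at most as many completion events as jobs; hence only $O(n)$ event times, and between two consecutive events the algorithm does nothing, so it may jump directly from one event to the next. Since the instance is finite, the algorithm terminates after processing $O(n)$ such time moments.

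Next I would invoke the preceding Proposition, which bounds the cost of Algorithm~\ref{alg:fairAlg} at a single time moment by $O(\|\orgs\|(2^{\|\orgs\|}\sum m^{(u)} + 3^{\|\orgs\|}))$. Since $\sum m^{(u)} \le n$, the cost of handling one event is $O(n\,k\,2^{k} + k\,3^{k}) = O(k\,3^{k}\,n)$ (assuming, as throughout the paper, that $\psi$ is evaluable in time polynomial in $n$). Multiplying by the $O(n)$ events yields a total running time $O(k\,3^{k}\,n^{2})$, which is of the form $f(k)\cdot\mathrm{poly}(n)$ with $f(k)=O(k\,3^{k})$ --- precisely the definition of an FPT algorithm --- and the corollary follows.

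The one point I expect to be worth spelling out, and the only genuine obstacle, is that the two-fold recursion in Definition~\ref{def::fairScheduling} (fairness for $\mathcal{C}$ refers to fairness for every $\mathcal{C}' \subset \mathcal{C}$, and fairness at $t$ refers to fairness at every $t' < t$) does not cause a blow-up beyond the factors already counted. The recursion ``unfolds'' into a single pass because Algorithm~\ref{alg:fairAlg} maintains a schedule $\sigma[\mathcal{C}']$ for all $2^{k}$ subcoalitions simultaneously and, within each time moment, processes subcoalitions in order of increasing size, so that when $\mathcal{C}'$ is handled the values $v[\mathcal{C}'']$ of all its proper subcoalitions at time $t$ are already available; and processing time moments in increasing order discharges the $t' < t$ dependency. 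Thus the recursion contributes only the $2^{k}$ and $3^{k}$ factors already present in the per-step bound, with no further blow-up.
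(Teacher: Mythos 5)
Your proposal is correct and follows the same route the paper intends: the corollary is meant to be read off directly from Theorem~\ref{thm::fairnessInGeneral} together with the preceding proposition bounding the per-time-moment cost by $O(\|\orgs\|(2^{\|\orgs\|}\sum m^{(u)} + 3^{\|\orgs\|}))$, which is exactly $f(k)\cdot\mathrm{poly}(n)$. Your additional observation that only $O(n)$ event times (releases and completions) need to be processed, so the ``for each time moment'' loop does not blow up with the numeric time horizon, is a worthwhile detail the paper leaves implicit.
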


\section{Strategy-proof utility functions}\label{sec::strategyProof}

There are many utility functions considered in scheduling, e.g. flow time, turnaround time, resource utilization, makespan, tardiness. However, it is not sufficient to design a fair algorithm for an arbitrary utility function $\psi$. Some functions may create incentive for organizations to manipulate their workload: to divide the tasks into smaller pieces, to merge or to delay them. This is undesired as an organization should not profit nor suffer from the way it presents its workload. An organization should present their jobs in the most convenient way; it should not focus on playing against other organizations. We show that in organizationally distributed systems, as we have to take into account such manipulations, the choice of the utility functions is restricted. 

For the sake of this section we introduce additional notation: let us fix an arbitrary organization $O^{(u)}$ and let $\sigma_t$ denote a schedule of the jobs of $O^{(u)}$ in time $t$. The jobs $J_i(s_i, p_i)$ of $O^{(u)}$ are characterized by their start times $s_i$ and processing times $p_i$. We are considering envy-free utility functions that for a given organization $O^{(u)}$ depend only on the schedule of the jobs of $O^{(u)}$. This means that there is no external economical relation between the organization (the organization $O^{u}$ cares about $O^{v}$ only if the jobs of $O^{v}$ influence the jobs of $O^{u}$ -- in contrast to looking directly at the utility of $O^{v}$).
We also assume the non-clairvoyant model -- the utility in time $t$ depends only on the jobs or the parts of the jobs completed before or at $t$. Let us assume that our goal is to maximize the utility function\footnote{We can easily transform the problem to the minimization form by taking the inverse of the standard maximization utility function}. We start from presenting the desired properties of the utility function~$\psi$ (when presenting the properties we use the shorter notation $\psi(\sigma_t)$ for $\psi(\sigma_t, t)$):

\begin{enumerate}

\item Tasks anonymity (starting times) --- improving the completion time of a single task with a certain processing time $p$ by one unit of time is for each task equally profitable -- for $s, s' \leq t-1$, we require:
\begin{align*}
	\psi(\sigma_t \cup \{(s, p)\}) - \psi(\sigma_t \cup \{(s + 1, p)\}) =
	\psi(\sigma_t' \cup \{(s', p)\}) - \psi(\sigma_t' \cup \{(s' + 1, p)\}) > 0\textrm{.}
\end{align*}

\item Tasks anonymity (number of tasks) --- in each schedule increasing the number of completed tasks is equally profitable -- for $s \leq t-1$, we require:
\begin{align*}
	\psi(\sigma_t \cup \{(s, p)\}) - \psi(\sigma_t) =
	\psi(\sigma_t' \cup \{(s, p)\}) - \psi(\sigma_t') > 0 \textrm{.}
\end{align*}

\item Strategy-resistance --- the organization cannot profit from merging multiple smaller jobs into one larger job or from dividing a larger job into smaller pieces:
\begin{align*}
 	\psi(\sigma_t \cup \{(s, p_1)\}) + \psi(\sigma_t \cup \{(s + p_1, p_2)\}) =
	\psi(\sigma_t \cup \{(s, p_1 + p_2)\}) \textrm{.}
\end{align*}
In spite of dividing and merging the jobs, each organization can delay the release time of their jobs and artificially increase the size of the jobs. Delaying the jobs is however never profitable for the organization (by property 1). Also, the strategy-resistance property discourages the organizations to increase the sizes of their jobs (the utility coming from processing a larger job is always greater).
\end{enumerate}

To within a multiplicative and additive constants, there is only one utility function satisfying the aforementioned properties.

\begin{theorem}\label{thm:strategyProofUtility}
Let $\psi$ be a utility function that satisfies the 3 properties: task anonymity (starting times); task anonymity (number of tasks); strategy-resistance. $\psi$ is of the following form:
\begin{align*}
	\psi(\sigma, t) = \sum_{\mathclap{(s, p) \in \sigma_t}}\min(p, t-s)(K_1 - K_2\frac{s+ \min(s+p-1, t-1)}{2}) + K_3 \textrm{,}
\end{align*}
where
\begin{enumerate}
\item $K_1 = \psi(\sigma \cup \{(0, 1)\}, t) - \psi(\sigma) > 0$
\item $K_2 = \psi(\sigma \cup \{(s, p)\}, t) - \psi(\sigma \cup \{(s+1, p)\}, t) > 0$
\item $K_3 = \psi(\emptyset) \textrm{.}$
\end{enumerate}
\end{theorem}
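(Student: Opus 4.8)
The plan is to characterize $\psi$ by successively pinning down its value on richer and richer schedules, starting from the empty schedule and building up via the three axioms. I would first introduce the shorthand $f(\sigma_t, s, p)$ for the contribution of a single job $(s,p)$, i.e.\ the quantity $\psi(\sigma_t \cup \{(s,p)\}) - \psi(\sigma_t)$, and observe that task anonymity (number of tasks) says this quantity does not depend on $\sigma_t$ at all --- it is a function $g(s,p)$ of the job alone. Consequently, by a trivial induction on $\|\sigma_t\|$, we get the additive decomposition
\begin{align*}
\psi(\sigma, t) = \psi(\emptyset) + \sum_{(s,p) \in \sigma_t} g(s,p),
\end{align*}
so the whole problem reduces to determining the single-job function $g(s,p)$, and $K_3 = \psi(\emptyset)$ is immediate. (One must be slightly careful that property 2 as stated only adds one job; but applying it repeatedly, peeling jobs off one at a time, gives the decomposition since the increment is schedule-independent each time.)

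**Next** I would determine $g$ on completed jobs, i.e.\ when $s + p - 1 \le t-1$, so that $\min(p, t-s) = p$ and $\min(s+p-1, t-1) = s+p-1$; the target formula then claims $g(s,p) = p(K_1 - K_2\frac{s + (s+p-1)}{2}) = p K_1 - K_2 p \frac{2s+p-1}{2}$. Strategy-resistance, rewritten in terms of $g$, says $g(s, p_1) + g(s+p_1, p_2) = g(s, p_1+p_2)$; setting $p_1 = p_2 = 1$ and iterating shows $g(s,p) = \sum_{j=0}^{p-1} g(s+j, 1)$, so $g$ is entirely determined by the unit-job values $h(s) := g(s,1)$. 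Task anonymity (starting times) with $p=1$ says $h(s) - h(s+1) = K_2$, a constant, so $h$ is an arithmetic progression: $h(s) = h(0) - K_2 s$. Here $h(0) = g(0,1) = \psi(\sigma \cup \{(0,1)\}) - \psi(\sigma) = K_1$ by definition of $K_1$ and property 2. Plugging $h(s) = K_1 - K_2 s$ into $g(s,p) = \sum_{j=0}^{p-1}(K_1 - K_2(s+j))$ and summing the arithmetic series gives exactly $p K_1 - K_2(ps + \frac{p(p-1)}{2}) = p(K_1 - K_2\frac{2s+p-1}{2})$, matching the formula for the completed-job case. I would also note in passing that applying strategy-resistance forces consistency of $g$ across different $p$, and that $K_1 > 0$, $K_2 > 0$ follow from the strict inequalities in properties 1 and 2.

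**The one genuinely delicate point** --- and what I expect to be the main obstacle --- is the non-clairvoyant, partially-executed case $s \le t-1 < s+p-1$, where only $t-s$ units of the job $(s,p)$ have been processed by time $t$. The three properties as stated quantify over schedules $\sigma_t$ of jobs and implicitly over the evaluation time $t$, but strategy-resistance in particular must be read at a fixed $t$: splitting $(s,p_1+p_2)$ into $(s,p_1)$ and $(s+p_1,p_2)$ and the identity $g(s,p_1) + g(s+p_1,p_2) = g(s,p_1+p_2)$ has to hold even when $t$ cuts through the middle of these jobs. The resolution is that the non-clairvoyant assumption means $\psi$ in time $t$ sees the partially-run job $(s,p)$ exactly as it would see the completed job $(s, t-s)$ --- i.e.\ $g(s,p)$ depends on $p$ only through $\min(p,t-s)$. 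So I would first argue this ``truncation'' reduction: for the purpose of evaluating $\psi(\cdot,t)$, every job $(s,p)$ may be replaced by $(s,\min(p,t-s))$, reducing everything to the completed-job computation already done, and then substitute $\min(p,t-s)$ for $p$ in the derived formula, which produces precisely the stated $\min(p,t-s)\bigl(K_1 - K_2\frac{s+\min(s+p-1,t-1)}{2}\bigr)$ with the understanding that $\min(s+p-1,t-1) = s + \min(p,t-s) - 1$. Making this truncation argument airtight --- showing that the three properties, which are phrased over schedules of (start, processing-time) pairs, legitimately transfer to the truncated picture and that no degenerate case (e.g.\ $s = t$, where $\min(p,t-s)=0$ and the job contributes nothing) is missed --- is the part that needs the most care; the rest is the routine induction-plus-arithmetic-series calculation sketched above.
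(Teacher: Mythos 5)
Your proposal is correct and follows essentially the same route as the paper's proof: reduce each job to its executed prefix via non-clairvoyance, split it into unit jobs via strategy-resistance, pin down the unit-job contribution $K_1 - K_2 s$ via the two anonymity axioms, and sum the arithmetic progression. The only difference is the order in which the axioms are invoked (you establish the schedule-independent additive decomposition first and truncate last, while the paper truncates first), plus your somewhat more explicit treatment of the peeling induction and the partially-executed case --- a matter of presentation, not substance.
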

\begin{proof}
\begin{align*}
\psi(\sigma, t) & = \psi(\bigcup_{\mathclap{(s, p) \in \sigma}}(s, p), t) = \psi(\bigcup_{\mathclap{(s, p) \in \sigma}}(s, \min(p, t-s)), t) \\
        & \textrm{(non-clairvoyance)} \\
	& =\psi(\bigcup_{(s, p) \in \sigma}\bigcup_{i = s}^{\min(s+p-1, t-1)}(i, 1), t) \\
        & \textrm{(strategy-resistance)}\\
	& = \psi(\bigcup_{(s, p) \in \sigma}\bigcup_{i = s}^{\min(s+p-1, t-1)}(0, 1), t) - K_2\sum_{(s, p) \in \sigma_t}\sum_{i=s}^{\mathclap{\min(s+p-1, t-1)}} i \\
        & \textrm{(starting times anonymity)}\\
        & = \psi(\emptyset) + \sum_{\mathclap{(s, p) \in \sigma_t}}\sum_{i = s}^{\min(s+p-1, t-1)}K_1 \\
        & \textrm{(number of tasks anonymity)}\\
        & \;\;\;\; - K_2\sum_{(s, p) \in \sigma_t}\min(p, t-s)\frac{s+ \min(s+p-1, t-1)}{2} \\
        & \textrm{(sum of the arithmetic progression)}\\
	& = K_3 + \sum_{\mathclap{(s, p) \in \sigma_t}}\min(p, t-s)(K_1 - K_2\frac{s+ \min(s+p-1, t-1)}{2}) &
\end{align*}
\end{proof}

We set the constants $K_1, K_2, K_3$ so that to simplify the form of the utility function and ensure that the utility is always positive. With $K_1 = 1$, $K_2 = t$ and $K_3 = 0$, we get the following strategy-proof utility function: 
\begin{align}\label{def::fairMetric}
\psi_{sp}(\sigma, t) = \sum_{\mathclap{(s, p) \in \sigma: s \leq t}}\min(p, t-s)\left(t - \frac{s+ \min(s+p-1, t-1)}{2}\right) \textrm{.}
\end{align}

$\psi_{sp}$ can be interpreted as the task throughput. A task with processing time $p_i$ can be identified with $p_i$ unit-sized tasks starting in consecutive time moments. Intuitively, the function $\psi_{sp}$ assigns to each such unit-sized task starting at time $t_s$ a utility value equal to $(t - t_s)$; the higher the utility value, the earlier this unit-sized task completes. A utility of the schedule is the sum of the utilities over all such unit-sized tasks.
$\psi_{sp}$ is similar to the flow time 
% (which is the completion time minus the release time summed over all jobs) 
except for two differences: (i) Flow time is a minimization objective, but increasing the number of completed jobs increases its value. E.g., scheduling no jobs results in zero (optimal) flow time, but of course an empty schedule cannot be considered optimal (breaking the second axiom);
 % -- this is an undesired behavior that breaks the second axiom -- the tasks anonymity with respect to the number of the tasks. 
(ii) Flow time favors short tasks, which is an incentive for dividing tasks into smaller pieces (this breaks strategy-resistance axiom). The differences between the flow time and $\psi_{sp}$ is also presented on example in Figure~\ref{fig:fairUtilityExample}. The similarity of $\psi_{sp}$ to the flow time is quantified by Proposition~\ref{thm::metricLikeFlowTime} below.

\begin{proposition}\label{thm::metricLikeFlowTime}
Let $\mathcal{J}$ be a fixed set of jobs, each having the same processing time $p$ and each completed before $t$. Then, maximization of the $\psi_{sp}$ utility is equivalent to minimization of the flow time of the jobs.
\end{proposition}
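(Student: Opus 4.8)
The plan is to show that, restricted to the feasible schedules in which every job of $\mathcal{J}$ is completed by time $t$, both the $\psi_{sp}$-utility and the total flow time are affine functions of the single scalar $S(\sigma) := \sum_{J_j \in \mathcal{J}} s_j$ (the sum of the start times), with $\psi_{sp}$ strictly decreasing in $S$ and the flow time strictly increasing in $S$. Since $\mathcal{J}$, the common processing time $p$, and $t$ are all fixed, this forces the set of $\psi_{sp}$-maximizing schedules to coincide with the set of flow-time-minimizing schedules, which is exactly the assertion.

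First I would simplify $\psi_{sp}$. The hypothesis that each job finishes before $t$ means every relevant triple $(s, p)$ satisfies $s + p \le t$, hence $\min(p, t - s) = p$ and $\min(s + p - 1, t - 1) = s + p - 1$. Substituting into Equation~\ref{def::fairMetric} yields
\begin{align*}
\psi_{sp}(\sigma, t) = \sum_{J_j \in \mathcal{J}} p\left(t - \frac{s_j + (s_j + p - 1)}{2}\right) = \|\mathcal{J}\|\, p\left(t - \frac{p - 1}{2}\right) - p\, S(\sigma) \textrm{,}
\end{align*}
so $\psi_{sp}(\sigma, t) = A - p\, S(\sigma)$ with $A$ depending only on the fixed data and $p > 0$. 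Next I would do the same for the flow time: since the completion time of $J_j$ is $C_j = s_j + p$, the total flow time is
\begin{align*}
F(\sigma) = \sum_{J_j \in \mathcal{J}} (C_j - r_j) = S(\sigma) + \|\mathcal{J}\|\, p - \sum_{J_j \in \mathcal{J}} r_j = S(\sigma) + B \textrm{,}
\end{align*}
where $B$ again depends only on the fixed instance. (If ``flow time'' is taken to mean $\sum_j C_j$ instead of $\sum_j (C_j - r_j)$, only the constant $B$ changes and the argument is unaffected.)

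Combining the two identities, for any feasible $\sigma, \sigma'$ with all jobs of $\mathcal{J}$ done by $t$ we have $\psi_{sp}(\sigma, t) \ge \psi_{sp}(\sigma', t) \Leftrightarrow S(\sigma) \le S(\sigma') \Leftrightarrow F(\sigma) \le F(\sigma')$; hence a schedule maximizes $\psi_{sp}$ if and only if it minimizes the flow time. The only point requiring any care --- and the closest thing to an obstacle in an otherwise routine computation --- is the discrete-time boundary bookkeeping that makes the two $\min$ operators collapse, i.e.\ verifying that ``completed before $t$'' is precisely the condition $s_j + p \le t$ (equivalently $s_j + p - 1 \le t - 1$), together with pinning down the exact definition of flow time being used; neither affects the conclusion.
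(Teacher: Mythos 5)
Your proof is correct and follows essentially the same route as the paper's: both collapse the two $\min$ operators using the hypothesis that every job finishes by $t$, and then rewrite $\psi_{sp}$ as a fixed constant minus $p$ times an affine function of the start times, so that maximizing $\psi_{sp}$ is minimizing flow time. Your intermediate variable $S(\sigma)=\sum_j s_j$ is just a slight repackaging of the paper's direct identity $\psi_{sp}=\text{const}-p\,\psi_{ft}$ (and, incidentally, your bookkeeping of the release-time constant is cleaner than the paper's, which has a harmless $\sum r$ versus $p\sum r$ slip).
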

\begin{proof}
Let $\sigma$ denote an arbitrary schedule of $\mathcal{J}$. Since the flow time uses the release times of the jobs, we will identify the jobs with the triples $(s, p, r)$ where $s$, $p$ and $r$ denote the start time, processing time and release time, respectively. Let $\psi_{ft}(\sigma)$ denote the total flow time of the jobs from $\mathcal{J}$ in schedule $\sigma$. We have:
\begin{align*}
\psi_{sp}(\sigma, t) &= \sum_{\mathclap{(s, p, r) \in \sigma: s \leq t}}\min(p, t-s)\left(t - \frac{s+ \min(s+p-1, t-1)}{2}\right) &\\
        &= \sum_{\mathclap{(s, p, r) \in \sigma}}p\left(t - \frac{2s + p-1}{2}\right) \\
        & \textrm{(each job is completed before $t$)} \\
	&= \sum_{\mathclap{(s, p, r) \in \sigma}}(pt + \frac{p^2 + p}{2} - r) - p\sum_{\mathclap{(s, p, r) \in \sigma}}((p+s) - r) & \\
	&= \|\mathcal{J}\|(pt + \frac{p^2 + p}{2}) - \sum_{\mathclap{(s, p, r) \in \sigma}}(r) - p\psi_{ft}(\sigma) &
\end{align*}
Since $p$, $\|\mathcal{J}\|(pt + \frac{p^2 + p}{2})$ and $\sum_{(s, p, r) \in \sigma}r$ are constants we get the thesis.
\end{proof}

\begin{figure}[tb]
  \begin{center}
    \includegraphics[scale=1.25]{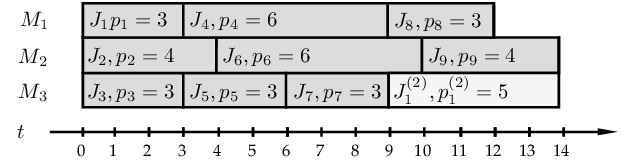}
  \end{center}
  %\vspace{-0.75cm}
  \caption{Consider 9 jobs owned by $O^{(1)}$ and a single job owned by $O^{(2)}$, all scheduled on 3 processors. We assume all jobs were released in time 0. In this example all jobs finish before or at time $t = 14$. The utility $\psi_{sp}$ of the organization $O^{(1)}$ in time 13 does not take into account the last uncompleted unit of the job $J_9$, thus it is equal to: $3\cdot(13 - \frac{0+2}{2}) + 4\cdot(13 - \frac{0+3}{2}) + \dots + 3\cdot(13 - \frac{9+11}{2}) + 3\cdot(13 - \frac{10+12}{2}) = 262$. The utility in time 14 takes into account all the parts of the jobs, thus it is equal to  $3\cdot(14 - \frac{0+2}{2}) + 4\cdot(14 - \frac{0+3}{2}) + \dots + 3\cdot(14 - \frac{9+11}{2}) + 4\cdot(14 - \frac{10+13}{2}) = 297$.  The flow time in time 14 is equal to $3 + 4+ \dots + 14 = 70$. If there was no job $J_1^{(2)}$, then $J_9$ would be started in time 9 instead of 10 and the utility $\psi_{sp}$ in time 14 would increase by $4\cdot(\frac{10 + 13}{2} - \frac{9 + 12}{2}) = 4$ (the flow time would decrease by 1). If, for instance, $J_6$ was started one time unit later, then the utility of the schedule would decrease by 6 (the flow time would decrease by 1), which shows that the utility takes into account the sizes of the jobs (in contrast to the flow time). If the job $J_9$ was not scheduled at all, 
the utility $\psi_{sp}$ would decrease by 10, which shows that the schedule with more tasks has higher (more optimal) utility (the flow time would decrease by 14; since flow time is a minimization metric, this breaks the second axiom regarding the tasks anonymity).}
  \label{fig:fairUtilityExample}
\end{figure}

\section{Fair scheduling with strategy-proof utility}

\begin{figure}[th]
\begin{algorithm}[H]
   \footnotesize
   \SetKwInput{KwNotation}{Notation}
   \SetKwFunction{Distance}{Distance}
   \SetKwFunction{ReleaseJob}{ReleaseJob}
   \SetKwFunction{FairAlgorithm}{FairAlgorithm}
   \SetKwFunction{UpdateVals}{UpdateVals}
   \SetKwFunction{PropoagateVals}{PropoagateVals}
   \SetKwFunction{FreeMachine}{FreeMachine}
   \SetKwFunction{SelectAndSchedule}{SelectAndSchedule}
   \SetKwBlock{Block}
	\SelectAndSchedule{$\mathcal{C}$, $t$}:
	\Block{
		$u \leftarrow \mathrm{argmin}_{O^{(u)}}(\psi[\mathcal{C}][O^{(u)}] - \phi[\mathcal{C}][O^{(u)}])$ \;
		$\sigma[\mathcal{C}] \leftarrow \sigma[\mathcal{C}] \cup \{(\mathrm{jobs}[\mathcal{C}][u].\mathrm{first}, t)\}$\;
		$\psi[\mathcal{C}][O^{(u)}] \leftarrow \psi(\sigma[\mathcal{C}], O^{(u)}, t)$\;
	}
\end{algorithm}
\caption{\footnotesize Function \texttt{SelectAndSchedule} for utility function $\psi_{sp}$.}\label{alg:fairAlgConc}
\end{figure}

For the concrete utility function $\psi_{sp}$ we can simplify the \texttt{SelectAndSchedule} function in Algorithm~\textsc{Ref}. The simplified version is presented in Figure~\ref{alg:fairAlgConc}.

The algorithm selects the organization $O^{(u)}$ that has the largest difference ($\phi^{(u)} - \psi^{(u)}$) that is the organization that has the largest contribution in comparison to the obtained utility. One can wonder whether we can select the organization in polynomial time -- without keeping the $2^{\|\mathcal{C}\|}$ schedules for all subcoalitions. Unfortunately, the problem of calculating the credits for a given organization is NP-hard. 

\begin{theorem}\label{thm::contributionsNp}
The problem of calculating the contribution $\phi^{(u)}(\mathcal{C}, t)$ for a given organization $O^{(u)}$ in coalition $\mathcal{C}$ in time $t$ is NP-hard.
\end{theorem}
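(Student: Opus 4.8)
The plan is to prove NP-hardness by a reduction from a classical strongly NP-hard scheduling problem, most naturally from the problem of minimizing total flow time (equivalently, sum of completion times) on parallel identical machines with release dates, $P|r_j|\sum C_j$, or from a partition-style problem embedded into the makespan/flow-time of a single coalition. Because the contribution $\phi^{(u)}(\mathcal{C},t)$ is, by Equation~\ref{eq::shapleyValue}, a fixed linear combination of the coalition values $v(\mathcal{C}')$ over all $\mathcal{C}' \subseteq \mathcal{C}$, and each $v(\mathcal{C}')$ is itself a sum over organizations of $\psi_{sp}$ evaluated on the greedy fair schedule restricted to $\mathcal{C}'$, computing $\phi^{(u)}$ with certainty lets one read off the value of at least one nontrivial $v(\mathcal{C}')$; the hardness should therefore be localized already in evaluating a single $v(\mathcal{C}')$, i.e.\ in the offline optimization hidden inside the fair algorithm for a subcoalition.

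The key steps, in order, would be: (i) Fix a small number of organizations (two suffices for NP-hardness in the strong sense if job sizes may be large, since we have not bounded processing times) and design an instance in which the waiting job queues and machine counts force the greedy fair scheduler, when acting on the full coalition, to resolve a bin-packing / load-balancing decision among machines whose optimum encodes a number-partition or 3-partition instance. (ii) Express $v(\mathcal{C})$ for the grand coalition, and $v(\mathcal{C}')$ for each strict subcoalition, in closed form using the $\psi_{sp}$ formula from Equation~\ref{def::fairMetric}; the subcoalition values will be trivially computable (e.g.\ a single organization running alone on its own machines gives a schedule that is forced), so that $\phi^{(u)}(\mathcal{C},t)$ becomes an affine function of the single unknown quantity $v(\mathcal{C})$. (iii) Show that $v(\mathcal{C})$ — which equals the total $\psi_{sp}$ utility achieved by the greedy fair tie-break-optimal schedule on $\mathcal{C}$ — is extremal (maximal) exactly when the underlying partition instance has a solution, using Proposition~\ref{thm::metricLikeFlowTime} to translate maximization of $\psi_{sp}$ into minimization of flow time when all the relevant jobs share a common size. (iv) Conclude that a polynomial algorithm for $\phi^{(u)}(\mathcal{C},t)$ would decide the partition instance, hence the problem is NP-hard.

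The main obstacle I expect is controlling the \emph{greedy} nature of the fair algorithm: the schedule for $\mathcal{C}$ is not an arbitrary offline optimum but the one produced by repeatedly invoking \texttt{SelectAndSchedule}, which at each free-machine event picks the organization minimizing $|\phi-\psi|$ (for $\psi_{sp}$, maximizing $\phi^{(u)}-\psi^{(u)}$). So I cannot simply assert ``$v(\mathcal{C})$ is the optimum of some clean objective''; I must engineer the release times, queue orders (recall jobs of one organization run FIFO), and $m^{(u)}$ values so that the only freedom the greedy rule actually exercises is the machine-assignment decision that I want to encode, while all the organization-selection choices are forced (e.g.\ by making one organization have overwhelmingly many tiny jobs so it is always the fairness-preferred pick, leaving the placement of a few large jobs of the other organization as the sole combinatorial degree of freedom). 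Making this ``everything is forced except the hard sub-decision'' argument airtight — and verifying that the recursive definition of fairness for the size-one and size-$(k-1)$ subcoalitions does not introduce additional unintended slack — is the delicate part; once the instance is rigid in this sense, the arithmetic tying $v(\mathcal{C})$ to the partition is routine via the $\psi_{sp}$-to-flow-time correspondence.
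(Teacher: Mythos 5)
Your reduction plan has a structural flaw that cannot be repaired within the approach you describe. You propose to fix a small number of organizations (``two suffices'') and to localize the hardness in the evaluation of a single coalition value $v(\mathcal{C}')$. But the paper's own complexity analysis shows that for a fixed number $k$ of organizations the entire recursive fair schedule --- and hence every $v(\mathcal{C}')$ and every contribution $\phi^{(u)}$ --- is computable in time $O(k(2^{k}\sum m^{(u)} + 3^{k}))$ per scheduling event, i.e.\ polynomial once $k$ is a constant (this is exactly the FPT corollary). So no reduction that keeps the number of organizations bounded can establish NP-hardness unless $\p = \np$; in particular your claim of strong NP-hardness with two organizations is directly contradicted. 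Moreover, the combinatorial degree of freedom you want to exploit does not exist: machines are identical, jobs are sequential and non-preemptive, each organization's queue is FIFO, and the scheduler is greedy, so all free machines are interchangeable and there is no bin-packing or load-balancing decision whose optimum could encode a partition instance. The value $v(\mathcal{C}')$ is not an offline optimum of $\psi_{sp}$; it is whatever the deterministic greedy rule produces, and given the subcoalition schedules it is obtained in polynomial time.

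The paper's proof places the hardness exactly where your proposal declines to look: in the exponential aggregation over subcoalitions in Equation~\ref{eq::shapleyValue}. It builds $k+2$ organizations from a \textsc{SubsetSum} instance with $k$ elements --- one organization per element of $S$, a dummy $a$ owning a machine but no jobs, and a dummy $b$ owning one huge job of size $L$ --- and tunes release and processing times so that $a$'s marginal contribution to a coalition $\mathcal{C}'$ containing $b$ is at least $L$ precisely when the elements corresponding to $\mathcal{C}' \cap \mathcal{O}_{S}$ sum to less than $x$ (the extra machine then advances the huge job by one time unit), and is at most $4\|S\|x_{tot}^{2} \ll L/(k+2)!$ otherwise. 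Consequently $\lfloor (k+2)!\,\phi^{(a)}/L \rfloor$ recovers the weighted count $n_{<x}(S)$ of such subsets, and comparing this count for thresholds $x$ and $x+1$ decides \textsc{SubsetSum}. To salvage your write-up, the essential change is to let the number of organizations grow with the input and to encode the hard question into \emph{which coalitions} receive a large marginal contribution, not into the schedule of any single coalition.
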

\begin{proof}
We present the reduction of the $\textsc{SubsetSum}$ problem (which is NP-hard) to the problem of calculating the contribution for an organization. Let $I$ be an instance of the $\textsc{SubsetSum}$ problem. In $I$ we are given a set of $k$ integers $S = \{x_1, x_2, \dots, x_k\}$ and a value $x$. We ask whether there exists a subset of $S$ with the sum of elements equal to $x$. From $I$ we construct an instance $I_{con}$ of the problem of calculating the contribution for a given organization. Intuitively, we construct the set of $(\|S\| + 2)$ organizations: $\|S\|$ of them will correspond to the appropriate elements from $S$. The two dummy organizations $a$ and $b$ are used for our reduction. One dummy organization $a$ has no jobs. The second dummy organization $b$ has a large job that dominates the value of the whole schedule. The instance $I_{con}$ is constructed in such a way that for each coalition $\mathcal{C}$ such that $b \in \mathcal{C}$ and such that the elements of $S$ corresponding to the organizations from $\mathcal{C}$ sum up to the value lower than $x$, the marginal contribution of $a$ to 
$\mathcal{C}$ is $L + O(L)$, where $O(L)$ is small in comparison with $L$. The marginal contribution of $a$ to other coalitions is small ($O(L)$). Thus, from the contribution of $a$, we can count the subsets of $S$ with the sum of the elements lower than $x$. By repeating this procedure for $(x+1)$ we can count the subsets of $S$ with the sum of the elements lower than $(x+1)$. By comparing the two values, we can find whether there exists the subset of $S$ with the sum of the elements equal to $x$. 
The precise construction is described below.

Let $\mathcal{S}_{<x} = \{S' \subset S: \sum_{x_i \in S'}s_i < x\}$ be the set of the subsets of $S$, each having the sum of the elements lower than $x$.
Let $n_{< x}(S) = \sum_{S' \in \mathcal{S}_{<x}} (\|S'\| + 1)!(\|S\| - \|S'\|)!$ be the number of the orderings (permutations) of the set $S \cup \{a, b\}$ that starts with some permutation of the sum of exactly one element of $\mathcal{S}_{<x}$ (which is some subset of $S$ such that the sum of the elements of this subset is lower than $x$) and $\{b\}$ followed by the element $a$. In other words, if we associate the elements from $S \cup \{a, b\}$ with the organizations and each ordering of the elements of $S \cup \{a, b\}$  with the order of the organizations joining the grand coalition, then $n_{< x}(S)$ is the number of the orderings corresponding to the cases when organization $a$ joins grand coalition just after all the organizations from $S' \cup \{b\}$, where $S'$ is some element of $\mathcal{S}_{<x}$.
Of course $\mathcal{S}_{<x} \subseteq \mathcal{S}_{<(x+1)}$.
Note that there exists $S' \subset S$, such that $\sum_{x_i \in S'}x_i = x$ if and only if the set $\mathcal{S}_{<x}$ is a proper subset of $\mathcal{S}_{<(x+1)}$ (i.e. $\mathcal{S}_{<x} \subset \mathcal{S}_{<(x+1)}$). 
Indeed, there exists $S'$ such that $S' \notin \mathcal{S}_{<x}$ and $S' \in \mathcal{S}_{<(x+1)}$ if and only if $\sum_{x_i \in S'}x_i < x+1$ and $\sum_{x_i \in S'}x_i \geq x$ from which it follows that $\sum_{x_i \in S'}x_i = x$.
Also, $\mathcal{S}_{<x} \subset \mathcal{S}_{<(x+1)}$ if and only if $n_{< (x+1)}(S)$ is greater than $n_{< (x)}(S)$ (we are doing a summation of the positive values over the larger set).

In $I_{con}$ there is a set of $(k+2)$ machines, each owned by a different organization. We will denote the set of first $k$ organizations as $\mathcal{O}_{S}$, the (k+1)-th organization as $a$ and the  (k+2)-th organization as $b$. Let $x_{tot} = \sum_{j = 1}^{k}x_{j} + 2$. The $i$-th organization from $\mathcal{O}_{S}$ has 4 jobs: $J^{(i)}_1, J^{(i)}_2, J^{(i)}_3$ and $J^{(i)}_4$, with release times $r^{(i)}_1 = r^{(i)}_1 = 0$, $r^{(i)}_3 = 3$ and $r^{(i)}_4 = 4$; and processing times $p^{(i)}_1 = p^{(i)}_2 = 1$, $p^{(i)}_3 = 2x_{tot}$ and $p^{(i)}_4 = 2x_i$.
The organization $a$ has no jobs; the organization $b$ has two jobs $J^{(b)}_1$ and $J^{(b)}_2$, with release times $r^{(b)}_1 = 2$ and $r^{(b)}_2 = (2x+3)$; and processing times $p^{(b)}_1 = (2x+2)$ and $p^{(b)}_2 = L = 4\|S\|x_{tot}^2((k+2)!) + 1$ (intuitively $L$ is a large number).

Until time $t = 2$ only the organizations from $\mathcal{O}_{S}$ have some (unit-size) jobs to be executed. The organization $b$ has no jobs till time $t = 2$, so it will run one or two unit-size jobs of the other organizations, contributing to all such coalitions that include $b$ and some other organizations from $\mathcal{O}_{S}$. This construction allows to enforce that in the first time moment after $t=2$ when there are jobs of some of the organizations from $\mathcal{O}_{S}$ and of $b$ available for execution, the job of $b$ will be selected and scheduled first.

\begin{figure}[tb]
  \begin{center}
    \includegraphics[scale=0.8]{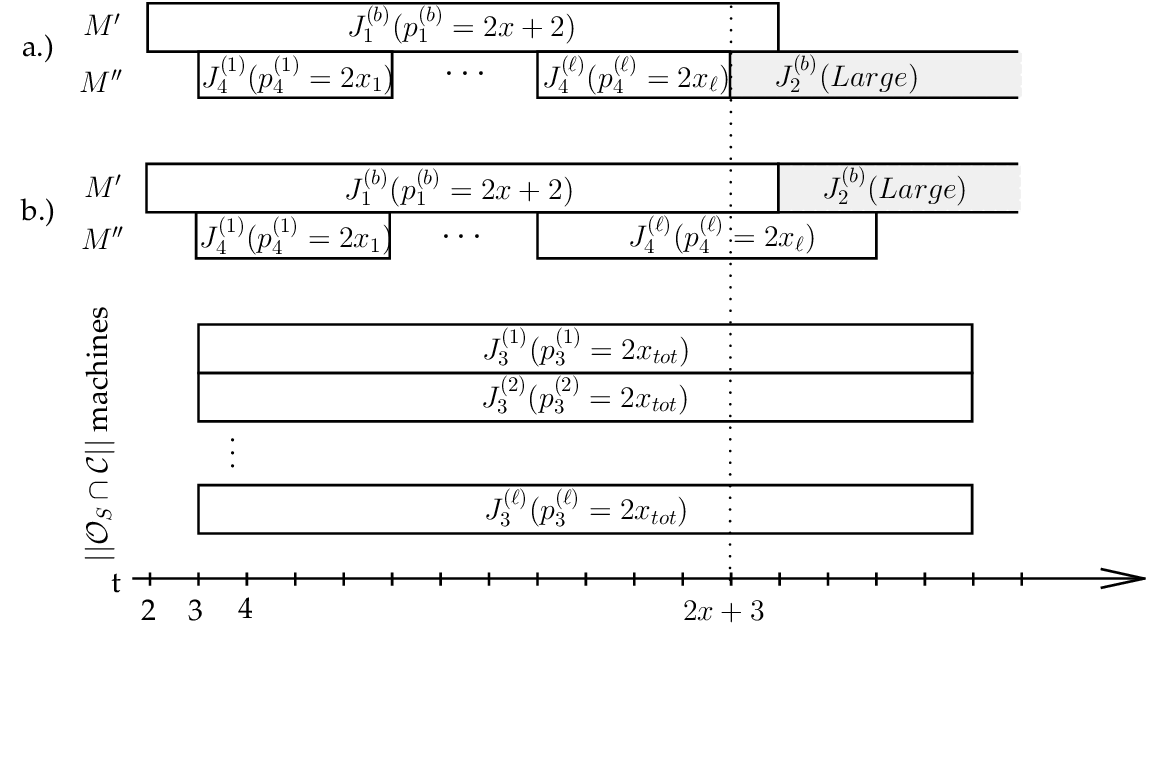}
  \end{center}
  \vspace{-2.2cm}
  \caption{The schedules for the coalition $\mathcal{C} \cup \{a\}$ for two cases: a)~$\sum_{i: O^{(i)} \in \mathcal{C} \land O^{(i)} \in \mathcal{O}_{S}}x_i  \leq x$, b)~$\sum_{i: O^{(i)} \in \mathcal{C} \land O^{(i)} \in \mathcal{O}_{S}}x_i  > x$. The two cases a) and b) differ only in the schedules on machines $M'$ and $M''$. In the case a) the large job $J_2^{(b)}$ (marked as a light gray) is started one time unit earlier than in case b).}
  \label{fig:npHard}
\end{figure}

Let us consider a contribution of $a$ to the coalition $\mathcal{C}$ such that $a \notin \mathcal{C}$ and $b \in \mathcal{C}$. 
There are $(\|\mathcal{C} \cap \mathcal{O}_{S}\|+2)$ machines in the coalition $\mathcal{C} \cup \{a\}$.
The schedule in $\mathcal{C} \cup \{a\}$ after $t=2$ looks in the following way (this schedule is depicted in Figure~\ref{fig:npHard}). 
In time $t=2$ one machine (let us denote this machine as $M'$) starts the job $J^{(b)}_1$ 
In time $t = 3$ some $\|\mathcal{C} \cap \mathcal{O}_{S}\|$ machines start the third jobs (the one with size $2x_{tot}$) of the organizations from $\mathcal{C} \cap \mathcal{O}$ and one machine (denoted as $M''$) starts the fourth jobs of the organizations from $\mathcal{C} \cap \mathcal{O}_{S}$; the machine $M''$ completes processing all these jobs in time $2y + 4$, where $y = \sum_{i: O^{(i)} \in \mathcal{C} \land O^{(i)} \in \mathcal{O}_{S}}x_i$ (of course $2y + 4\leq 2x_{tot}$).
In time $(2x+3)$, if $y < x$ the machine $M''$ starts processing the large job $J^{(b)}_2$ of the organization $b$; otherwise machine $M''$ in time $(2x+3)$ still executes some job $J^{(i)}_4$ (as the jobs $J^{(i)}_4$ processed on $M''$ start in even time moments).
In time $2x+4$, if $y \geq x$, the large job $J^{(b)}_2$ is started by machine $M'$  just after the job $J^{(b)}_1$ is completed, ($J^{(b)}_1$ completes in $(2x+4)$); here we use the fact that after $t=2$, $b$ will be prioritized over the organizations from $\mathcal{O}_{S}$. To sum up: if $y < x$ then the large job $J^{(b)}_2$ is started in time $(2x+3)$, otherwise it is started in time $(2x+4)$.

If $y < x$ then by considering only a decrease of the starting time of the largest job, the contribution of $a$ to the coalition $\mathcal{C}$ can be lower bounded by $c_1$:
\begin{align*}
c_1 = L\left(t - \frac{(2x+3)+(2x+3+L)}{2}\right) - L\left(t - \frac{(2x+4) + (2x+4+L))}{2}\right) = L \mathrm{,}
\end{align*}
The organization $a$ causes also a decrease of the starting times of the small jobs (the jobs of the organizations from $\mathcal{O}_{S}$); each job of size smaller or equal to $2x_{tot}$. The starting time of each such small job is decreased by at most $2x_{tot}$ time units. Thus, the contribution of $a$ in case $y < x$ can be upper bounded by $c_2$:
\begin{align*}
c_2 \leq L + 4\|S\|x_{tot}^2 \mathrm{.}
\end{align*}

If $y \geq x$ then $a$ causes only a decrease of the starting times of the small jobs of the organizations from $\mathcal{O}_{S}$, so the contribution of $a$ to $\mathcal{C}$ in this case can be upper bounded by $c_3$:
\begin{align*}
c_3 \leq 4\|S\|x_{tot}^2 \mathrm{.}
\end{align*}
By similar reasoning we can see that the contribution of $a$ to any coalition $\mathcal{C}'$ such that $b \notin \mathcal{C}'$ is also upper bounded by $4\|S\|x_{tot}^2$.

The contribution of organization $a$, $\phi^{(a)}$, is given by Equation~\ref{eq::shapleyValueAlt}, with $u=a$ and $C = \{O^{(1)} \dots O^{(k+2)}\}$. Thus:
\begin{align*}
\phi^{(a)} = \sum_{\mathclap{\mathcal{C'} \subseteq \mathcal{C} \setminus \{a\}}}
\frac{\|\mathcal{C'}\|!(k + 1 - \|\mathcal{C'}\|)!}{(k+2)!}\mathrm{marg}\_\phi(\mathcal{C'}, a) \textrm{,}
\end{align*}
where $\mathrm{marg}\_\phi(\mathcal{C'}, a)$ is the contribution of $a$ to coalition $\mathcal{C'}$. All the coalitions $\mathcal{C'}$ such that $a \notin \mathcal{C'}$, $b \in \mathcal{C'}$ and $\sum_{i: O^{(i)}\in \mathcal{C'} \cap \mathcal{O}_{S}}x_i < x$ will contribute to $\phi^{(a)}$ the value at least equal to $\frac{n_{<x}(S)}{(k+2)!}c_1 = \frac{n_{<x}(S)L}{2(k+2)!}$ (as there is exactly $n_{<x}(S)$ orderings corresponding to the the case when $a$ is joining such coalitions $\mathcal{C'}$) and at most equal to $\frac{n_{<x}(S)}{(k+2)!}c_2 \leq \frac{n_{<x}(S)(L + 8\|S\|x_{tot}^2)}{2(k+2)!}$. The other $(k+2)! - n_{<x}(S)$ orderings will contribute to $\phi^{(a)}$ the value at most equal to $\frac{((k+2)! - n_{<x}(S))}{(k+2)!}c_3 = \frac{((k+2)! - n_{<x}(S))(4\|S\|x_{tot}^2)}{(k+2)!}$. Also:
\begin{align*}
\frac{((k+2)! - n_{<x}(S))(4\|S\|x_{tot}^2)}{(k+2)!} + \frac{n_{<x}(S)(4\|S\|x_{tot}^2)}{(k+2)!} = 4\|S\|x_{tot}^2 < \frac{L}{(k+2)!} \textrm{,}
\end{align*}
which means that $\phi^{(a)}$ can be stated as  $\phi^{(a)} = \frac{n_{<x}(S)L}{(k+2)!} + R$, where $0 \leq R \leq \frac{L}{(k+2)!}$.
We conclude that $\lfloor \frac{(k+2)!\phi^{(a)}}{L} \rfloor =  n_{<x}(S)$. We have shown that calculating the value of $\phi^{(a)}$ allows us to find the value $n_{<x}(S)$. Analogously, we can find $n_{<(x+1)}(S)$. By comparing $n_{<x}(S)$ with $n_{<(x+1)}(S)$ we find the answer to the initial \textsc{SubsetSum} problem, which completes the proof.

\end{proof}

We propose the following definition of the approximation of the fair schedule (similar definitions of the approximation ratio are used for multi-criteria optimization problems~\cite{Ehrgott00approximationalgorithms}):
\begin{definition}
Let $\sigma$ be a schedule and let $\vec{\psi}$ be a vector of the utilities of the organizations in $\sigma$. 
We say that $\sigma$ is an $\alpha$-approximation fair schedule in time $t$ if and only if there exists a truly fair schedule $\sigma^{*}$, with the vector $\vec{\psi}^{*} = \langle \psi^{(u),*} \rangle$ of the utilities of the organizations, such that:
\begin{align*}
\|\vec{\psi} - \vec{\psi}^{*}\|_{M} \leq \alpha \|\vec{\psi}^{*}\|_{M} = \alpha \sum_{u} \psi^{(u),*} = \alpha \cdot v(\sigma^{*}, \mathcal{C}) \textrm{.}
\end{align*}
\end{definition}

Unfortunately, the problem of finding the fair schedule is difficult to approximate. There is no algorithm better than 1/2 (the proof below). This means that the problem is practically inapproximable. Consider two schedules of jobs of $m$ organizations on a single machine. Each organization has one job; all the jobs are identical. In the first schedule $\sigma_{ord}$ the jobs are scheduled in order: $J_{1}^{(1)}, J_{1}^{(2)}, \dots J_{1}^{(m)}$ and in the second schedule $\sigma_{rev}$ the jobs are scheduled in exactly reverse order: $J_{1}^{(m)}, J_{1}^{(m-1)}, \dots J_{1}^{(1)}$. The relative distance between $\sigma_{ord}$ and $\sigma_{rev}$ tends to 1 (with increasing $m$), so $(\frac{1}{2})$-approximation algorithm does not allow to decide whether $\sigma_{ord}$ is truly better than $\sigma_{rev}$. In other words, $\frac{1}{2})$-approximation algorithm cannot distinguish whether a given order of the priorities of the organizations is more fair then the reverse order.

\begin{theorem}\label{thm::inapprox}
For every $\epsilon > 0$, there is no polynomial algorithm for finding the $(\frac{1}{2} - \epsilon)$-approximation fair schedule, unless $\p = \np$.
\end{theorem}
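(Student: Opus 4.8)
The plan is to reduce from \textsc{SubsetSum}, reusing the construction from the proof of Theorem~\ref{thm::contributionsNp} together with the counting identity established there: for an instance $(S,x)$ one has $\mathcal{S}_{<x}\subsetneq\mathcal{S}_{<(x+1)}$ (equivalently $n_{<x}(S)<n_{<(x+1)}(S)$) if and only if some subset of $S$ sums to exactly $x$. From $(S,x)$ I would build in polynomial time a scheduling instance whose grand coalition contains an \emph{amplifier block}: $m$ organizations $O^{(1)},\dots,O^{(m)}$ that, at a fixed late time moment $t_{0}$, each have a single pending identical job, share one free processor, and are arranged so that every other organization is idle with its utility frozen; here $m$ is a constant with $\frac{m-1}{m}>1-\epsilon$, and the common job size is taken polynomially large so that the amplifier block accounts for a $1-o(1)$ fraction of the total utility $v(\sigma^{*},\mathcal{C})$. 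By the rule of Algorithm~\ref{alg:fairAlgConc} the $m$ amplifier jobs are scheduled one per step in decreasing order of $\phi^{(u)}-\psi^{(u)}$; I would wire the remainder of the instance so that the fair schedule is \emph{unique} and so that this order is $O^{(1)},\dots,O^{(m)}$, giving a schedule $\sigma_{ord}$, when $(S,x)$ is a YES-instance, and the reverse order, giving a schedule $\sigma_{rev}$, otherwise.

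Granting this wiring, the rest is a short distance estimate. The schedules $\sigma_{ord}$ and $\sigma_{rev}$ differ only inside the amplifier block, where the jobs are identical and share one machine, so Equation~\ref{def::fairMetric} gives $\|\vec{\psi}_{ord}\|_{M}=\|\vec{\psi}_{rev}\|_{M}$ and $\|\vec{\psi}_{ord}-\vec{\psi}_{rev}\|_{M}\ge\frac{m-1}{m}\|\vec{\psi}_{ord}\|_{M}$, and, since the amplifier dominates the total utility, the same bound holds up to a lower-order term for the full utility vectors. Now let $\mathcal{A}$ be a polynomial $(\frac{1}{2}-\epsilon)$-approximation algorithm, run it on the instance, and let $\vec{\psi}$ be the utility vector of its output. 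Uniqueness of the fair schedule forces $\sigma^{*}\in\{\sigma_{ord},\sigma_{rev}\}$ and $\|\vec{\psi}-\vec{\psi}^{*}\|_{M}\le(\frac{1}{2}-\epsilon)\|\vec{\psi}^{*}\|_{M}$. If $\vec{\psi}^{*}=\vec{\psi}_{ord}$ then, by the triangle inequality and the distance bound, $\|\vec{\psi}-\vec{\psi}_{rev}\|_{M}\ge\|\vec{\psi}_{ord}-\vec{\psi}_{rev}\|_{M}-\|\vec{\psi}-\vec{\psi}_{ord}\|_{M}>(\frac{1}{2}-\epsilon)\|\vec{\psi}_{rev}\|_{M}$, and symmetrically with $ord$ and $rev$ exchanged. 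Hence exactly one of the two inequalities $\|\vec{\psi}-\vec{\psi}_{ord}\|_{M}\le(\frac{1}{2}-\epsilon)\|\vec{\psi}_{ord}\|_{M}$ and $\|\vec{\psi}-\vec{\psi}_{rev}\|_{M}\le(\frac{1}{2}-\epsilon)\|\vec{\psi}_{rev}\|_{M}$ holds, and whichever one does tells us the \textsc{SubsetSum} answer; since $\vec{\psi}_{ord}$ and $\vec{\psi}_{rev}$ are explicit and the test is polynomial, this would put \textsc{SubsetSum} in $\p$. As $\epsilon>0$ was arbitrary, no such $\mathcal{A}$ exists unless $\p=\np$.

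The main obstacle is the wiring of the contributions, which is delicate because the value of a coalition is itself the output of the fair algorithm on that coalition, so the construction must behave correctly inside \emph{every} subcoalition. I would attach to each $O^{(u)}$ a private pair of gadgets of the type used in the proof of Theorem~\ref{thm::contributionsNp}, one with target $x$ and one with target $x+1$, placed on machines and in time windows disjoint from one another, from the amplifier block, and from the gadgets of the other organizations, taken with non-negative multiplicities $c^{-}_{u},c^{+}_{u}$ satisfying $c^{-}_{u}+c^{+}_{u}=C$ for a fixed constant $C$ and with $c^{+}_{u}$ strictly increasing in $u$. By the additivity axiom the contributions decompose over these independent parts, so the gadget part of $\phi^{(u)}$ equals a $u$-independent term plus $\gamma\,c^{+}_{u}\bigl(n_{<(x+1)}(S)-n_{<x}(S)\bigr)$ for a fixed $\gamma>0$; by the counting identity this second term is strictly increasing in $u$ exactly when $(S,x)$ is a YES-instance and is $u$-independent otherwise. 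A small fixed perturbation that is strictly decreasing in $u$ and lowers $\phi^{(u)}$ without a matching utility gain --- realized by letting $O^{(u)}$ run a $u$-dependent handful of extra jobs on a processor shared within the amplifier block --- breaks the NO-case tie into decreasing order while being dominated by the gadget term in the YES case. It then remains to discharge the bookkeeping that makes this rigorous: that each gadget retains its Theorem~\ref{thm::contributionsNp} behaviour inside every subcoalition (in particular that $b$ is served first after time $2$), that the residual slack in the gadgets' marginal contributions is uniform enough not to overturn the intended order, that all gadgets and perturbations terminate and leave every amplifier organization with the same utility by $t_{0}$, that the amplifier jobs are large enough to dominate $v(\sigma^{*},\mathcal{C})$, and that no two of the values $\phi^{(u)}-\psi^{(u)}$ coincide, so that $\sigma^{*}$ is genuinely unique and the approximation guarantee pins the algorithm's output to one of $\sigma_{ord}$, $\sigma_{rev}$.
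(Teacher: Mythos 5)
Your overall strategy is the same as the paper's: reduce from \textsc{SubsetSum}, use the Theorem~\ref{thm::contributionsNp} gadgetry (in time-disjoint batches, decomposing contributions via additivity) to encode the answer into the relative order of the quantities $\phi^{(u)}-\psi^{(u)}$ among a designated family of organizations, append a final batch of huge identical jobs whose execution order reveals that ordering, and argue that the two candidate fair schedules $\sigma_{ord}$, $\sigma_{rev}$ are at relative distance close to $1$ so that a $(\frac12-\epsilon)$-approximation would separate them. The paper does exactly this, with $\mathcal{B}$ playing the role of your amplifier organizations and the batches \texttt{Bch}($B_i$,$2x$)/\texttt{Bch}($B_i$,$2x{+}1$) playing the role of your $c^{-}_u/c^{+}_u$ gadget multiplicities.

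However, your amplifier block has a genuine flaw. You want the $m$ identical jobs to be executed \emph{sequentially on one free processor} while ``every other organization is idle.'' Under the greedy-schedule constraint these two requirements are incompatible: if the other machines were free, the scheduler would be forced to start the pending amplifier jobs on them in parallel (and then $\sigma_{ord}=\sigma_{rev}$); so to leave exactly one free machine you must keep the remaining $M-1$ machines busy with filler jobs spanning the entire window $[t_0,\,t_0+mP]$. But any job occupying a machine across that window contributes, under $\psi_{sp}$, on the order of $m^2P^2/2$ to the total utility --- the same order as the \emph{entire} amplifier block. Since the approximation guarantee is normalized by $\|\vec{\psi}^{*}\|_{M}=v(\sigma^{*},\mathcal{C})$, which includes the filler, you get $\|\vec{\psi}_{ord}-\vec{\psi}_{rev}\|_{M}/\|\vec{\psi}^{*}\|_{M}\lesssim 1/M$ rather than $1-o(1)$; for any instance with at least a handful of machines this is below $\tfrac12-\epsilon$, so $\sigma_{rev}$ is itself an admissible approximation of $\sigma_{ord}$ and the reduction yields nothing. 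The paper avoids this by having \emph{each} amplifier organization release $\|\orgs\|$ jobs of size $XH$ in the last batch, so the final batch saturates all machines with no filler and the ratio genuinely tends to $1$. With that repair (and the substantial bookkeeping you defer --- establishing the initial gaps in $\phi^{(u)}-\psi^{(u)}$, keeping the per-batch residuals uniformly below the $XL$-scale signal, and ensuring the $a$-job priority inside every subcoalition --- all of which the paper discharges explicitly), your argument becomes the paper's proof.
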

\begin{proof}
Intuitively, we divide time in $(\|\mathcal{B}\|^2 + 3)$ independent batches. The jobs in the last batch are significantly larger than all the previous ones.
We construct the jobs in all first $(\|\mathcal{B}\|^2 + 2)$ batches so that the order of execution of the jobs in the last batch depends on whether there exists a subset $S' \subset S$ such that $\sum_{x_i \in S'}x_i = x$. If the subset does not exist the organizations are prioritized in some predefined order $\sigma_{ord}$; otherwise, the order is reversed $\sigma_{rev}$. The sizes of the jobs in the last batch are so large that they dominate the values of the utilities of the organizations. The relative distance between the utilities in $\sigma_{ord}$ and in $\sigma_{rev}$ is $(1-\epsilon)$ so any $(\frac{1}{2} - \epsilon)$-approximation algorithm $\mathcal{A}$ would allow to infer the true fair schedule for such constructed instance, and so the answer to the initial $\textsc{SubsetSum}$ problem. The precise construction is described below.

We show that if there is an $(\frac{1}{2} - \epsilon)$-approximation algorithm $\mathcal{A}$ for calculating the vector of the contributions, then we would be able to use $\mathcal{A}$ for solving the $\textsc{SubsetSum}$ problem (which is NP-hard). This proof is similar in a spirit to the proof of Theorem~\ref{thm::contributionsNp}. Let $I$ be an instance of the $\textsc{SubsetSum}$ problem, in which we are given a set $S = \{x_1, x_2, \dots, x_k\}$ of $k$ integers and a value $x$. In the $\textsc{SubsetSum}$ problem we ask for the existence of a subset $S' \subset S$ such that $\sum_{x_i \in S'}x_i = x$; we will call the subsets $S' \subset S$ such that $\sum_{x_i \in S'}x_i = x$ the $x$-sum subsets.

From $I$ we construct the instance of the problem of calculating the vector of contributions in the following way. We set $\orgs = \orgs_{S} \cup \{a\} \cup \mathcal{B}$ to be the set of all organizations where  $\orgs_{S} = \{O_1, \dots, O_k\}$ ($\|\orgs_{S}\| = k$) is the set of the organizations corresponding to the appropriate elements of $S$ and $\{a\} \cup \mathcal{B}$, where $\mathcal{B} = \{B_1, \dots, B_{\ell}\}$ ($\ell = \|\mathcal{B}\|$ will be defined afterwards; intuitively $\ell \gg k$), is the set of dummy organizations needed for our construction.

We divide the time into $(\|\mathcal{B}\|^2 + 3)$ independent batches. The batches are constructed in such a way that the $(j+1)$-th batch starts after the time in which all the jobs released in $j$-th batch are completed in every coalition (thus, the duration of the batch can be just the maximum release time plus the sum of the processing times of the jobs released in this batch). As the result, the contribution $\phi^{(u)}$ of each organization $O^{(u)}$ is the sum of its contributions in the all $(\|\mathcal{B}\|^2 + 3)$ batches. For the sake of the clarity of the presentation we assume that time moments in each batch are counted from 0.

We start from the following observation: if the sum of the processing times of the jobs in a batch is equal to $p_{sum}$, then the contribution of each organization can be upper bounded by $p_{sum}^2$. This observation follows from the fact that any organization, when joining a coalition, cannot decrease the completion time of any job by more than $p_{sum}$. As the total number of unit-size parts of the jobs is also $p_{sum}$, we infer that the joining organization cannot increase the value of the coalition by more than $p_{sum}^2$. The second observation is the following: if the joining organization causes decrease of the completion time of the task with processing time $p$, then its contribution is at least equal to $\frac{p}{\|\orgs\|!}$ (as it must decrease the start time of the job by at least one time unit in at least one coalition).

Let $x_{tot} = \sum_{j = 1}^{k}x_{j}$. In our construction we use 4 large numbers $L, XL, H$ and $XH$, where $L = (\|\orgs\| + 1 + 4\|\mathcal{B}\|^2x_{tot}^{2})\cdot \|\orgs\|!$; $XL = (\orgs! \cdot L \cdot \|\orgs\|(\|\orgs\| + 1))^2 + \orgs!4\|\mathcal{B}\|^2x_{tot}^{2} + 1$, $H = \|\mathcal{B}\|^2(2\|\orgs\|(1 + x_{tot}) + 2x + XL)^2 + 1$ and $XH$ is a very large number that will be defined afterwards. Intuitively: $XH \gg H \gg XL \gg L \gg x_{tot}$.

%We start from describing the jobs in the last batch. In $(\|\mathcal{B}\|^2 + 3)$-th batch only the organizations from $\mathcal{B}$ release their jobs. Each such organization releases $\|\mathcal{O}\|$ jobs in time $0$, each of size $XH$.

In the first batch only the organizations from $\mathcal{B}$ release their jobs. The $i$-th organization from $\mathcal{B}$ releases $2i$ jobs in time $0$, each of size $L$. This construction is used to ensure that after the first batch the $i$-th organization from $\mathcal{B}$ has the difference $(\phi^{(i)} - \psi^{(i)})$ greater than the difference $(\phi^{(i+1)} - \psi^{(i+1)})$ of the $(i+1)$-th organization from $\mathcal{B}$ of at least $\frac{L}{\|\orgs\|!} = (\|\mathcal{O}\| + 1 + 4\|\mathcal{B}\|^2x_{tot}^{2})$ and of at most $p_{sum}^2 = (L \cdot \frac{\|\mathcal{B}\|(\|\mathcal{B}\| + 1)}{2})^2 < \frac{XL}{\orgs!} - 4\|\mathcal{B}\|^2x_{tot}^{2}$.

In the second batch, at time $0$, all the organizations except for $a$ release 2 jobs, each of size $H$. This construction is used to ensure that after the second batch the contribution (and so the the difference ($\phi - \psi$)) of the organization $a$ is large (at least equal to $H$, as $a$ joining any coalition causes the job of size $H$ to be scheduled at least one time unit earlier). Since in each of the next $\|\mathcal{B}\|^2$ batches the total size of the released jobs will be lower than $(2\|\orgs\|(1 + x_{tot}) + 2x + XL)$, we know that in each of the next $\|\mathcal{B}\|^2$ batches the jobs of $a$ will be prioritized over the jobs of the other organizations.

Each of the next $\|\mathcal{B}\|^2$ batches is one of the $2\|\mathcal{B}\|$ different types. For the organization $B_i$ ($1 \leq i \leq \|\mathcal{B}\|$) there is exactly $i$ batches of type \texttt{Bch}($B_i$, $2x+1$) and $(\|\mathcal{B}\| - i)$ batches of type \texttt{Bch}($B_i$, $2x$). The order of these $\|\mathcal{B}\|^2$ batches can be arbitrary.

The batches \texttt{Bch}($B_i$, $2x$) and \texttt{Bch}($B_i$, $2x+1$) are similar. The only difference is in the jobs of the organization $a$. In the batch \texttt{Bch}($B_i$, $2x$) the organization $a$ has two jobs $J^{(a)}_1$ and $J^{(a)}_2$, with release times $r^{(a)}_1 = 0$ and $r^{(a)}_2 = 2x$ and processing times $p^{(a)}_1 = 2x+1$ and $p^{(a)}_2 = XL$. 
In the batch \texttt{Bch}($B_i$, $2x+1$) the organization $a$ has two jobs $J^{(a)}_1$ and $J^{(a)}_2$, with release times $r^{(a)}_1 = 0$ and $r^{(a)}_2 = 2x+1$ and processing times $p^{(a)}_1 = 2x+2$ and $p^{(a)}_2 = XL$. All other organizations have the same jobs in batches \texttt{Bch}($B_i$, $2x$) and \texttt{Bch}($B_i$, $2x+1$). The organization $B_i$ has no jobs and all the other organizations from $\mathcal{B}$ release a single job of size $(2x_{tot} + 2)$ in time $0$. The $j$-th organization from $\orgs_{S}$ has two jobs $J^{(j)}_1$ and $J^{(j)}_2$, with release times $r^{(j)}_1 = 0$ and $r^{(j)}_2 = 1$ and processing times $p^{(j)}_1 = 2x_{tot} + 1$ and $p^{(j)}_2 = 2x_j$.

Finally, in the last $(\|\mathcal{B}\|^2 + 3)$-th batch only the organizations from $\mathcal{B}$ release their jobs. Each such organization releases $\|\mathcal{O}\|$ jobs in time $0$, each of size $XH$.

\begin{figure}[h!p]
  \begin{center}
	\hspace{-7cm}
    \includegraphics[trim=5cm 14cm 0cm 5cm, scale=0.65]{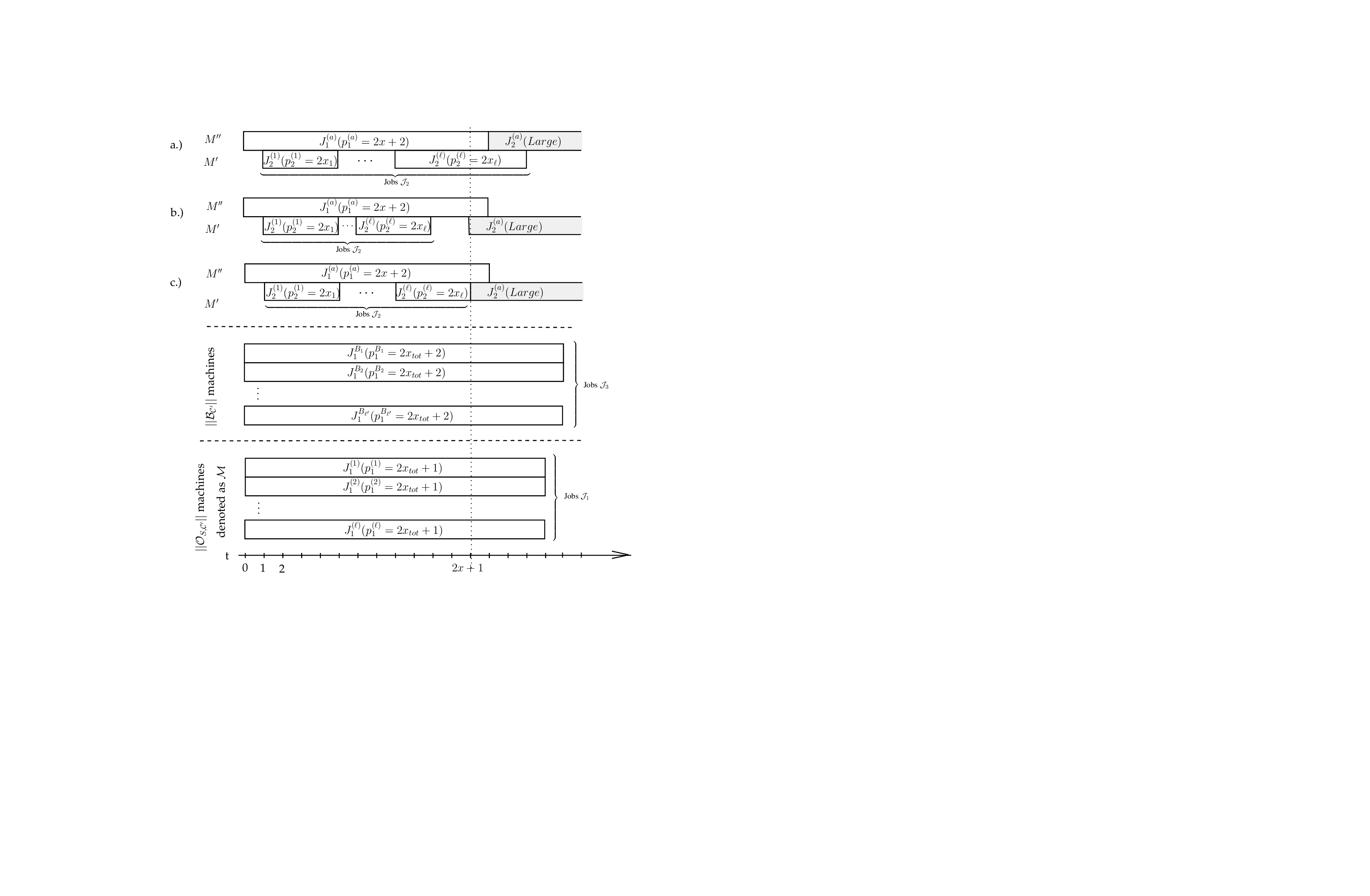}
  \end{center}
  %\vspace{-9.0cm}
  \caption[Figure for inapproximability proof]{The schedule for the coalition $\mathcal{C}'$ such that $B_i \in \mathcal{C}'$ and $a \in \mathcal{C}'$ in batch \texttt{Bch}($B_i$, $2x+1$), for 3 cases: a)~$\sum_{x_i: O_i \in \orgs_{S, C'}}x_i > x$, b)~$\sum_{x_i: O_i \in \orgs_{S, C'}}x_i < x$, c)~$\sum_{x_i: O_i \in \orgs_{S, C'}}x_i = x$. 
We compare $B_i$'s contribution $\phi$ on this schedule to schedule \texttt{Bch}($B_i$, $2x$) 
(not shown; the only differences are that $p_1^{(a)}=2x+1$ and $r_2^{(a)}=2x$).
Other organizations $B_j \neq B_i$ have utility equal to contribution in all cases considered here.
As $B_i$ has no jobs, it contributes only a single machine (corresponding to $M'$). 
Thanks to $M'$, the small jobs $J_2^{(i)}$ execute at most $2x_{tot}$ earlier (if there is no machine $M'$, these jobs are executed at $\mathcal{M}$).
The total size of these small jobs is $2x_{tot}$. Regarding small jobs, the resulting contribution of $B_i$ to $\mathcal{C}'$ is bounded by $4x_{tot}^{2}$.

\hspace{2em}
In case a) $M'$ does not decrease the start time of the large job $J_2^{(a)}$;
the same happens in batch \texttt{Bch}($B_i$, $2x$). 
In case b) $M'$ speeds up $J_2^{(a)}$ by 1; the same happens in batch \texttt{Bch}($B_i$, $2x$). 
In case c) $M'$ also speeds up $J_2^{(a)}$ by 1; however, in batch \texttt{Bch}($B_i$, $2x$) $M'$ does not decrease $J_2^{(a)}$'s start time ($J_2^{(a)}$ is always started at $(2x+1)$).
To summarize, $B_i$ contribution to $\mathcal{C}'$
in both a) and b)
differs by at most $4x_{tot}^{2}$ between \texttt{Bch}($B_i$, $2x$) and \texttt{Bch}($B_i$, $2x+1$).
In contrast, in c)
the contribution in \texttt{Bch}($B_i$, $2x+1$) is greater by at least $XL - 4x_{tot}^{2}$ compared to the contribution in \texttt{Bch}($B_i$, $2x$). 

\hspace{2em}
As a consequence, considering $B_i$'s contribution to all coalitions,
if there exists an $x$-sum subset $S' \subset S$ (case c), then the contribution of $B_{i}$ in \texttt{Bch}($B_i$, $2x+1$) is by at least $\frac{XL}{\|\orgs\|!} - 4x_{tot}^{2}$ greater than in \texttt{Bch}($B_i$, $2x$);  if there is no such an $x$-sum subset, then the contribution of $B_{i}$ in \texttt{Bch}($B_i$,~$2x+1$) and in \texttt{Bch}($B_i$,~$2x$) differ by no more than $4x_{tot}^{2}$.}
  \label{fig:nonApprox}
\end{figure}

Now let us compare the schedules for the batches \texttt{Bch}($B_i$, $2x$) and \texttt{Bch}($B_i$, $2x+1$) (see Figure~\ref{fig:nonApprox}).
Let us consider a schedule for a coalition $\mathcal{C}'$. Let  $\orgs_{S, \mathcal{C}'} = \orgs_{S} \cap \mathcal{C}'$; let $\mathcal{B}_{\mathcal{C}'} = \mathcal{B} \cap \mathcal{C}' \setminus \{B_i\}$.
Let $\mathcal{J}_1$ denote the set of $\|\orgs_{S, \mathcal{C}'}\|$ jobs of sizes $2x_{tot} + 1$ (these are the first jobs of the organizations from $\orgs_{S, \mathcal{C}'}$). Let $\mathcal{J}_2$ denote the set of $\|\orgs_{S, \mathcal{C}'}\|$ jobs of sizes from $S$ (the second jobs of the organizations from $\orgs_{S, \mathcal{C}'}$). Let $\mathcal{J}_3$ denote the $\|\mathcal{B}_{\mathcal{C}'}\|$ jobs of the organizations from $\mathcal{B}_{\mathcal{C}'}$ of sizes $2x_{tot} + 2$ (the single jobs of these organizations).

If $\sum_{x_i: O_i \in \orgs_{S, C'}}x_i > x$ or $\sum_{x_i: O_i \in \orgs_{S, C'}}x_i < x$ the schedules for any $\mathcal{C}'$ in batches \texttt{Bch}($B_i$, $2x$) and \texttt{Bch}($B_i$, $2x+1$) looks similarly.
In time $0$, $\|\orgs_{S, \mathcal{C}'}\|$ machines will schedule the $\|\orgs_{S, \mathcal{C}'}\|$ jobs from $\mathcal{J}_1$ (let us denote these machines as $\mathcal{M}$) and $\|\mathcal{B}_{\mathcal{C}'}\|$ machines will schedule the $\|\mathcal{B}_{\mathcal{C}'}\|$ jobs from $\mathcal{J}_3$. If $B_i \notin \mathcal{C}'$ then the jobs from $\mathcal{J}_2$ will be scheduled on the machines from $\mathcal{M}$ just after the jobs from $\mathcal{J}_1$. If $B_i \in \mathcal{C}'$ and $a \notin \mathcal{C}'$, then the coalition $\mathcal{C}'$ has $(\|\orgs_{S, \mathcal{C}'}\| + \|\mathcal{B}_{\mathcal{C}'}\| + 1)$ machines; one machine will execute the jobs from $\mathcal{J}_2$. If $B_i \in \mathcal{C}'$ and $a \in \mathcal{C}'$ then the coalition $\mathcal{C}'$ has $(\|\orgs_{S, \mathcal{C}'}\| + \|\mathcal{B}_{\mathcal{C}'}\| + 2)$ machines. One machine (denoted as $M'$) will execute the jobs from $\mathcal{J}_2$ and one other machine (denoted as $M''$) will execute the job $J^{(a)}_1$. Now, if $\sum_{x_i: O_i \in \orgs_{S, C'}}x_i < x$ then $J^{(a)}_2$ will be scheduled on $M'$; otherwise on $M''$ (this follows from the construction in the second batch -- we recall that the jobs of $a$ should be prioritized). Thus, as explained in Figure~\ref{fig:nonApprox}, if $\sum_{x_i: O_i \in \orgs_{S, C'}}x_i > x$ or $\sum_{x_i: O_i \in \orgs_{S, C'}}x_i < x$ the contribution and the utility of each organization from $\mathcal{B}$ in two batches \texttt{Bch}($B_i$, $2x$) and \texttt{Bch}($B_i$, $2x+1$) differ by at most $4x_{tot}^{2}$.

If $\sum_{x_i: O_i \in \orgs_{S, C'}}x_i = x$, then the schedules for the cases: (i) $B_i \notin \mathcal{C}'$ (ii) ($B_i \in \mathcal{C}'$ and $a \notin \mathcal{C}'$) remain the same as in case $\sum_{i: O_i \in \orgs_{S, C'}}x_i \neq x$. For the last case ($B_i \in \mathcal{C}'$ and $a \in \mathcal{C}'$) the jobs from $\mathcal{J}_1$, from $\mathcal{J}_2$ and $J^{(a)}_1$ are scheduled in the same way as previously. However, the job $J^{(a)}_2$ will be scheduled in \texttt{Bch}($B_i$, $2x+1$) on machine $M'$ (in the moment it is released) and in \texttt{Bch}($B_i$, $2x$) on machine $M'$ or $M''$ (one time unit later than it was released). As explained in Figure~\ref{fig:nonApprox}, if there exists an $x$-sum subset $S' \subset S$, then the contribution of $B_i$ in \texttt{Bch}($B_i$, $2x+1$) will be greater by at least of $\frac{XL}{\orgs!} - 4x_{tot}^{2}$ than in \texttt{Bch}($B_i$, $2x$).

As the result, if there does not exist an $x$-sum subset $S' \subset S$, then the difference ($\phi^{(i)} - \psi^{(i)}$) for the $i$-th organization from $\mathcal{B}$ will be greater than the difference ($\phi^{(i+1)} - \psi^{(i+1)}$) for the $(i+1)$-th organization from $\mathcal{B}$ by at least $(\|\mathcal{O}\| + 1)$ (from the construction in the first batch the difference $(\phi^{(i)} - \psi^{(i)})$ was greater than ($\phi^{(i+1)} - \psi^{(i+1)}$) by at least $(\|\mathcal{O}\| + 1 + 4\|\mathcal{B}\|^2x_{tot}^{2})$, and as explained in Figure~\ref{fig:nonApprox} the difference $(\phi^{(i+1)} - \psi^{(i+1)} - \phi^{(i)} + \psi^{(i)})$ could change by at most $4\|\mathcal{B}\|^2x_{tot}^{2}$). Otherwise, the difference ($\phi^{(i)} - \psi^{(i)}$) for the $i$-th organization will be lower than for the $(i+1)$-th organization (as there are more batches of type \texttt{Bch}($B_{i+1}$, $2x+1$) than of type \texttt{Bch}($B_i$, $2x+1$)).

Thus, if there does not exist an $x$-sum subset $S' \subset S$, then in the last batch the jobs of $B_1$ will be scheduled first, than the jobs of $B_2$, and so on -- let us denote such schedule as $\sigma_{ord}$. On the other hand, if there exists an $x$-sum subset $S' \subset S$, the jobs in the last batch will be scheduled in the exactly reverse order -- such schedule will be denoted as $\sigma_{rev}$.

Now, let us assess the distance between the vector of utilities in case of two schedules $\sigma_{ord}$ and $\sigma_{rev}$. Let us assume that $\|\mathcal{B}\|$ is even. Every job of the organization $B_{i}$ ($1 \leq i \leq \frac{\|\mathcal{B}\|}{2}$) in the last batch is started $XH(\|\mathcal{B}\|- 2i + 1)$ time units earlier in $\sigma_{ord}$ than in $\sigma_{rev}$. The jobs of the organization $B_{(\|\mathcal{B}\| + 1 - i)}$ ($1 \leq i \leq \frac{\|\mathcal{B}\|}{2}$) are scheduled $XH(\|\mathcal{B}\|- 2i + 1)$ time units later in $\sigma_{ord}$ than in $\sigma_{rev}$. Since each such job consists of $XH$ unit-size elements, the distance between the vector of utilities for $\sigma_{ord}$ and $\sigma_{rev}$, denoted as $\Delta\psi$, can be lower bounded by:
\begin{align*}
\Delta\psi \geq 2\|\orgs\|\sum_{i=1}^{\|\mathcal{B}\|/2} (2i - 1)XH^2
=\|\orgs\|\|\mathcal{B}\|\frac{(1 + 1 + 2\frac{\|\mathcal{B}\|}{2} -2)}{2}XH^2 = \frac{1}{2}\|\orgs\| \|\mathcal{B}\|^2XH^2\textrm{.}
\end{align*}
\noindent Now, we can define $XH$ to be the total size of the all except the last batch times $\frac{4}{\epsilon}$.
Below we show how to bound the total utility $\psi_{tot}$ of the true fair schedule ($\sigma_{ord}$ or $\sigma_{rev}$) in the time $t$ when all the jobs are completed. Each unit size part of the job completed in time $t$ contributes to the utility the value 1.  Each unit size part of the job executed in time $t-1$ is worth 2, and so on. Since the jobs in the last batch are executed on $\|\orgs\|$ machines and the duration of the batch is equal to $\|\mathcal{B}\|XH$, the utility of the jobs from the last batch is equal to $\sum_{i=1}^{\|\mathcal{B}\|XH} i$. The jobs in all previous batches are started no earlier than in $t - \|\mathcal{B}\|XH - \frac{\epsilon}{4}XH$. The duration of the all but the last batch can be upper bounded by $\frac{\epsilon}{4}XH$. There are $\|\orgs\|$ machines, so the utility of the jobs from the all but the last batch can be upper bounded by $(\|\mathcal{B}\|XH + \frac{\epsilon}{4}XH)\frac{\epsilon}{4}XH$. Thus we get the following bound on $\psi_{tot}$:
\begin{align*}
\psi_{tot} &< \|\orgs\|\left(\sum_{i=1}^{\|\mathcal{B}\|XH} i + \left(\|\mathcal{B}\|XH + \frac{\epsilon}{4}XH\left)\right(\frac{\epsilon}{4}XH\right)\right) \\
&\leq \|\orgs\|\left(\frac{1 + \|\mathcal{B}\|XH}{2}\|\mathcal{B}\|XH + \frac{\epsilon}{4}\|\mathcal{B}\|XH^2 + \frac{\epsilon}{16}^2XH^2 \right) \\ 
&\leq \|\orgs\|\left(\frac{1}{2}\left(1 + \|\mathcal{B}\|XH \right)^2 + \frac{\epsilon}{4}\|\mathcal{B}\|^2XH^2 \right) \\
&\leq \|\orgs\| \|\mathcal{B}\|^2XH^2 \left(\frac{1}{2} \cdot \left(\frac{1 + \|\mathcal{B}\|}{\|\mathcal{B}\|}\right)^2 + \frac{\epsilon}{4}\right) \textrm{.}
\end{align*}

We can chose the size $\|\mathcal{B}\|$ so that $\left(\frac{1+\|\mathcal{B}\|}{\|\mathcal{B}\|}\right)^2 < 1 + \frac{\epsilon}{2}$. As the result we have:
\begin{align*}
\Delta\psi / \psi_{tot} >  \frac{1}{2} / \frac{1}{2}\left(\left(\frac{1 + \|\mathcal{B}\|}{\|\mathcal{B}\|}\right)^2 + \frac{\epsilon}{2}\right) > \frac{1}{1+\epsilon} > 1 - \epsilon
\end{align*}

Finally let us assume that there exists  $(\frac{1}{2} - \epsilon)$-approximation algorithm $\mathcal{A}$ that returns the schedule $\sigma$ for our instance. Now, if $\sigma$ is closer to $\sigma_{ord}$ than 
to $\sigma_{rev}$, we can infer that $\sigma_{ord}$ is a true fair solution to our instance
(and so the answer to the initial \textsc{SubsetSum} question is ``yes''). Otherwise, $\sigma_{rev}$ is a true solution (and the answer to the \textsc{SubsetSum} problem is ``no''). This completes the proof.

\end{proof}

\subsection{Special case: unit-size jobs}\label{sec::unitSize}

\begin{figure}[th!]
\begin{algorithm}[H]
   \footnotesize
   \SetKwFunction{Distance}{Distance}
   \SetKwInput{KwNotation}{Notation}
   \SetKwFunction{ReleaseJob}{ReleaseJob}
   \SetKwFunction{Prepare}{Prepare}
   \SetKwFunction{FairAlgorithm}{FairAlgorithm}
   \SetKwFunction{UpdateVals}{UpdateVals}
   \SetKwFunction{PropoagateVals}{PropoagateVals}
   \SetKwFunction{FreeMachine}{FreeMachine}
   \SetKwFunction{SelectAndSchedule}{SelectAndSchedule}
   \SetKwBlock{Block}
   \SetAlCapFnt{\footnotesize}
   \KwNotation{\\
	$\epsilon$, $\lambda$ --- as in Theorem~\ref{thm::approxRandomized}}

	\vspace{0.3cm}\Prepare{$\mathcal{C}$}:
	\Block{
		$N \leftarrow \lceil \frac{\|\mathcal{C}\|^2}{\epsilon^2}\ln\left(\frac{\|\mathcal{C}\|}{1 - \lambda}\right) \rceil$\;
		$\Gamma \leftarrow$ generate $N$ random orderings (permutations) of the set of all organizations (with replacement)\;
		$Subs \leftarrow Subs' \leftarrow \emptyset$ \;
		\ForEach{$\prec \in \Gamma$}
		{
			\For{$u \leftarrow 1$ \KwTo $\|C\|$}
			{
				$\mathcal{C'} \leftarrow \{O^{(i)}: O^{(i)} \prec O^{(u)}\}$ \;
				$Subs \leftarrow Subs \cup \{\mathcal{C'}\}$; $Subs' \leftarrow Subs' \cup \{\mathcal{C'} \cup \{O^{(u)}\}\}$ \;
			}
		}
	}

        \vspace{0.3cm}\ReleaseJob{$O^{(u)}$, $J$}:
	\Block{
		\For{$\mathcal{C}' \in Subs \cup Subs': O^{(u)} \in \mathcal{C}'$}
		{
			$\mathrm{jobs}[\mathcal{C}'][O^{(u)}].\mathrm{push}(J)$
		}
	}

	\vspace{0.3cm}\SelectAndSchedule{$\mathcal{C}$, $t$}:
	\Block{
		$u \leftarrow \mathrm{argmin}_{O^{(u)}}(\psi[\mathcal{C}][O^{(u)}] - \phi[\mathcal{C}][O^{(u)}])$ \;
		$\sigma[\mathcal{C}] \leftarrow \sigma[\mathcal{C}] \cup \{(\mathrm{jobs}[\mathcal{C}][u].\mathrm{first}, t)\}$\;
		$\mathrm{finPerOrg}[O^{(u)}] \leftarrow \mathrm{finPerOrg}[O^{(u)}] + 1$\;
		$\phi[O^{(u)}] \leftarrow \phi[O^{(u)}] + 1$\;
	}

	\vspace{0.3cm}\FairAlgorithm{$\mathcal{C}$}:
	\Block{
		\Prepare{$\mathcal{C}$} \;
		\ForEach{time moment $t$}
		{
			\ForEach{job $J_i^{(u)}$: $r_i^{(u)} = t$}
			{
				\ReleaseJob{$O_i^{(u)}, J_i^{(u)}$}\;
			}
			\ForEach{$\mathcal{C'} \subset Subs \cup Subs'$}
			{\nllabel{algline::line3}
				$\mathrm{v}[\mathcal{C'}] \leftarrow \mathrm{v}[\mathcal{C'}] + \mathrm{finPerCoal}[\mathcal{C'}]$ \;
				$n \leftarrow \min(\sum_{O^{(u)} \in \mathcal{C'}} m^{(u)}, \|\mathrm{jobs}[\mathcal{C}][O^{(u)}]\|)$ \;
				remove first $n$ jobs from $\mathrm{jobs}[\mathcal{C}][O^{(u)}]$ \;
				$\mathrm{finPerCoal}[\mathcal{C'}] \leftarrow \mathrm{finPerCoal}[\mathcal{C'}] + n$ \;
				$\mathrm{v}[\mathcal{C'}] \leftarrow \mathrm{v}[\mathcal{C'}] + n$ \;
			}

			\ForEach{$O^{(u)} \in \mathcal{C}$}
			{\nllabel{algline::line4}
				$\psi[O^{(u)}] \leftarrow \psi[O^{(u)}] + \mathrm{finPerOrg}[O^{(u)}]$\;
				$\phi[O^{(u)}] \leftarrow 0$\;
				\ForEach{$\mathcal{C}' \in Subs: O^{(u)} \notin \mathcal{C}'$}
				{\nllabel{algline::line5}
					$\mathrm{marg}\_\phi \leftarrow \mathrm{v}[\mathcal{C'} \cup \{O^{(u)}\}] - \mathrm{v}[\mathcal{C'}]$ \;
					$\phi[O^{(u)}] \leftarrow \phi[O^{(u)}] + \mathrm{marg}\_\phi \cdot \frac{1}{N}$\;
				}
			}

			\While{\FreeMachine{$\sigma[\mathcal{C}]$, $t$}}
			{
				\SelectAndSchedule{$\mathcal{C}, t$}\;
			}
		}
	}

\end{algorithm}
\caption{Algorithm~\textsc{Rand}: a fair algorithm for the specific utility function $\psi_{sp}$ and for unit-size jobs.}\label{alg:fairAlgUnitSize}
\end{figure}

In case when the jobs are unit-size the problem has additional properties that allow us to construct an efficient approximation (however, the complexity of this special case is open). However, the results in this section do not generalize to related or unrelated processors. For unit-size jobs, the value of each coalition $v(\mathcal{C})$ does not depend on the schedule:

\begin{proposition}\label{prop::independentOnAlg}
For any two greedy algorithms $\mathcal{A}_1$ and $\mathcal{A}_2$, for each coalition $\mathcal{C}$ and each time moment $t$, the values of the coalitions $v(\mathcal{A}_1, \mathcal{C}, t)$ and $v(\mathcal{A}_2, \mathcal{C}, t)$ are equal, provided all jobs are unit-size.
\end{proposition}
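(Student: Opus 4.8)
The plan is to exploit the fact that, for unit-size jobs, the value $v(\mathcal{A},\mathcal{C},t)$ under the utility $\psi_{sp}$ (Equation~\ref{def::fairMetric}) is a sum over the individual unit jobs that depends only on their \emph{start times}, and then to show that the multiset of start times produced by any greedy algorithm is the same. Specializing Equation~\ref{def::fairMetric} to a job with $p=1$ started at $s$, the term $\min(p,t-s)\bigl(t-\tfrac{s+\min(s+p-1,t-1)}{2}\bigr)$ equals $(t-s)$ when $s\le t-1$ and $0$ when $s=t$. Writing the schedule $\sigma$ produced by $\mathcal{A}$ for $\mathcal{C}$ as the union, over all organizations of $\mathcal{C}$, of their scheduled jobs, this gives
\begin{align*}
v(\mathcal{A},\mathcal{C},t)=\sum_{O^{(u)}\in\mathcal{C}}\psi_{sp}(\sigma,O^{(u)},t)=\sum_{\tau=0}^{t-1} n_\tau\cdot(t-\tau),
\end{align*}
where $n_\tau$ is the number of jobs of organizations in $\mathcal{C}$ started at time moment $\tau$ under $\mathcal{A}$. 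Hence it suffices to prove that each $n_\tau$ is independent of the particular greedy algorithm.

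I would prove this by induction on $\tau$. Let $m=\sum_{O^{(u)}\in\mathcal{C}}m^{(u)}$ be the number of machines in the coalition, let $R_{\le\tau}$ be the number of jobs of $\mathcal{C}$ with release time at most $\tau$ (fixed by the instance), and let $S_{<\tau}=\sum_{\tau'<\tau}n_{\tau'}$. Since all jobs are unit-size, every machine busy at time $\tau-1$ is free at time $\tau$, so all $m$ machines are available at the start of step $\tau$. The key observation is that the number of jobs of $\mathcal{C}$ that can be scheduled during step $\tau$ equals $W_\tau:=R_{\le\tau}-S_{<\tau}$: by the FIFO rule within each organization together with the fact that an organization presents (hence releases) its jobs in non-decreasing order of release time, the released-but-not-yet-started jobs of a single organization form a contiguous suffix of its released jobs, so that organization's count is simply (its jobs released by $\tau$) minus (its jobs started before $\tau$), and summing over organizations the per-organization breakdown of $S_{<\tau}$ cancels. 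A greedy algorithm keeps assigning such jobs to free machines until one side runs out, so it starts exactly $n_\tau=\min(m,\,R_{\le\tau}-S_{<\tau})$ of them. The right-hand side depends only on the instance and on $S_{<\tau}$, which is algorithm-independent by the induction hypothesis; hence $n_\tau$, and therefore $S_{\le\tau}=S_{<\tau}+n_\tau$, is the same for any two greedy algorithms $\mathcal{A}_1$ and $\mathcal{A}_2$, closing the induction.

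To finish, I would combine the two parts: the $n_\tau$ are algorithm-independent, so $v(\mathcal{A}_1,\mathcal{C},t)=\sum_{\tau<t}n_\tau(t-\tau)=v(\mathcal{A}_2,\mathcal{C},t)$ for every $t$. (The argument only uses that $v$ is a sum, over the scheduled unit jobs, of a function of their start times, so it is the structural form of $\psi_{sp}$ rather than its exact coefficients that matters.) The step that needs care is the inductive claim: one must argue that the \emph{aggregate} number of schedulable jobs at time $\tau$ is insensitive to how the already-started jobs are distributed among organizations, and that committing a greedy step decreases this aggregate by exactly one (no FIFO-successor is ``unblocked'' within the step) — both facts rest on the FIFO discipline combined with non-decreasing release times inside each organization. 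Everything else is routine bookkeeping.
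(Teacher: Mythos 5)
Your proof is correct and follows essentially the same route as the paper's: an induction on time showing that, because unit-size jobs free all machines at every step, the number of waiting jobs (and hence the number started) at each time moment is the same for any greedy algorithm, after which the value follows since $\psi_{sp}$ depends only on the multiset of start times and not on job ownership. You simply spell out the details the paper leaves implicit (the $\min(m, R_{\le\tau}-S_{<\tau})$ count and the insensitivity to the per-organization breakdown under FIFO), which is a welcome tightening rather than a different argument.
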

\begin{proof}
We prove the following stronger thesis: for every time moment $t$ any two greedy algorithms $\mathcal{A}_1$ and $\mathcal{A}_2$ schedule the same number of the jobs till $t$. We prove this thesis by induction. The base step for $t=0$ is trivial. Having the thesis proven for $(t-1)$ and, thus knowing that in $t$ in both schedules there is the same number of the jobs waiting for execution (here we use the fact that the jobs are unit-size), we infer that in $t$ the two algorithms schedule the same number of the jobs. Since the value of the coalition does not take into account the owner of the job,  we get the thesis for $t$. This completes the proof.
\end{proof}

As the result, we can use the randomized approximation algorithm for the scheduling problem restricted to unit-size jobs (Algorithm~\textsc{Rand} from Figure~\ref{alg:fairAlgUnitSize}). The algorithm is inspired by the randomized approximation algorithm for computing the Shapley value presented by Liben-Nowell et al~\cite{conf/cocoon/Liben-NowellSWW12}. However, in our case, the game is not supermodular (which is shown in Proposition~\ref{prop:notSupermodular} below), and so we have to adapt the algorithm and thus obtain different approximation bounds.

\begin{proposition}\label{prop:notSupermodular}
In case of unit-size jobs the cooperation game in which the value of the coalition $\mathcal{C}$ is defined by  $v(\mathcal{C}) = \sum_{O^{(u)} \in \mathcal{C}} \psi(O^{(u)})$ is not supermodular.
\end{proposition}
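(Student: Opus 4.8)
My plan is to refute supermodularity by exhibiting a single small instance with unit-size jobs, since a constant-size counterexample is all that is needed: it suffices to produce two coalitions $S \subseteq T$ and an organization $O \notin T$ with $v(S \cup \{O\}) - v(S) > v(T \cup \{O\}) - v(T)$ (equivalently, $v(S) + v(T) > v(S \cup T) + v(S \cap T)$). First I would record that by Proposition~\ref{prop::independentOnAlg} the value $v(\mathcal{C})$ is well defined independently of which greedy algorithm produces the schedule, and that for unit-size jobs, specializing the formula for $\psi_{sp}$ to $p=1$, $v(\mathcal{C},t) = \sum (t-s)$ where the sum runs over the start times $s \le t-1$ of the jobs scheduled in any greedy schedule for $\mathcal{C}$; this makes every number below unambiguous.

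For the instance I would take three organizations $O^{(1)}, O^{(2)}, O^{(3)}$, each owning exactly one machine, where $O^{(1)}$ releases two unit jobs at time $0$ and $O^{(2)}, O^{(3)}$ release no jobs (organizations without jobs are admissible --- compare the dummy organization $a$ in the proof of Theorem~\ref{thm::contributionsNp}). Fixing an observation time $t \ge 2$: with only the machine of $O^{(1)}$ its two jobs must start at times $0$ and $1$; adding one more machine (either $O^{(2)}$ or $O^{(3)}$) lets both jobs start at time $0$; adding a third machine changes nothing, there being only two jobs. I would then simply tally, at $t=2$, $v(\{O^{(1)}\}) = 2+1 = 3$, $v(\{O^{(1)},O^{(2)}\}) = v(\{O^{(1)},O^{(3)}\}) = 2+2 = 4$, and $v(\{O^{(1)},O^{(2)},O^{(3)}\}) = 4$ (for general $t \ge 2$ the values are $2t-1$, $2t$, $2t$).

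From these the marginal contribution of $O^{(2)}$ to $\{O^{(1)}\}$ is $1$, while its marginal contribution to the larger coalition $\{O^{(1)},O^{(3)}\}$ is $0$; since $\{O^{(1)}\} \subseteq \{O^{(1)},O^{(3)}\}$ and $O^{(2)} \notin \{O^{(1)},O^{(3)}\}$, this contradicts the monotonicity of marginal contributions that supermodularity would require, which proves the claim. I do not expect a genuine obstacle here: the only points needing care are checking that the instance honestly lies inside the model (unit-size jobs, greedy schedules only, an organization allowed to own a machine while submitting no jobs) and that the extra machine is truly useful for the two-organization coalition yet redundant once a third machine is present --- both immediate for this instance.
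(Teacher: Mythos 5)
Your proof is correct and takes essentially the same approach as the paper: both exhibit a three-organization counterexample with one machine per organization and unit jobs released at time $0$, evaluated at $t=2$, exploiting that an extra machine's marginal value drops to zero once there are already enough machines for the waiting jobs. The paper's instance gives two organizations two jobs each and one organization none, and checks the inequality $v(S\cup T)+v(S\cap T)<v(S)+v(T)$ directly, whereas you use the equivalent increasing-marginal-returns formulation with a single job-owning organization; the numerical checks in your version ($3,4,4,4$) are all correct.
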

\begin{proof}
Consider a following instance with 3 organizations: $a$, $b$ and $c$ each owning a single machine. Organizations $a$ and $b$ in time $t = 0$ release two unit size jobs each; the organization $c$ has no jobs. We are considering the values of the coalitions in time $t = 2$; $v(\{a, c\}) = 4$ (the two jobs are scheduled in time 0), $v(\{b, c\}) = 4$, $v(\{a, b, c\}) = 7$ (three jobs are scheduled in time 0 and one in time 1) and $v(\{c\})$ = 0 (there is no job to be scheduled). We see that  $v(\{a, b, c\})$ + $v(\{c\})$ $<$ $v(\{a, c\})$ + $v(\{b, c\})$, which can be written as:
\begin{align*}
v(\{a, c\} \cup \{b, c\}) + v(\{a, c\} \cap \{b, c\}) < v(\{a, c\}) + v(\{b, c\}) \textrm{.}
\end{align*}
This shows that the game is not supermodular.
\end{proof}

In this algorithm we keep simplified schedules for a random subset of all possible coalitions. 
For each organization $O^{(u)}$ the set $Subs[O^{(u)}]$ keeps $N = \frac{\|\mathcal{C}\|^2}{\epsilon^2}\ln\left(\frac{\|\mathcal{C}\|}{1 - \lambda}\right)$ random coalitions not containing $O^{(u)}$; for each such random coalition $\mathcal{C}'$ which is kept in $Subs[O^{(u)}]$, $Subs'[O^{(u)}]$ contains the coalition $\mathcal{C}' \cup \{O^{(u)}\}$. For the coalitions kept in $Subs[O^{(u)}]$ we store a simplified schedule (the schedule that is determined by an arbitrary greedy algorithm). The simplified schedule allows us to find the value $v(\mathcal{C}')$ of the coalition $\mathcal{C}'$. Maintaining the whole schedule would require the recursive information about the schedules in the subcoalitions of $\mathcal{C}'$. However, as the consequence of Proposition~\ref{prop::independentOnAlg} we know that the value of the coalition $v(\mathcal{C}')$ can be determined by an arbitrary greedy algorithm\footnote{In this point we use the assumption about the unit size of the jobs. The algorithm cannot be extended to the general case. In a general case, for calculating the value for each subcoalition we would require the exact schedule which cannot be determined polynomially (Theorem~\ref{thm::contributionsNp}).}.

The third $foreach$ loop in procedure \texttt{FairAlgorithm} (line~\ref{algline::line3} in Figure~\ref{alg:fairAlgUnitSize}) updates the values of all coalitions kept in $Subs$ and $Subs$'. From Equation~\ref{def::fairMetric} it follows that after one time unit if no additional job is scheduled, the value of the coalition increases by the number of completed unit-size parts of the jobs (here, as the jobs are unit size, the number of the completed jobs is $\mathrm{finPerCoal}[\mathcal{C'}]$). In time moment $t$, all waiting jobs 
(the number of such jobs is $\|\mathrm{jobs}[\mathcal{C}][O^{(u)}]\|$) are scheduled provided there are enough processors (the number of the processors is $\sum_{O^{(u)} \in \mathcal{C'}} m^{(u)}$). If $n$ additional jobs are scheduled in time $t$ then the value of the coalition in time $t$ increases by $n$. 

In the fourth $foreach$ loop (line~\ref{algline::line4} in Figure~\ref{alg:fairAlgUnitSize}), once again we use the fact that the utility of the organization after one time unit increases by the number of finished jobs ($\mathrm{finPerOrg}[O^{(u)}]$). In the last $foreach$ loop (line~\ref{algline::line5}) the contribution of the organization is approximated by summing the marginal contributions $\mathrm{marg}\_\phi$ only for the kept coalitions. Theorem~\ref{thm::approxRandomized} below gives the bounds for the quality of approximation.

\begin{theorem}\label{thm::approxRandomized}
Let $\vec{\psi}$ denote the vector of utilities in the schedule determined by Algorithm~\textsc{Rand} from Figure~\ref{alg:fairAlgUnitSize}. If the jobs are unit-size, then $\mathcal{A}$ with the probability $\lambda$ determines the $\epsilon$-approximation schedule, i.e. gives guarantees for the bound on the distance to the truly fair solution:
\begin{align*}
\|\vec{\psi} - \vec{\psi}^{*}\|_{M} \leq \epsilon|\vec{\psi}^{*}| \textrm{.}
\end{align*}
\end{theorem}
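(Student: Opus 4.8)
The plan is to regard Algorithm~\ref{alg:fairAlgUnitSize} as nothing more than the fair selection rule of Definition~\ref{def::fairScheduling} in which the exact contribution vector $\vec\phi$ is replaced by the Monte-Carlo estimate $\hat{\vec\phi}$ assembled from the $N$ random orderings drawn in \texttt{Prepare}, and then to control two errors separately: (i) how far $\hat{\vec\phi}$ is from $\vec\phi$, and (ii) how far the utility vector of the greedy schedule built against $\hat{\vec\phi}$ is from the utility vector $\vec\psi^{*}$ of the greedy schedule built against $\vec\phi$ — the latter being a genuinely fair schedule, hence exactly the $\sigma^{*}$ demanded by the definition of an $\epsilon$-approximation. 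I fix a time moment $t$ (for the bound to hold at every $t$ simultaneously one adds a union bound over $t$, or observes that the estimates stop changing once all jobs complete). By Proposition~\ref{prop::independentOnAlg} the values $v(\mathcal{C}',t)$ are independent of the greedy algorithm used, so $\vec\phi$ is well defined through Equation~\ref{eq::shapleyValueAlt}, and by efficiency $\sum_{u}\phi^{(u)} = v(\mathcal{C},t) = \sum_{u}\psi^{(u)} = \sum_{u}\psi^{(u),*}$; in particular $|\vec\psi^{*}| = v(\mathcal{C},t)$.

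For step (i) I would fix an organization $O^{(u)}$ and write $\hat\phi^{(u)} = \frac1N\sum_{j=1}^{N} X_j$, where $X_j = v(\prec_j(O^{(u)})\cup\{O^{(u)}\},t) - v(\prec_j(O^{(u)}),t)$ is the marginal contribution of $O^{(u)}$ along the $j$-th sampled ordering. The $X_j$ are i.i.d.\ with $\expected[X_j] = \phi^{(u)}$ (Equation~\ref{eq::shapleyValueAlt}), and since $v(\cdot,t)$ is monotone and bounded by $v(\mathcal{C},t)$ we have $0 \le X_j \le v(\mathcal{C},t) = |\vec\psi^{*}|$. Hoeffding's inequality then gives $\probability\!\left[\,|\hat\phi^{(u)} - \phi^{(u)}| \ge \tfrac{\epsilon}{\|\mathcal{C}\|}\,|\vec\psi^{*}|\,\right] \le 2\exp\!\left(-2N\epsilon^{2}/\|\mathcal{C}\|^{2}\right)$. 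Substituting $N = \lceil \tfrac{\|\mathcal{C}\|^{2}}{\epsilon^{2}}\ln\tfrac{\|\mathcal{C}\|}{1-\lambda}\rceil$ makes the right-hand side at most $\tfrac{1-\lambda}{\|\mathcal{C}\|}$ (the constant works out for $\|\mathcal{C}\|\ge 2$; for $\|\mathcal{C}\|=1$ the claim is trivial since then $\hat\phi^{(u)} = v = \psi^{(u)}$). A union bound over the $\|\mathcal{C}\|$ organizations yields that with probability at least $\lambda$, $\|\hat{\vec\phi} - \vec\phi\|_{M} = \sum_{u}|\hat\phi^{(u)} - \phi^{(u)}| \le \epsilon\,|\vec\psi^{*}|$.

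For step (ii) I condition on this good event and compare the two runs (target $\hat{\vec\phi}$ versus target $\vec\phi$) time moment by time moment. By Proposition~\ref{prop::independentOnAlg} both runs schedule the same number of unit jobs at every moment, so at every moment $\vec\psi - \vec\psi^{*}$ is a zero-sum integer vector and $\|\vec\psi - \vec\psi^{*}\|_{M} = 2\sum_{u}\max(0,\psi^{(u)}-\psi^{(u),*})$. The selection rule assigns each freed processor to the organization of currently largest deficit against its target; since the two runs differ only in the target, an exchange argument — whenever the runs pick different organizations, the resulting loss is charged to the corresponding coordinate gap $|\hat\phi^{(u)}-\phi^{(u)}|$ — should give, by induction over time moments, $\|\vec\psi - \vec\psi^{*}\|_{M} \le \|\hat{\vec\phi}-\vec\phi\|_{M}$ at all times, hence $\le \epsilon\,|\vec\psi^{*}|$, which is the statement. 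If the exchange argument only delivers a constant $c>1$ in front of $\|\hat{\vec\phi}-\vec\phi\|_{M}$, one absorbs it by enlarging $N$ by the factor $c^{2}$.

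The main obstacle is step (ii). The concentration estimate in step (i) is routine; the real work is showing that the on-line, FIFO-constrained, recursively-defined distance-minimizing rule is non-expansive (up to a constant) in its target vector, i.e.\ that a target perturbed by $\delta$ produces a utility vector perturbed by $O(\delta)$. The delicate points are that both $\vec\psi$ and the target move at every step, that the per-organization FIFO order constrains which jobs are available at a given moment, and that the fairness rule is defined through all sub-coalitions; I expect the cleanest route is a coupling/potential-function argument over time moments that simultaneously tracks $\|\vec\psi-\vec\psi^{*}\|_{M}$ and the running discrepancy of the two target vectors, rather than a direct case analysis of the scheduling decisions.
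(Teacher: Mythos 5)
Your proposal follows essentially the same two-step decomposition as the paper's proof: (i) Hoeffding concentration of the sampled marginal contributions around the true Shapley values, with a union bound over the $\|\mathcal{C}\|$ organizations to get the overall confidence $\lambda$, and (ii) a transfer of the contribution-estimation error to the utility error of the resulting schedule. Your step (i) matches the paper's almost verbatim (the paper uses a one-sided bound and writes the exponent as $\exp(-\epsilon^2 N/\|\mathcal{C}\|^2)$, but the constants play the same role). The step you single out as the main obstacle is handled in the paper by a stronger, per-coordinate claim: $|\psi^{(u),*} - \psi[O^{(u)}]| \leq |\phi^{(u),*} - \phi[O^{(u)}]|$ for each organization separately, justified by the observation that---since by Proposition~\ref{prop::independentOnAlg} every greedy schedule completes the same number of unit jobs at every moment---perturbing the target of $O^{(u)}$ by $\Delta\phi$ causes the selection rule to schedule at most $\Delta\phi$ additional (or fewer) unit-size jobs of $O^{(u)}$. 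Summing over coordinates gives exactly the $\ell_1$ non-expansiveness you were aiming for, without any coupling or potential-function machinery and with constant $c=1$. So there is no divergence in approach; the exchange argument you defer is present in the paper only as this one-sentence observation, meaning your plan is sound and, if anything, you are more explicit than the paper about where the remaining delicacy lies (FIFO constraints, both targets moving over time).
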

\begin{proof}
Let us consider an organization $O^{(u)}$ participating in a coalition $\mathcal{C}$ and a time moment $t$. Let $\phi^{(u), *}$ and $\psi^{(u),*}$ denote the contribution and the utility of the organization $O^{(u)}$ in a coalition $\mathcal{C}$ in time moment $t$ in a truly fair schedule. Let $v^{*}(\mathcal{C})$ denote the value of the coalition $\mathcal{C}$ in a truly fair schedule. According to notation in Figure~\ref{alg:fairAlgUnitSize}, let $\phi[O^{(u)}]$ and $\psi[O^{(u)}]$ denote the contribution and the utility of the organization $O^{(u)}$ in a coalition $\mathcal{C}$ in time $t$ in a schedule determined by Algorithm~\textsc{Rand}; Let $N = \frac{\|\mathcal{C}\|^2}{\epsilon^2}\ln\left(\frac{\|\mathcal{C}\|}{1 - \lambda}\right)$. First, note that $|\psi^{(u),*} - \psi[O^{(u)}]| \leq |\phi^{(u),*} - \phi[O^{(u)}]|$. Indeed, if the contribution of the organization $O^{(u)}$ increases by a given value $\Delta \phi$ then Algorithm~\textsc{Rand} will schedule $\Delta \phi$ more unit-size jobs of the organization $O^{(u)}$ provided there is enough such jobs waiting for execution.

Let $X$ denote the random variable that with the probability $\frac{1}{\|\mathcal{C}\|!}$ returns the marginal contribution of the organization $O^{(u)}$ to the coalition composed of the organizations preceding $O^{(u)}$ in the random order (of course, there is $\|\mathcal{C}\|!$ such random orderings). We know that $X \in [0, v^{*}(\mathcal{C})]$ and that $\expected(X) = \phi^{(u), *}$. Algorithm~\textsc{Rand} is constructed in such a way that $\phi[O^{(u)}] = \sum_{i=0}^{N} \frac{1}{N}X_i$, where $X_i$ are independent copies of $X$. Thus, $\expected (\phi[O^{(u)}]) = \phi^{(u), *}$. From Hoeffding's inequality we get the bound on the probability $p_{\epsilon}$ that $\phi[O^{(u)}] - \phi^{(u), *} > \frac{\epsilon}{\|\mathcal{C}\|} v^{*}(\mathcal{C})$:

\begin{align*}
p_{\epsilon} &= \probability\left(\sum_{i=0}^{N} \frac{1}{N}X_i - \phi^{(u), *} > \frac{\epsilon}{\|\mathcal{C}\|} v^{*}(\mathcal{C})\right) \\
             &< \exp\left(-\frac{\epsilon^2v^{*}(\mathcal{C})^2N^2}{v^{*}(\mathcal{C})^2N\|\mathcal{C}\|^2} \right) \\
             &= \exp\left(-\frac{\epsilon^2N}{\|\mathcal{C}\|^2}\right) = \frac{1 - \lambda}{\|\mathcal{C}\|}\textrm{.}
\end{align*}

The probability that $\vec{\phi} - \vec{\phi^{*}} > \epsilon  v^{*}(\mathcal{C})$ can be bounded by $p_{\epsilon}\|\mathcal{C}\| = 1 - \lambda$. As the result, also the probability that $\vec{\psi} - \vec{\psi^{*}} > \epsilon  v^{*}(\mathcal{C})$ can be bounded by $1 - \lambda$, which completes the proof.
\end{proof}

The complexity of Algorithm~\textsc{Rand} is $\|\orgs\|\cdot N = \|\orgs\| \frac{\|\mathcal{C}\|^2}{\epsilon^2}\ln\left(\frac{\|\mathcal{C}\|}{1 - \lambda}\right)$ times the complexity of the single-organization scheduling algorithm. As a consequence, we get the following result:

\begin{theorem}\label{thm:fpras}
There exists an FPRAS for the problem of finding the fair schedule for the case when the jobs are unit size.
\end{theorem}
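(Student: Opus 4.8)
The plan is to assemble Theorem~\ref{thm::approxRandomized} with the running-time estimate of Algorithm~\ref{alg:fairAlgUnitSize} and to check that together they meet the definition of an FPRAS. Recall that an FPRAS for our problem must, given any accuracy parameter $\epsilon>0$ and any failure probability $\delta\in(0,1)$, output an $\epsilon$-approximation fair schedule with probability at least $1-\delta$, in time bounded by a polynomial in the instance size, in $1/\epsilon$, and in $\log(1/\delta)$. So the proof is essentially a bookkeeping argument over the two facts already established, plus a check that the per-sample work is polynomial.

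First I would set $\lambda=1-\delta$ and run Algorithm~\ref{alg:fairAlgUnitSize} with parameters $\epsilon$ and $\lambda$. By Theorem~\ref{thm::approxRandomized} the resulting schedule $\sigma$ has a utility vector $\vec{\psi}$ with $\|\vec{\psi}-\vec{\psi}^{*}\|_{M}\le\epsilon|\vec{\psi}^{*}|$ with probability at least $\lambda=1-\delta$, i.e. $\sigma$ is an $\epsilon$-approximation fair schedule with the required confidence. Next I would bound the running time. The number of sampled permutations is $N=\lceil \tfrac{\|\mathcal{C}\|^{2}}{\epsilon^{2}}\ln(\tfrac{\|\mathcal{C}\|}{1-\lambda})\rceil=\lceil \tfrac{\|\mathcal{C}\|^{2}}{\epsilon^{2}}\ln(\tfrac{\|\mathcal{C}\|}{\delta})\rceil$, which is polynomial in $\|\mathcal{C}\|\le\|\orgs\|$, in $1/\epsilon$, and in $\log(1/\delta)$. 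The algorithm maintains $O(\|\orgs\|\,N)$ coalitions, and for each of them, at each time moment, it only performs a greedy simulation that assigns unit-size jobs to available processors --- an operation linear in the number of machines and of released jobs, hence polynomial in the input. Crucially, by Proposition~\ref{prop::independentOnAlg} the value $v(\mathcal{C}')$ of each maintained coalition is computed correctly by this arbitrary greedy schedule, with no recursion into subcoalitions. Therefore the total cost is $\|\orgs\|\cdot N$ times the polynomial cost of the single-organization greedy scheduler, which is polynomial in the instance size, $1/\epsilon$, and $\log(1/\delta)$; combined with the correctness/confidence claim this is precisely an FPRAS, and the built-in $\ln(\|\mathcal{C}\|/(1-\lambda))$ factor already delivers the $\log(1/\delta)$ dependence without a separate amplification step.

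The main point to be careful about is not a hard obstacle but the reason the argument is confined to unit-size jobs: for general jobs the value of a subcoalition cannot be read off from an arbitrary greedy schedule --- it needs that subcoalition's own fair schedule, whose computation is NP-hard by Theorem~\ref{thm::contributionsNp} --- so the per-sample work would not be polynomial and the Hoeffding estimator of Theorem~\ref{thm::approxRandomized} could not be evaluated efficiently. Under the unit-size assumption Proposition~\ref{prop::independentOnAlg} removes exactly this obstruction. The only minor care needed in writing the proof is the translation between the confidence parameter $\lambda$ used in Algorithm~\ref{alg:fairAlgUnitSize} and the failure probability $\delta=1-\lambda$ in the FPRAS definition, together with the observation that $\ln(\|\mathcal{C}\|/\delta)$ depends only polylogarithmically on $1/\delta$.
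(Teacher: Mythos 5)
Your proposal is correct and follows essentially the same route as the paper, which derives the theorem directly from Theorem~\ref{thm::approxRandomized} together with the observation that Algorithm~\ref{alg:fairAlgUnitSize} runs in time $\|\orgs\|\cdot N$ times the cost of a single greedy scheduling pass, with $N=\lceil\frac{\|\mathcal{C}\|^2}{\epsilon^2}\ln(\frac{\|\mathcal{C}\|}{1-\lambda})\rceil$ polynomial in the relevant parameters. Your additional remarks --- the $\lambda=1-\delta$ translation and the role of Proposition~\ref{prop::independentOnAlg} in keeping the per-sample work polynomial --- are exactly the implicit ingredients of the paper's argument, just made explicit.
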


\section{Resource utilization of greedy algorithms}

It might appear that in order to ensure the fairness of the algorithm we might be forced to use globally inefficient algorithms. Such algorithms might, for instance, waste resources.
We define the resource utilization as the percentage of the time in which, on average, every processor is busy. The resource utilization is an established metric indicating the global efficiency of resource usage. Indeed, even though we use greedy algorithms, some of them might result in suboptimal resource utilization. This problem is shown in Figure~\ref{fig:resourceWaste}.

\begin{figure}[tb]
  \begin{center}
    \includegraphics[scale=1.0]{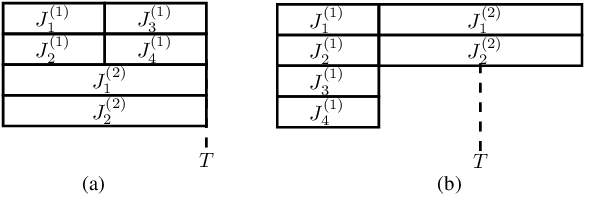}
  \end{center}
  \vspace{-0.15cm}
  \caption{The example showing that greedy algorithms might induce suboptimal resource utilization. In the example we have 4 jobs of the organization $O^{(1)}$, each of size 3, and 2 jobs of the organization  $O^{(2)}$, each of size 6. All the jobs are released in time 0. Let us consider time moment $T = 6$. In Figure (a) the jobs of  $O^{(2)}$ are started first, which results in 100\% resource utilization. In Figure (b)  the jobs of $O^{(1)}$ are started first, which in time $T$ gives 75\% of resource utilization.}\label{fig:resourceWaste}
\end{figure}

Thus, there is a natural question, which in additional to the context of fair scheduling, is interesting on its own. How bad can we be when using a greedy algorithm (with any underlying scheduling policy)? In the next theorem we show that the example from Figure~\ref{fig:resourceWaste} is, essentially, the worst possible scenario.  

\begin{definition}
An algorithm $\mathcal{A}$ is an $\alpha$-competitive online algorithm for resource utilization if and only if in each time moment $T$ the ratio of the resource utilization between the schedule derived by $\mathcal{A}$ and the schedule obtained by any other algorithm is greater or equal to $\alpha$.
\end{definition}

\begin{figure}[tb]
  \hspace{3.0cm} \includegraphics[scale=1.0]{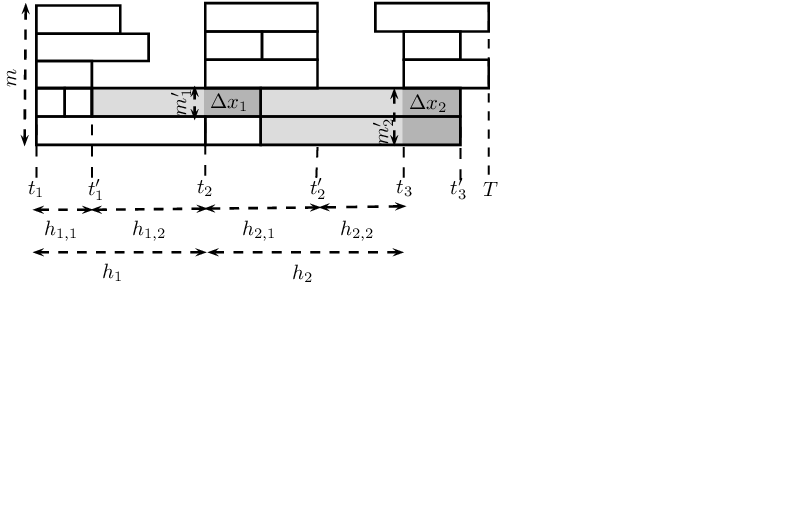}
  \vspace{-3.5cm}
  \caption{The figure illustrating the notation used in the proof of Theorem~\ref{thm:resourceUtil}. There are 3 time blocks in this figure: the first one lasts from $t_1$ till $t_2$; the second from $t_2$ till $t_3$; and the third one from $t_3$ till $T$. The parts of jobs that were delayed outside their time block (in comparison with an optimal schedule) are marked in dark gray. The remaining parts of these jobs are marked in light gray. We see that these jobs which are delayed outside the $i$-th time block are started at or before $t_i'$.}\label{fig:resourceUtil}
\end{figure}

\begin{theorem}\label{thm:resourceUtil}
Every greedy algorithm for scheduling sequential jobs on identical processors is a $\frac{3}{4}$-competitive online algorithm for resource utilization.
\end{theorem}
\begin{proof}
Let $\sigma$ denote the schedule obtained by some greedy algorithm $\mathcal{A}$ until time $T$, and let $\sigma^*$ denote the schedule obtained by the optimal (according to the resource utilization metric) algorithm for the same input. Now, we will divide the time axis into blocks in the following way. The first block starts in time $0$. The $i$-th block ($i > 1$) starts in the earliest possible time moment $t_i$ such that (i) $t_i > t_{i-1}$ (the $i$-th block starts after the $(i-1)$-th one), and (ii) in $t_i$ there are jobs in $\sigma$ running on all the processors and in $(t_{i}-1)$ at least one processor in $\sigma$ is idle. Let $t_{\ell}$ denote the start time of the last block. By convention we take $t_{\ell+1} = T$. The blocks for the example schedule are depicted in Figure~\ref{fig:resourceUtil}. Furthermore, let $t_{i}'$ denote the earliest moment in the $i$-th time block in which some processor is idle. Let $h_i = t_{i+1} - t_i$ denote the duration of the $i$-th time block. Let $h_{i, 1} = t_i' - t_i$ and let $h_{i, 2} = t_{i+1} - t_i'$.

In our proof we will consider the time blocks separately and for each time block we will prove that the total number of the unit-size parts of the jobs completed in this block in schedules $\sigma$ and $\sigma^{*}$ vary by no more than the factor of $\frac{3}{4}$. Throughout this proof we will use the variable $V$ that, intuitively, accumulates the number of unit-size parts of the jobs that in $\sigma$ were completed in the earlier time block than in $\sigma^{*}$. Let $V_{i}$ denote the value of $V$ after we completed an analysis for the $i$-th block, with $V_{0} = 0$.

Let us consider the $i$-th time block. Let $x_i$ and $x_i^{*}$ denote the number of unit-size parts of the jobs completed in the $i$-th time block in schedules $\sigma$ and $\sigma^{*}$, respectively. If $x_i \geq x_i^{*}$ then we increase the variable $V$ by $(x_i - x_i^{*})$. Otherwise, let $\Delta x_i = x_i^{*} - x_i > 0$. We consider the two following cases:
\begin{enumerate}
\item If $V_{i-1} = 0$, then we set $\Delta y_i = \Delta x_i$, and $V_i = 0$.
\item If $V_{i-1} > 0$, then we set $\Delta y_i = \Delta x_i - \min(\Delta x_i, V_{i-1})$, and $V_i = V_{i-1} - \min(\Delta x_i, V_{i-1})$. Intuitively, this means that the unit-size parts of jobs that were computed extra in earlier blocks and accumulated in $V$, pay for some parts of the jobs that were computed in the later block.
\end{enumerate}
Now, if $\Delta y_i = 0$ this means that from $V$ we managed to pay for the parts that, due to inefficiency of the algorithm $\mathcal{A}$, were not computed in the $i$-th time block. Otherwise ($\Delta y_i > 0$), we infer that some $\Delta y_i > 0$ parts of the jobs that were released before $t_{i+1}$ were delayed and in $\sigma$ were not completed in the $i$-th time block (while they were in $\sigma^{*}$). These jobs were released at or after $t_{i}$. Indeed, otherwise the job would be started at time $t_{i}-1$ or earlier (the algorithm $\mathcal{A}$ is greedy, and a processor is idle at $t_{i}-1$), and so, such a job would be processed for the whole duration of the $i$-th time block. Consequently, the unit-size parts of this job would not contribute to $\Delta y_i$ (the number of unit-size parts of this job completed in the $i$-th time block in $\sigma$ would be no greater than in $\sigma^{*}$).

Let us consider the jobs the parts of which contributed to $\Delta y_i$.
Let $m_i'$ denote the number of machines on which these jobs were processed (see Figure~\ref{fig:resourceUtil} for an example). From the pigeonhole principle, at least one from the considered jobs, $J$, was delayed by at least $\frac{\Delta y_i}{m_i'}$. Since the algorithm $\mathcal{A}$ is greedy, $h_{i, 1} \geq \frac{\Delta y_i}{m_i'}$ (there were at least $\frac{\Delta y_i}{m_i'}$ time moments in the $i$-th time block with no idle processors; otherwise $J$ would be started earlier). Also, each from the considered jobs starts in time $t_i'$ at the latest---indeed, if it would be started later, then from the greediness of the algorithm we would infer that the release time of such job is at least $t_i'+1$, and so, such job would not be delayed in $\sigma$. Consequently, through the whole duration of the $i$-th block some $m_i'$ machines are continuously occupied. Thus, the idle surface of the processors in $\sigma$ is at most equal to $ h_{i, 2} (m-m_i')$. Since, $\Delta y_i \leq \Delta x_i$, and $\Delta x_i$ denotes the difference in the number of unit-size parts of the jobs computed in $\sigma^{*}$ and in $\sigma$, we infer that $\Delta y_i \leq h_{i, 2} (m-m_i')$.

Now, let us estimate $o_i$, the number of the occupied slots in the $i$-th block in schedule $\sigma$.
\begin{align*}
o_i \geq h_{i, 1} m + h_{i, 2} m_i' \geq \frac{\Delta y_i}{m_i'}m + \frac{\Delta y_i}{m-m_i'}m_i' \geq \Delta y_i\left(\frac{m}{m_i'} + \frac{m_i'}{m-m_i'}\right) \geq 3\Delta y_i \textrm{.}
\end{align*}
Thus:
$\frac{o_i}{o_i + \Delta y_i} \geq \frac{o_i}{o_i + \frac{1}{3} o_i} = \frac{3}{4}$. 
Since our reasoning can be repeated for every time block, we get the thesis.

\end{proof}

This result shows that even without any information on jobs' release dates, durations even with an arbitrary scheduling policy, we waste no more than $25\%$ of resources. Of course, this loss of efficiency is even smaller when there are many jobs to be computed. For instance, if in any time moment there are jobs waiting for execution, any greedy algorithm achieves 100\% resource utilization.

It is also natural to consider the loss of efficiency according to our strategy-proof metric. We have chosen to consider resource utilization as it has more intuitive meaning in terms of waste of resources. We leave the problem of finding bounds for our strategy-proof metric as a natural follow-up question.

We note that our fair scheduling algorithm is also applicable for parallel jobs (jobs requiring more than one processor). However, for the case of parallel jobs the loss of the global efficiency of an arbitrary greedy algorithm can be higher. 
We leave these extensions, as well as generalization of the processor model to related and unrelated machines, for the future work.

\section{Experimental evaluation of the algorithms}\label{sec::experiments}

In the previous section we showed that the problem of finding a fair schedule is computationally intractable. However, the ideas used in the exponential and the FPRAS algorithms can be used as insights to create reasonable heuristics. In this section we present the experimental evaluation of the fairness of two simple heuristic algorithms, and several algorithms from the scheduling theory.

\subsection{Algorithms}

In this section we describe the algorithms that we evaluate.

\medskip
\noindent
\textbf{\textsc{Ref}}. We used Algorithm~\textsc{Ref} from Figure~\ref{alg:fairAlg} (which is an exponential algorithm) as the referral fair algorithm. 

\medskip
\noindent
\textbf{\textsc{Rand}}. We used Algorithm~\textsc{Rand} from Figure~\ref{alg:fairAlgUnitSize} as a heuristic  for workloads with jobs having different sizes. We verify two versions of the algorithm with $N=15$ and $N=75$ random subcoalitions. \medskip

\begin{figure}[th]
\begin{algorithm}[H]
   \footnotesize
   \SetKwFunction{own}{own}
   \SetKwInput{KwNotation}{Notation}
   \SetKwFunction{startJob}{startJob}
   \SetKwFunction{FairAlgorithm}{FairAlgorithm}
   \SetKwFunction{Schedule}{Schedule}
   \SetKwFunction{Initialize}{Initialize}
   \SetKwFunction{RunningJob}{RunningJob}
   \SetKwFunction{FreeMachine}{FreeMachine}
   \SetKwBlock{Block}
   \SetAlCapFnt{\footnotesize}
   \KwNotation{\\
	\own{$M$}, \own{$J$} --- the organization owning the processor $M$, the job $J$ \\
    $wait(O)$ --- the set of released, but not-yet scheduled jobs of the organization $O$ at time $t$}

    \vspace{0.3cm}\Initialize{$\mathcal{C}$}:
	\Block{
		\ForEach{$O^{(u)} \in \mathcal{C}$}
		{
			$\textrm{finUt}[O^{(u)}] \leftarrow 0; \; \textrm{finCon}[O^{(u)}] \leftarrow 0$ \;
			$\phi[O^{(u)}] \leftarrow 0; \; \psi[O^{(u)}] \leftarrow 0$ \;
		}
    }
    \vspace{0.3cm}\Schedule{$t_{prev}, t$}: \verb+//+ $t_{prev}$ is the time of the previous event
	\Block{
	  \ForEach{$O^{(u)} \in \mathcal{C}$}
			{
				$\phi[O^{(u)}] \leftarrow \phi[O^{(u)}] + (t-t_{prev}) \textrm{finCon}[O^{(u)}]$\;\nllabel{algline::line1}
				$\psi[O^{(u)}] \leftarrow \psi[O^{(u)}] + (t-t_{prev}) \textrm{finUt}[O^{(u)}]$\;\nllabel{algline::line2}
			}
			$\gamma \leftarrow$ generate a random permutation of the set of all processors\;
			\ForEach{$m \in \gamma$}
			{
				\If{$\textbf{not} \; \FreeMachine{$m$, $t$}$}
				{
					$J \leftarrow$ \RunningJob{$m$}\;
					$\textrm{finUt}[\own{J}] \leftarrow \textrm{finUt}[\own{J}] + t - t_{prev}$ \;
					$\textrm{finCon}[\own{m}] \leftarrow \textrm{finCon}[\own{m}] + t - t_{prev}$ \;

                   			$\phi[\own{J}] \leftarrow \phi[\own{J}] + \frac{1}{2} (t-t_{prev})(t-t_{prev}+1)$\;\nllabel{algline::line3}
                   			$\psi[\own{m}] \leftarrow \psi[\own{m}] + \frac{1}{2} (t-t_{prev})(t-t_{prev}+1)$\;\nllabel{algline::line4}
				}
			}
			\ForEach{$m \in \gamma$}
			{
				\If{\FreeMachine{$m$, $t$} $\;\textbf{\textrm{and}}$ $\;\;\bigcup_{O^{(u)}} wait(O^{(u)}) \neq \emptyset$} 
				{
					$org \leftarrow \textrm{argmax}_{O^{(u)}: wait(O^{(u)}) \neq \emptyset} (\phi[O^{(u)}] - \psi[O^{(u)}])$ \;
					$J \leftarrow$ first waiting job of $org$ \;
					\startJob{$J$, $m$} \;
					$\textrm{finUt}[org] \leftarrow \textrm{finUt}[org] + 1$ \;
					$\textrm{finCon}[\own{m}] \leftarrow \textrm{finCon}[\own{m}] + 1$ \;
				}
			}
	}

\end{algorithm}
\caption{Algorithm~\textsc{DirectContr}: a heuristic algorithm for fair scheduling.}\label{alg:simpleHeuristic}
\end{figure}

\medskip
\noindent
\textbf{\textsc{DirectContr}} (the pseudo-code of the algorithm is given in Figure~\ref{alg:simpleHeuristic}). The algorithm keeps for each organization $O$ its utility $\psi_{sp}[O]$ and its estimated contribution $\phi[O]$. The estimate of the contribution of each organization is assessed directly (without considering any subcoalitions) by the following heuristic. On each scheduling event $t$ we consider the processors in a random order and assign waiting jobs to free processors. The job that is started on processor $m$ increases the contribution $\tilde{\phi}$ of the owner of $m$ by the utility of this job.

In the pseudo code, $\mathrm{finUt}[O]$ denotes the number of the unit-size parts of the jobs of the organization $O$ that are completed before $t_{prev}$. From Equation~\ref{def::fairMetric} we know that the utility in time $t$ of the unit-size parts of the jobs of the organization $O$ that are completed before $t_{prev}$ is greater by $(t - t_{prev})\mathrm{finUt}[O]$ than this utility in time $t_{prev}$ (line~\ref{algline::line1}); the utility of the unit-size parts of the job completed between $t_{prev}$ and $t$ is equal to $\sum_{i = 1}^{t-t_{prev}}i = \frac{1}{2}(t-t_{prev})(t-t_{prev}+1)$ (line~\ref{algline::line3}). Similarly, $\mathrm{finCon}[O]$ denotes the number of the completed unit-size parts of the jobs processed on the processors of the organization $O$. The algorithm updates the utilities and the estimates of the contributions.
The waiting jobs are assigned to the processors in the order of decreasing differences $(\phi - \psi)$ of the issuing organizations (similarly to \textsc{Ref}).

\medskip
\noindent
\textbf{\textsc{RoundRobin}}. The algorithm cycles through the list of organizations to determine the job to be started.

\medskip
\noindent
\textbf{\textsc{FairShare}}~\cite{kay1988fair}. This is perhaps the most popular scheduling algorithm using the idea of distributive fairness. Each organization is given a target weight (a \emph{share}). The algorithm tries to ensure that the resources used by different organizations are proportional to their shares. More formally, whenever there is a free processor and some jobs waiting for execution, the algorithm sorts the organizations in the ascending order of the ratios: the total time of the processor already assigned for the jobs of the organization divided by its share. A job from the organization with the lowest ratio is started.

In all versions of fair share, in the experiments we set the target share to the fraction of  processors contributed by an organization to the global pool.

\medskip
\noindent
\textbf{\textsc{UtFairShare}}.
This algorithm uses the same idea as \textsc{FairShare}. The only difference is that \textsc{UtFairShare} tries to balance the utilities of the organizations instead of their resource allocation. Thus, in each step the job of the organization with the smallest ratio of utility to share is selected. We used this algorithm because it uses the allocation mechanism of \textsc{FairShare}, but operates on the strategy-proof metric used by our referral exponential algorithm.

\medskip
\noindent
\textbf{\textsc{CurrFairShare}}.
This version of the fair share algorithm does not keep any history; it only ensures that, for each organization, the number of currently executing jobs is proportional to its target share. We used this algorithm because, it is light and efficient. It has also an interesting property: the history does not influence the current schedule. We were curious to check how this property influences the fairness.

\subsection{Settings}

To run simulations, we chose the following workloads from the Parallel Workload Archive~\cite{BorjaCite21}:
\begin{inparaenum}
\item LPC-EGEE\footnote{www.cs.huji.ac.il/labs/parallel/workload/l\_lpc/index.html} (cleaned version),
\item PIK-IPLEX\footnote{www.cs.huji.ac.il/labs/parallel/workload/l\_pik\_iplex/index.html},
\item RICC\footnote{www.cs.huji.ac.il/labs/parallel/workload/l\_ricc/index.html},
\item SHARC\-NET-Whale\footnote{www.cs.huji.ac.il/labs/parallel/workload/l\_sharcnet/index.html}.
\end{inparaenum}
We selected traces that closely resemble sequential workloads (in the selected traces most of the jobs require a single processor). We replaced parallel jobs that required $q > 1$ processors with $q$ copies of a sequential job having the same duration.

In each workload, each job has a user identifier (in the workloads there are respectively 56, 225, 176 and 154 distinct user identifiers). To distribute the jobs between the organizations we uniformly distributed the user identifiers between the organizations; the job sent by the given user was assigned to the corresponding organization. 

Because \textsc{Ref} is exponential, the experiments are computationally-intensive; in most of the experiments, we simulate only 5 organizations. 

The users usually send their jobs in consecutive blocks. We also considered a scenario when the jobs are uniformly distributed between organizations (corresponding to a case when the number of users within organizations is large, in which case the distribution of the jobs should be close to uniform). These experiments led to the same conclusions, so we present only the results from the case when the user identifiers were distributed between the organizations.

For each workload, the total number of the processors in the system was equal to the number originally used in the workload (that is 70, 2560, 8192 and 3072, respectively). The processors were assigned to organizations so that the counts follow Zipf and (in different runs) uniform distributions.

For each algorithm, we compared the vector of the utilities (the utilities per organization) at the end of the simulated time period (a fixed time $t_{end}$): $\vec{\psi}$ with the vector of the utilities in the ideally fair schedule $\vec{\psi}^{*}$ (computed by \textsc{Ref}). Let $p_{tot}$ denote the total number of the unit-size parts of the jobs completed in the fair schedule returned by \textsc{Ref}, $p_{tot} = \sum_{(s, p) \in \sigma^{*}: s \leq t_{end}}\min(p, t_{end}-s)$. We calculated the difference $\Delta \psi = \|\vec{\psi} - \vec{\psi}^{*}\| = \sum_{O^{(u)}}(\psi^{(u)} - \psi^{(u), *})$ and compared the values $\Delta \psi / p_{tot}$ for different algorithms. The value $\Delta \psi / p_{tot}$ is the measure of the fairness that has an intuitive interpretation. Since delaying each unit-size part of a job by one time moment decreases the utility of the job owner by one, the value $\Delta \psi / p_{tot}$ gives the average unjustified delay (or unjustified speed-up) of a job due to the unfairness of the algorithm.

\subsection{Results}
We start with experiments on short sub-traces of the original workloads. We randomly selected the start time of the experiment $t_{start}$ and set the end time to $t_{end} = t_{start}+ 5 \cdot 10^4$. For each workload we run 100 experiments (on different periods of workloads of length $5 \cdot 10^4$). The average values of $\Delta \psi / p_{tot}$, and the standard deviations are presented in Table~\ref{tab::differentWorkloads}.

\begin{table}[t]
\centering
\caption{The average delay (or the speed up) of jobs due to the unfairness of the algorithm $\Delta \psi / p_{tot}$ for different algorithms and different workloads. Each row is an average over 100 instances taken as parts of the original workload. The duration of the experiment is $5 \cdot 10^4$.}\label{tab::differentWorkloads}
\footnotesize
\begin{tabular}{c|c|c||c|c||c|c||c|c|}
\cline{2-9}
& \multicolumn{2}{ |c|| }{LPC-EGEE} & \multicolumn{2}{ |c|| }{PIK-IPLEX}  &  \multicolumn{2}{ |c|| }{SHARCNET-Whale} & \multicolumn{2}{ |c| }{RICC}\\ \cline{2-9}
& Avg & St. dev. & Avg & St. dev. & Avg & St. dev.  & Avg & St. dev. \\ \cline{1-9}
\multicolumn{1}{ |l| }{\textsc{RoundRobin}} & 238 & 353 & 6 & 33 & 145 & 38 & 2839 & 357   \\ \cline{1-9}
\multicolumn{1}{ |l| }{\textsc{Rand} ($N = 15$)} & 8 & 21 & 0.014 & 0.01 & 6 & 6 & 162 & 187   \\ \cline{1-9}
\multicolumn{1}{ |l| }{\textsc{DirectContr}} & 5 & 11 & 0.02 & 0.15 & 10 & 7 & 537 & 303  \\ \cline{1-9}
\multicolumn{1}{ |l| }{\textsc{FairShare}} & 16 & 25 & 0.3 & 1.38 & 13 & 8 & 626 & 309   \\ \cline{1-9}
\multicolumn{1}{ |l| }{\textsc{UtFairShare}} & 16 & 25 & 0.3 & 1.38 & 38 & 67 & 515 & 284   \\ \cline{1-9}
\multicolumn{1}{ |l| }{\textsc{CurrFairShare}} & 87 & 106 & 0.3 & 1.58 & 145 & 80 & 1231 & 243   \\ \cline{1-9}
\end{tabular}
\end{table}

From this part of the experiments we conclude that: (i) The algorithm \textsc{Rand} is the most fair algorithm regarding the fairness by the Shapley Value; but \textsc{Rand} is the second most computationally intensive algorithm (after \textsc{Ref}). (ii) All the other algorithms are about equally computationally efficient. The algorithm \textsc{DirectContr} is the most fair. (iii) The algorithm \textsc{FairShare}, which is the algorithm mostly used in real systems, is not much worse than \textsc{DirectContr}. (iv)  Arbitrary scheduling algorithms like \textsc{RoundRobin} may result in unfair schedules. (v) The fairness of the algorithms may depend on the workload. In RICC the differences are much more visible than in PIK-IPLEX. Thus, although \textsc{DirectContr} and \textsc{FairShare} are usually comparable, on some workloads the difference is important. 
%% We claim that this is the reason for using \textsc{DirectContr} instead of \textsc{FairShare} in scheduling.

\begin{table}[tb]
\centering
\caption{The average delay (or the speed up) of jobs due to the unfairness of the algorithm $\Delta \psi / p_{tot}$ for different algorithms and different workloads. Each row is an average over 100 instances taken as parts of the original workload. The duration of the experiment is $5 \cdot 10^5$.}\label{tab::differentWorkloads2}
\footnotesize
\begin{tabular}{c|c|c||c|c||c|c||c|c|}
\cline{2-9}
& \multicolumn{2}{ |c|| }{LPC-EGEE} & \multicolumn{2}{ |c|| }{PIK-IPLEX}  &  \multicolumn{2}{ |c|| }{SHARCNET-Whale} & \multicolumn{2}{ |c| }{RICC}\\ \cline{2-9}
& Avg & St. dev. & Avg & St. dev. & Avg & St. dev.  & Avg & St. dev. \\ \cline{1-9}
\multicolumn{1}{ |l| }{\textsc{RoundRobin}} & 4511 & 6257 & 242 & 1420 & 404 & 1221 & 10850 & 13773 \\ \cline{1-9}
\multicolumn{1}{ |l| }{\textsc{Rand} ($N = 15$)} & 562 & 1670 & 1.3 & 7 & 26 & 158 & 771 & 1479  \\ \cline{1-9}
\multicolumn{1}{ |l| }{\textsc{DirectContr}} & 410 & 1083 & 0.2 & 1.4 & 60 & 204 & 1808 & 3397  \\ \cline{1-9}
\multicolumn{1}{ |l| }{\textsc{FairShare}} & 575 & 1404 & 2.3 & 12 & 94 & 307 & 2746 & 4070   \\ \cline{1-9}
\multicolumn{1}{ |l| }{\textsc{UtFairShare}} & 888 & 2101 & 1.2 & 5 & 120 & 344 & 4963 & 6080   \\ \cline{1-9}
\multicolumn{1}{ |l| }{\textsc{CurrFairShare}} & 1082 & 2091 & 2.2 & 11 & 180 & 805 & 5387 & 9083   \\ \cline{1-9}
\end{tabular}
\end{table}

\begin{figure}[tb]
  \begin{center}
    \includegraphics[scale=0.65]{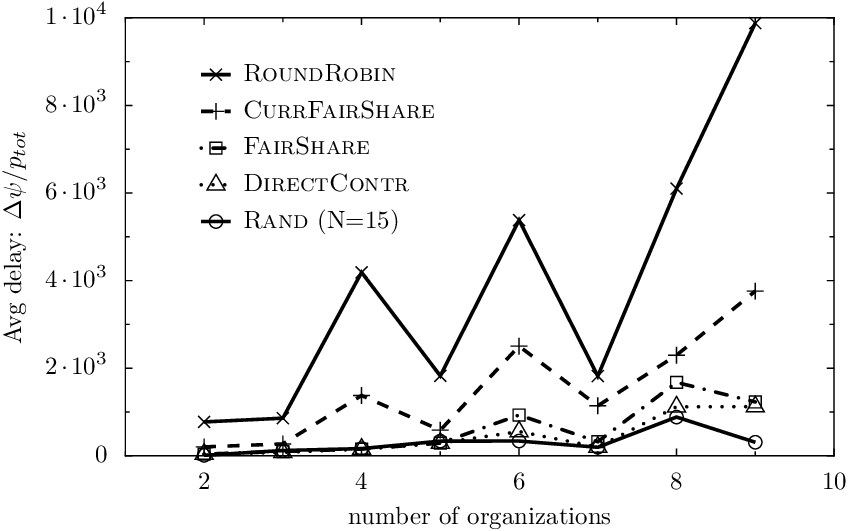}
  \end{center}
  \vspace{-0.5cm}
  \caption{The effect of the number of the organizations on ratio $\Delta \psi / p_{tot}$.}
  \label{fig:manyOrganizations}
\end{figure}

In the second series of experiments, we verified the effect of the duration of the simulated workload on the resulting fairness measure (the ratio $\Delta \psi / p_{tot}$). As we changed the duration of the experiments from $5 \cdot 10^4$ to $5 \cdot 10^5$, we observed that the unfairness ratio $\Delta \psi / p_{tot}$ was increasing. The value of the ratio for $t_{end} - t_{start} = 5 \cdot 10^5$ are presented in Table~\ref{tab::differentWorkloads2}.
The relative quality of the algorithms is the same as in the previous case. Thus, all our previous conclusions hold. However, now all the algorithms are significantly less fair than the exact algorithm. Thus, in long-running systems the difference between the approaches becomes more important.
If there are a few organizations, the exact \textsc{Ref} or the randomized \textsc{Rand} algorithms should be used. In larger systems, when the computational cost of these is too high, \textsc{DirectContr} clearly outperforms \textsc{FairShare}. 
%In case of smaller number of the organizations the referral or the randomized algorithms are a better choice.

Last, we verified the effect of the number of the organizations on the ratio $\Delta \psi / p_{tot}$. The results from the experiments conducted on LPC-EGEE data set are presented in Figure~\ref{fig:manyOrganizations}. As the number of organizations increases, the unfairness ratio $\Delta \psi / p_{tot}$ grows; thus the difference between the algorithms is more significant. This confirms our previous conclusions.
%The relative fairness of the algorithms is the same as in our previous experiments.

\section{Conclusions}
In this paper we define the fairness of the scheduling algorithm in terms of cooperative game theory which allows to quantify the impact of an organization on the utilities of others. We present a non-monetary model in which it is not required that each organization has accurate valuations of its jobs and resources. We show that classic utility functions may create incentives for workload manipulations. We thus propose a strategy resilient utility function that can be thought of as per-organization throughput. 

We analyze the complexity of the fair scheduling problem. The general problem is NP-hard and hard to approximate. Nevertheless, the problem parameterized with the number of organizations is FPT.
Also, the FPT algorithm can be used as a reference for comparing the fairness of different algorithms on small instances (dozens of organizations).
For a special case with unit-size jobs, we propose an FPRAS. In our experiments, we show that the FPRAS can be used as the heuristic algorithm; we also show another efficient heuristic. The main conclusion from the experiments is that in multi-organizational systems, the distributive fairness used by fair share does not result in truly-fair schedules; our heuristic better approximates the Shapley-fair schedules.

We, further, show that every greedy algorithm achieves at least $\frac{3}{4}$-times as good resource utilization as the optimal algorithm. Since this result holds even if the durations and the pattern of incoming jobs are unknown, and if we use arbitrary underlying scheduling policy, we show that the loss of resources utilization due to the fairness is upper-bounded by $25\%$.

Since we do not require the valuation of the jobs, and we consider an on-line, non-clairvoyant scheduling, we believe the presented results have practical consequences for real-life job schedulers.

There are many natural questions for the future work. Although our approach is applicable to the parallel jobs, and to scheduling on related and unrelated machines, we yet do not know the resulting loss of the global efficiency. Determining these bounds is an interesting open question. We suspect that in case of related and unrelated machines the loss of efficiency might be significant. In such case, the next natural question is too look for refinements of our algorithm that would allow to alleviate this problem. Another interesting direction is to explore other game-theoretic notions of fairness.

\bibliographystyle{abbrv}
\bibliography{main}

\begin{thebibliography}{10}

\bibitem{yfqScheduling}
J.~Bruno, J.~Brustoloni, E.~Gabber, B.~Ozden, and A.~Silberschatz.
\newblock Disk scheduling with quality of service guarantees.
\newblock In {\em Proceedings of ICMCS-1999}, page 400, 1999.

\bibitem{buyya05ge}
R.~Buyya, D.~Abramson, and S.~Venugopal.
\newblock The grid economy.
\newblock In {\em Special Issue on Grid Computing}, volume~93, pages 698--714,
  2005.

\bibitem{divisibleLoadSchedCoalitionalGames}
T.~E. Carroll and D.~Grosu.
\newblock Divisible load scheduling: An approach using coalitional games.
\newblock In {\em Proceedings of ISPDC-2007}, pages 36--36, 2007.

\bibitem{fairScheduling2003IEEE}
H.~M. Chaskar and U.~Madhow.
\newblock Fair scheduling with tunable latency: a round-robin approach.
\newblock {\em IEEE/ACM Transactions on Networking}, 11(4):592--601, 2003.

\bibitem{finiteCongestionGames}
G.~Christodoulou and E.~Koutsoupias.
\newblock The price of anarchy of finite congestion games.
\newblock In {\em Proceedings of STOC-2005}, pages 67--73, 2005.

\bibitem{Trystram2010ApproximationAlgorithms}
P.~Dutot, F.~Pascual, K.~Rzadca, and D.~Trystram.
\newblock Approximation algorithms for the multi-organization scheduling
  problem.
\newblock {\em IEEE Transactions on Parallel and Distributed Systems},
  22:1888--1895, 2011.

\bibitem{Ehrgott00approximationalgorithms}
M.~Ehrgott.
\newblock Approximation algorithms for combinatorial multicriteria optimization
  problems.
\newblock {\em International Transactions in Operational Research}, 7:2000,
  2000.

\bibitem{BorjaCite21}
D.~G. Feitelson.
\newblock Parallel workloads archive.
\newblock http://www.cs.huji.ac.il/labs/parallel/workload/.

\bibitem{10.1109/TPDS.2007.34}
P.~Ghosh, K.~Basu, and S.~K. Das.
\newblock A game theory-based pricing strategy to support single/multiclass job
  allocation schemes for bandwidth-constrained distributed computing systems.
\newblock {\em IEEE Transactions on Parallel and Distributed Systems},
  18(3):289--306, 2007.

\bibitem{Ghosh20051366}
P.~Ghosh, N.~Roy, S.~K. Das, and K.~Basu.
\newblock A pricing strategy for job allocation in mobile grids using a
  non-cooperative bargaining theory framework.
\newblock {\em Journal of Parallel and Distributed Computing},
  65(11):1366--1383, 2005.

\bibitem{simpleSfqScheduling}
P.~Goyal, H.~M. Vin, and H.~Chen.
\newblock Start-time fair queueing: a scheduling algorithm for integrated
  services packet switching networks.
\newblock In {\em Proceedings of SIGCOMM-1996}, pages 157--168, 1996.

\bibitem{Grosu02agame-theoretic}
D.~Grosu and A.~T. Chronopoulos.
\newblock A game-theoretic model and algorithm for load balancing in
  distributed systems.
\newblock In {\em Proceedings of IPDPS-2002}, pages 146--153, 2002.

\bibitem{Grosu:2003:TMF:824470.825337}
D.~Grosu and A.~T. Chronopoulos.
\newblock A truthful mechanism for fair load balancing in distributed systems.
\newblock In {\em Proceedings of NCA-2003}, pages 289--289, 2003.

\bibitem{Grosu04auction-basedresource}
D.~Grosu and A.~Das.
\newblock Auction-based resource allocation protocols in grids.
\newblock In {\em Proceedings of PDCS-2004}, pages 20--27, 2004.

\bibitem{Gulati2008}
A.~Gulati and I.~Ahmad.
\newblock Towards distributed storage resource management using flow control.
\newblock {\em ACM SIGOPS Operating Systems Review}, 42(6):10--16, 2008.

\bibitem{parda2009Fast}
A.~Gulati, I.~Ahmad, and C.~A. Waldspurger.
\newblock {PARDA: Proportional Allocation of Resources for Distributed Storage
  Access}.
\newblock In {\em Proceedings of FAST-2009}, February 2009.

\bibitem{Hasan:2005:SPS:1058431.1059197}
R.~Hasan, Z.~Anwar, W.~Yurcik, L.~Brumbaugh, and R.~Campbell.
\newblock A survey of peer-to-peer storage techniques for distributed file
  systems.
\newblock In {\em Proceedings of ITCC-2005}, pages 205--213, 2005.

\bibitem{multiDimensionalScheduling}
L.~Huang, G.~Peng, and T.~Chiueh.
\newblock Multi-dimensional storage virtualization.
\newblock {\em ACM SIGMETRICS Performance Evaluation Review}, 32(1):14--24,
  2004.

\bibitem{Inoie04paretoset}
A.~Inoie, H.~Kameda, and C.~Touati.
\newblock Pareto set, fairness, and nash equilibrium: A case study on load
  balancing.
\newblock In {\em Proceedings of ISDG-2004}, pages 386--393, 2004.

\bibitem{Jackson:2001:CAM:646382.689682}
D.~B. Jackson, Q.~Snell, and M.~J. Clement.
\newblock Core algorithms of the maui scheduler.
\newblock In {\em Proceedings of JSSPP-2001}, pages 87--102, 2001.

\bibitem{fsfqScheduling}
W.~Jin, J.~S. Chase, and J.~Kaur.
\newblock Interposed proportional sharing for a storage service utility.
\newblock {\em ACM SIGMETRICS Performance Evaluation Review}, 32(1):37--48,
  2004.

\bibitem{Triage2005ACM}
M.~Karlsson, C.~Karamanolis, and X.~Zhu.
\newblock Triage: Performance differentiation for storage systems using
  adaptive control.
\newblock {\em ACM Transactions on Storage}, 1(4):457--480, 2005.

\bibitem{kay1988fair}
J.~Kay and P.~Lauder.
\newblock A fair share scheduler.
\newblock {\em Communications of the ACM}, 31(1):44--55, 1988.

\bibitem{kenyon04gr}
C.~Kenyon and G.~Cheliotis.
\newblock Grid resource commercialization: economic engineering and delivery
  scenarios.
\newblock In J.~Nabrzyski, J.~M. Schopf, and J.~Weglarz, editors, {\em Grid
  resource management: state of the art and future trends}, pages 465--478.
  Kluwer Academic Publishers, Norwell, MA, USA, 2004.

\bibitem{journals/eor/KostrevaOW04}
M.~M. Kostreva, W.~Ogryczak, and A.~Wierzbicki.
\newblock Equitable aggregations and multiple criteria analysis.
\newblock {\em European Journal of Operational Research}, 158(2):362--377,
  2004.

\bibitem{lee2006user}
C.~B. Lee and A.~Snavely.
\newblock On the user--scheduler dialogue: studies of user-provided runtime
  estimates and utility functions.
\newblock {\em International Journal of High Performance Computing
  Applications}, 20(4):495--506, 2006.

\bibitem{conf/cocoon/Liben-NowellSWW12}
D.~Liben-Nowell, A.~Sharp, T.~Wexler, and K.~Woods.
\newblock Computing shapley value in supermodular coalitional games.
\newblock In {\em Proceedings of COCOON-2012}, volume 7434, pages 568--579,
  2012.

\bibitem{Mashayekhy:2011:MMD:2358951.2359134}
L.~Mashayekhy and D.~Grosu.
\newblock A merge-and-split mechanism for dynamic virtual organization
  formation in grids.
\newblock In {\em Proceedings of PCCC-2011}, PCCC '11, pages 1--8, 2011.

\bibitem{costSharingInJobSched}
D.~Mishra and B.~Rangarajan.
\newblock Cost sharing in a job scheduling problem using the shapley value.
\newblock In {\em Proceedings of ACM-EC-2005}, pages 232--239, 2005.

\bibitem{Moulin:2007:SFP:1527888.1527890}
H.~Moulin.
\newblock On scheduling fees to prevent merging, splitting, and transferring of
  jobs.
\newblock {\em Mathematics of Operations Research}, 32(2):266--283, May 2007.

\bibitem{vocking-selfishloadbalancing}
N.~Nisan, T.~Roughgarden, E.~Tardos, and V.~V. Vazirani.
\newblock {\em Algorithmic Game Theory}, chapter Selfish Load Balancing.
\newblock Cambridge University Press, 2007.

\bibitem{routingGames}
N.~Nisan, T.~Roughgarden, E.~Tardos, and V.~V. Vazirani.
\newblock {\em Algorithmic Game Theory}, chapter Routing Games.
\newblock Cambridge University Press, 2007.

\bibitem{RePEc:mtp:titles:0262650401}
J.~M. Osborne and A.~Rubinstein.
\newblock {\em A Course in Game Theory}, volume~1 of {\em MIT Press Books}.
\newblock The MIT Press, 1994.

\bibitem{Penmasta:2006:PUJ:1898699.1898880}
S.~Penmasta and A.~T. Chronopoulos.
\newblock Price-based user-optimal job allocation scheme for grid systems.
\newblock In {\em Proceedings of IPDPS-2006}, pages 336--336, 2006.

\bibitem{Rahman:2011:DSA:2018436.2018458}
R.~Rahman, T.~Vink\'{o}, D.~Hales, J.~Pouwelse, and H.~Sips.
\newblock Design space analysis for modeling incentives in distributed systems.
\newblock In {\em Proceedings of SIGCOMM-2011}, pages 182--193, 2011.

\bibitem{linearCongestionGames}
A.~Roth.
\newblock The price of malice in linear congestion games.
\newblock In {\em Proceedings of WINE-2008}, pages 118--125, 2008.

\bibitem{Rzadca2007FairGameTheoreticResource}
K.~Rzadca, D.~Trystram, and A.~Wierzbicki.
\newblock Fair game-theoretic resource management in dedicated grids.
\newblock In {\em Proceedings of CCGRID-2007}, 2007.

\bibitem{shapley_53}
L.~S. Shapley.
\newblock {A value for n-person games}.
\newblock {\em Contributions to the theory of games}, 2:307--317, 1953.

\bibitem{citeulike:1087789}
K.~D. Vohs, N.~L. Mead, and M.~R. Goode.
\newblock {The Psychological Consequences of Money}.
\newblock {\em Science}, 314(5802):1154--1156, Nov. 2006.

\bibitem{proportionalScheduling2007Fast}
Y.~Wang and A.~Merchant.
\newblock Proportional-share scheduling for distributed storage systems.
\newblock In {\em Proceedings of FAST-2007}, pages 4--4, 2007.

\bibitem{adaptiveOverload2003USENIX}
M.~Welsh and D.~Culler.
\newblock Adaptive overload control for busy internet servers.
\newblock In {\em Proceedings of USITS-2003}, pages 4--4, 2003.

\bibitem{Wierman}
A.~Wierman.
\newblock {Fairness and scheduling in single server queues}.
\newblock {\em Surveys in Operations Research and Management Science},
  16:39--48, 2011.

\bibitem{aumann_shapley}
E.~Winter.
\newblock {The Shapley value}.
\newblock In R.~Aumann and S.~Hart, editors, {\em Handbook of Game Theory with
  Economic Applications}, volume~3, pages 2025--2054. 2002.

\bibitem{Yaiche:2000:GTF:355154.355166}
H.~Ya\"{\i}che, R.~R. Mazumdar, and C.~Rosenberg.
\newblock A game theoretic framework for bandwidth allocation and pricing in
  broadband networks.
\newblock {\em IEEE/ACM Transactions on Networking}, 8(5):667--678, Oct. 2000.

\bibitem{2levelIOScheduling}
J.~Zhang, A.~Sivasubramaniam, A.~Riska, Q.~Wang, and E.~Riedel.
\newblock An interposed 2-level i/o scheduling framework for performance
  virtualization.
\newblock In {\em Proceedings of SIGMETRICS-2005}, pages 406--407, 2005.

\bibitem{generalAboutControlTheory}
X.~Zhu, M.~Uysal, Z.~Wang, S.~Singhal, A.~Merchant, P.~Padala, and K.~Shin.
\newblock What does control theory bring to systems research?
\newblock volume~43, pages 62--69, 2009.

\end{thebibliography}

\end{document}